\newtheorem{theorem}{Theorem}
\newtheorem{lemma}[theorem]{Lemma}
\newtheorem{corollary}[theorem]{Corollary}
\theoremstyle{definition}
\newtheorem{remark}[theorem]{Remark}
\numberwithin{equation}{section}
\numberwithin{theorem}{section}
\newenvironment{OMabstract}{\noindent\textbf{Abstract.} }{\medskip}
\newenvironment{OMsubjclass}{\noindent\textbf{Mathematics Subject Classification (2020):} }{\medskip}
\newenvironment{OMkeywords}{\noindent\textbf{Keywords:}  }{\medskip}
\begin{document}

\author{Anna G\l\'{o}wczyk and Sergiusz Ku\.{z}el} 
\title{On the $S$-matrix of Schr\"{o}dinger operator with nonlocal $\delta$-interaction}
\maketitle

\begin{OMabstract}
    Schr\"{o}dinger operators with nonlocal $\delta$-interaction are studied with the use of the Lax-Phillips scattering theory methods. 
    The condition of applicability of the Lax-Phillips approach in terms of non-cyclic functions is established.
    Two formulas for the $S$-matrix are obtained. The first one  deals with the Krein-Naimark resolvent formula  and the Weyl-Titchmarsh function,
whereas the second one is based on modified reflection and transmission coefficients. The $S$-matrix $S(z)$ is analytical in the lower half-plane $\mathbb{C_-}$
when the Schr\"{o}dinger operator with nonlocal $\delta$-interaction is positive self-adjoint. Otherwise, $S(z)$ is a meromorphic matrix-valued function 
in $\mathbb{C_-}$ and its properties are closely related to the properties of the corresponding 
 Schr\"{o}dinger operator. Examples of $S$-matrices are given.   
\end{OMabstract}

\begin{OMkeywords}
     Lax-Phillips scattering scheme,  scattering matrix,  $S$-matrix, nonlocal $\delta$-interaction, non-cyclic function
\end{OMkeywords}

\begin{OMsubjclass}
     47B25, 47A40.
\end{OMsubjclass}

 \section{Introduction}\label{intro}
Theory of non self-adjoint operators  attracts a steady interests in various fields of mathematics and physics,
see, e.g., \cite{BGSZ} and the reference therein.  This interest grew considerably due to the recent progress in
theoretical physics of  pseudo-Hermitian Hamiltonians \cite{Bender}.

In the present paper we study non-self-adjoint
Schr\"{o}dinger operators with \emph{nonlocal}  point
interaction.  Self-adjoint operators have been investigated by Nizhnik et al. \cite{Nizhnik, Nizhnik1, Nizhnik2, Nizhnik1b}.
The case of non-self-adjoint operators with nonlocal point interaction is more complicated and it requires more detailed analysis. 
One of the simplest models of  a non-local $\delta$-interaction is
\begin{equation}\label{new1}
-\frac{d^2}{dx^2}+a<\delta,\cdot>\delta(x)+<\delta,\cdot>q(x)+(\cdot, q)\delta(x)
\quad a\in\mathbb{C},
\end{equation}
where $\delta$ is the delta-function, $q\in{L_2(\mathbb{R})}$, and
$(\cdot,\cdot)$ is the inner product (linear in the first
argument) in $L_2(\mathbb{R})$.
The expression \eqref{new1} determines the following
operator acting in  $L_2(\mathbb{R})$:
\begin{equation}\label{AGH1}
H_{aq}f=-\frac{d^2f}{dx^2}+f(0)q(x),
\end{equation}
\begin{equation}\label{ggg1}
 \mathcal{D}(H_{aq})=\left\{f\in{W}_2^2(\mathbb{R}\backslash\{0\}) \ : \
 \begin{array}{l}
 f_s(0)=0   \\
 f_s'(0)=af_r(0)+(f, q) \end{array} \right\}
\end{equation}
where $f_s(0)=f(0+)-f(0-)$ and $f_r(0)=\displaystyle{\frac{f(0+)+f(0-)}{2}}.$
 
The operator $H_{aq}$ is self-adjoint if and only if $a\in\mathbb{R}$ and 
it can be interpreted as a Hamiltonian corresponding to the
non-local $\delta$-interaction \eqref{new1}. Setting  $q=0$,  we obtain
an operator $H_{a}:=H_{a0}$ generated by the ordinary $\delta$-interaction 
 $$
-\frac{d^2}{dx^2}+a<\delta,\cdot>\delta(x).
$$

The spectral analysis of non-self-adjoint $H_{aq}$ \ ($a\in\mathbb{C}\setminus\mathbb{R}$) was carried out in \cite{KZ}.
One of interesting features is  that non-real  $a$ determines the measure of non-self-adjointness of 
 $H_{aq}$, while the function $q$ is responsible for the appearance of 
 exceptional points and eigenvalues on continuous
spectrum \cite[Example 5.3 and Sec. 6]{KZ}.

In the present paper, we investigate $H_{aq}$ by the scattering theory methods.  
For the case $a=0$, the scattering matrix $S(\delta)$ of $H_{0q}$ was constructed in 
\cite[Sec. 5]{Nizhnik} with the use of modified Jost solutions. 
In contrast to \cite{Nizhnik}  we study the general case $a\in\mathbb{C}$
with the use of an operator-theoretical interpretation of the Lax-Phillips approach in scattering theory \cite{LF} that was
consistently developed in  \cite{CK, KU1, AlAn, KU2}.  We prefer this approach because it involves a simple  
algorithm for an explicit calculation of the analytic continuation\footnote{`The most beautiful and important aspect of the Lax-Phillips approach is
that certain analyticity properties of the scattering operator arise naturally' \cite[p.211]{RS3}} of the scattering matrix into the lower half-plane $\mathbb{C}_-$. 

The paper is organized as follows. We begin with presentation of necessary facts about the Lax-Phillips scattering theory. Further,
in Sec. \ref{sec.3},  we analyze for which operators $H_{aq}$ one can apply  the Lax-Phillips approach.
For technical reasons it is convenient to work with unitary equivalent copies  ${\bf H}_{a{\bf q}}$  of the operators $H_{aq}$ 
acting in the Hilbert space  $L_2(\mathbb{R}_+, \mathbb{C}^2)$, see \eqref{newww6a}, \eqref{newww6}.
The main result (Theorem \ref{yyyy}) implies that  ${\bf H}_{a{\bf q}}$ can be investigated in framework of the
 Lax-Phillips theory under the condition that ${\bf q}$ is non-cyclic with respect to the backward shift operator.
For such kind of positive self-adjoint operators ${\bf H}_{a{\bf q}}$, two formulas of the analytical continuation
$S(z)$ of the scattering matrix $S(\delta)$ into $\mathbb{C_-}$ are obtained in Sec. \ref{sec.4}.
The first one  \eqref{ups4}  deals with the Krein-Naimark resolvent formula \eqref{DDD4} and the Weyl-Titchmarsh function \eqref{DDD3},
whereas the second one  \eqref{AGHHH}  is based on the modified reflection $R_z^i$ and the transmission $T_z^i$ coefficients
that is more familiar for non-stationary scattering theory.  

We mention that the relationship between
scattering matrices and the extension theory subjects like
Krein-Naimark formula and Weyl-Titchmarsh function
was established for various cases \cite{AP,  BMN, CKS} and it provides additional possibilities for the study of scattering systems. 

In Sec \ref{sec.5}, the formula \eqref{ups4} is used for the definition of  $S$-matrix $S(z)$ for each operator ${\bf H}_{a{\bf q}}$
(assuming, of course, that ${\bf q}$ is non-cyclic).  If ${\bf H}_{a{\bf q}}$ is positive self-adjoint, then
the $S$-matrix is the direct consequence of proper arguments of the Lax-Phillips theory and it coincides with the analytical continuation of the Lax-Phillips scattering matrix
into $\mathbb{C}_-$.  Otherwise,  $S(z)$  defined by \eqref{ups4} is a meromorphic matrix-valued function in $\mathbb{C}_-$ 
and it can be considered as a characteristic function of  ${\bf H}_{a{\bf q}}$.  
Lemmas \ref{nnn1}-\ref{AGH789} and Corollary \ref{nnn125} 
justify such a point of view by showing a close relationship between properties of non-self-adjoint ${\bf H}_{a{\bf q}}$
 and theirs $S$-matrices. Examples of $S$-matrices for various non-cyclic ${\bf q}$ are given in Sec. \ref{sec.5.1}. 

Throughout the paper,  $\mathcal{D}(H)$, $\mathcal{R}(H)$, and $\ker{H}$ denote the
domain, the range, and the null-space of a linear operator $H$, respectively,
whereas $H\upharpoonright_{\mathcal{D}}$ stands for the restriction of
$H$ to the set $\mathcal{D}$ and $\bigvee_{t\in\mathbb{R}}X_t$ means the closure of linear span of sets $X_t$. 
The symbol  $H^2(\mathbb{C}_+)$, where $\mathbb{C}_+=\{z\in\mathbb{C} : Im\ z >0 \}$  is used for  the Hardy space.  
The  Sobolev space is denoted as $W_2^p(I)$ ($I\in\{\mathbb{R}, \mathbb{R}_+\}$, $p\in\{1, 2\}$).
 
\section{Elements of Lax-Phillips scattering theory}\label{sec.2}
Here all necessary results about the Lax-Phillips scattering theory are presented.
The  monographs \cite{LF},  \cite[Chap. III]{KK} and the papers  \cite{KU1, KU2}  are recommended as complementary reading on the subject.
\subsection{Applicability of the Lax-Phillips scattering approach}
A  continuous group of unitary operators $W(t)$ acting in a Hilbert space $\mathfrak{W}$
is a subject of the Lax-Phillips scattering theory \cite{LF} 
if there exist so-called \emph{incoming} $D_-$ and \emph{outgoing} $D_+$ subspaces of $\mathfrak{W}$ with properties:
$$
\begin{array}{l}
(i) \quad W(t)D_+\subset{D_+}, \qquad W(-t)D_-\subset{D_-}, \quad t\geq{0}; \vspace{3mm} \\
(ii) \quad \bigcap_{t>0}W(t)D_+=\bigcap_{t>0}W(-t)D_-=\{0\}.
\end{array}
$$

Conditions $(i)-(ii)$ allow to construct incoming and outgoing 
spectral representations  for the restrictions of $W(t)$ onto
the subspaces
\begin{equation}\label{AGH21}
M_-={\bigvee_{t\in{\mathbb R}}W(t)D_{-}} \quad \mbox{and} \quad
M_+={\bigvee_{t\in{\mathbb R}}W(t)D_{+}},
\end{equation}
respectively  and define the corresponding Lax--Phillips scattering matrix ${S}(\delta)$ \
$(\delta\in{\mathbb R})$ whose values are contraction operators \cite{Arov},  \cite[Chap. 3]{KK}.
Furthermore, the additional  condition of orthogonality 
$$
 (iii) \quad  D_-\perp{D_+}
$$ 
guarantees that ${S}(\delta)$  is the boundary value of a contracting 
 operator-valued function ${S}(z)$ holomorphic in the lower half-plane $\mathbb{C}_-$ \cite[p. 52]{LF}.

Usually, the Lax-Phillips scattering matrix is defined with the use of an operator-differential equation
\begin{equation}\label{bonn60}
\frac{d^2}{dt^2}u=-Hu,
\end{equation}
where $H$ is a positive\footnote{i.e. $(Hf,f)>0$ for nonzero $f\in\mathcal{D}(H)$} 
self-adjoint operator in a Hilbert space $\mathfrak{H}$.
Denote by  ${\mathfrak H}_{H}$ the completion of 
$\mathcal{D}(H)$ with respect to the norm $\|{\cdot}\|_{H}^2:=(H{\cdot},\cdot)$.

The Cauchy problem for  \eqref{bonn60} determines a continuous group of unitary operators $W(t)$ 
in the space 
$$
\mathfrak{W}={\mathfrak H}_{H}\oplus{\mathfrak H}=\left\{\left[\begin{array}{c}
u \\
v 
\end{array}\right] \ : \   u\in\mathfrak{H}_H, \quad v\in\mathfrak{H}\right\}.
$$

If $H=-\Delta$ and $\mathfrak{H}=L_2(\mathbb{R}^n)$, then 
\eqref{bonn60} coincides with  the wave equation  $u_{tt}=\Delta{u}$ and
the corresponding subspaces $D_\pm$ constructed in  \cite{LF} 
possess the additional property  
\begin{equation}\label{e15}
 \quad JD_-=D_+,
\end{equation}
where $J$ is a self-adjoint and unitary operator in $\mathfrak{W}$ (so-called time-reversal operator):
\begin{equation}\label{AGH3}
J\left[\begin{array}{c}
u \\
v 
\end{array}\right]=\left[\begin{array}{c}
u \\
-v 
\end{array}\right].
\end{equation}
Relation \eqref{e15}  is a characteristic property of dynamics governed by  
 wave equations.

It is clear that, the existence of subspaces $D_\pm$ for $W(t)$  is determined by specific properties of $H$
in \eqref{bonn60}.  Before explaining which properties of $H$ are needed, we recall that
a symmetric operator $B$ is called \emph{simple}
 if its restriction on any nontrivial reducing subspace is not a
 self-adjoint operator. The maximality of  $B$ means that
 there are no symmetric extensions of $B$. The latter is equivalent to the fact that one of defect numbers of $B$ is equal to zero.
In what follows, without loss of generality, we assume that $B$ has  \emph{zero defect number} in $\mathbb{C}_+$, i.e.,
 $\dim\ker(B^*-{i}I)=0$, where $B^*$ is the adjoint of $B$. The latter means that
 \begin{equation}\label{GGG2}
  \ker({B^*}^2-\mu^2{I})=\ker({B^*}-\mu{I}), \qquad \mu\in\mathbb{C}_-.
 \end{equation} 
\begin{theorem}\cite{KU2, KK}\label{new31b}  	
Let $H$ be a positive self-adjoint operator  in a Hilbert space ${\mathfrak H}$. 	
The following are equivalent:
\begin{enumerate}
\item[(i)]  the group $W(t)$ of solutions of the Cauchy problem of \eqref{bonn60} has subspaces $D_\pm$
with properties $(i)-(iii)$ and \eqref{e15};
\item[(ii)] there exists a simple maximal symmetric operator $B$ acting in a subspace ${\mathfrak
H}_0$ of ${\mathfrak H}$ such that $H$ is an extension
(with exit in the space ${\mathfrak H}$) of the symmetric operator $B^2$.
\end{enumerate}
\end{theorem}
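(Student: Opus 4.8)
The plan is to prove the two implications separately, working throughout with the first–order reformulation of \eqref{bonn60}: the generator of the group $W(t)$ on $\mathfrak W=\mathfrak H_H\oplus\mathfrak H$ is the operator $L$ acting by $L(u,v)=(v,-Hu)$, so that $W(t)=e^{tL}$, $L^2=(-H)\oplus(-H)$, and the time–reversal operator \eqref{AGH3} satisfies $JW(t)J=W(-t)$, equivalently $JL=-LJ$.

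For $(ii)\Rightarrow(i)$ I would first recall the standard description of a simple maximal symmetric operator $B$ in a subspace $\mathfrak H_0\subseteq\mathfrak H$ with zero defect in $\mathbb C_+$ (equivalently, with property \eqref{GGG2}): then $iB$ is dissipative and $\mathcal R(B+i\lambda I)=\mathfrak H_0$ for $\lambda>0$, so $iB$ generates a $C_0$–semigroup $V(t)=e^{itB}$ ($t\ge0$) of isometries which, by simplicity, is a pure unilateral shift; hence $\bigcap_{t>0}\mathcal R(V(t))=\{0\}$, and since $H$ is positive and extends $B^2$ also $\ker B=\{0\}$. I would then put
\[
D_\pm:=\overline{\{(u,\pm iBu):u\in\mathcal D(B^2)\}}\subset\mathfrak W .
\]
Because $V(t)$ commutes with $B^2$ on $\mathcal D(B^2)$ and $H$ extends $B^2$, the curve $u(t)=e^{itB}u_0$ solves $u''=-Hu$ for $u_0\in\mathcal D(B^2)$, so $W(t)(u_0,iBu_0)=(e^{itB}u_0,iBe^{itB}u_0)\in D_+$ for $t\ge0$; thus $W(t)D_+\subset D_+$, and applying $J$ and $JW(t)J=W(-t)$ gives $W(-t)D_-\subset D_-$, which is $(i)$. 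The pure–shift property yields $\bigcap_{t>0}W(t)D_+=\{0\}$ and, via $J$, the analogue for $D_-$, i.e. $(ii)$; the identity $JD_-=D_+$ is built in, i.e. \eqref{e15}; and for $u,w\in\mathcal D(B^2)$,
\[
\big((u,iBu),(w,-iBw)\big)_{\mathfrak W}=(Hu,w)_{\mathfrak H}-(Bu,Bw)_{\mathfrak H}=(B^2u,w)_{\mathfrak H}-(Bu,Bw)_{\mathfrak H}=0,
\]
so $D_-\perp D_+$, which is $(iii)$.

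For $(i)\Rightarrow(ii)$ the idea is to read off $B$ from the action of $W(t)$ on $D_+$. First I would record the norm identity linking the two slots of $\mathfrak W$ on $D_+$: for $(u,v)\in D_+$ one has $J(u,v)=(u,-v)\in JD_+=D_-$ by \eqref{e15}, and $D_-\perp D_+$ forces $0=\big((u,v),(u,-v)\big)_{\mathfrak W}=\|u\|_{\mathfrak H_H}^2-\|v\|_{\mathfrak H}^2$. Hence $P_2\colon(u,v)\mapsto v$ restricted to $D_+$ is $2^{-1/2}$ times a unitary $\Theta\colon D_+\to\mathfrak H_0$, where $\mathfrak H_0:=\overline{P_2D_+}\subset\mathfrak H$. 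Next, $W(t)|_{D_+}$ ($t\ge0$) is a $C_0$–semigroup of isometries on $D_+$ whose generator is the part $A$ of $L$ in $D_+$ (so $A(u,v)=(v,-Hu)$, and one checks $\mathrm{Re}\,(A\,\cdot\,,\cdot)=0$ using $(u,v)_{\mathfrak H_H}=(Hu,v)_{\mathfrak H}$); therefore $A=iB_0$ with $B_0$ maximal symmetric in $D_+$ (zero defect in $\mathbb C_+$). Property $(ii)$ says $\bigcap_{t>0}W(t)D_+=\{0\}$, i.e. the semigroup $e^{itB_0}=W(t)|_{D_+}$ has no unitary part, so $B_0$ — and hence $B:=\Theta B_0\Theta^{-1}$, acting in $\mathfrak H_0$ — is simple. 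Finally, to see that $H$ extends $B^2$: given $w\in\mathcal D(B^2)$ write $\Theta^{-1}w=(u,v)\in\mathcal D(A^2)$, so $v=2^{-1/2}w$; since $A$ is the part of $L$ in $D_+$ and $L^2=(-H)\oplus(-H)$, it follows that $v\in\mathcal D(H)$ and $A^2(u,v)=(-Hu,-Hv)\in D_+$, whence $B^2w=\Theta B_0^2\Theta^{-1}w=\Theta(-A^2(u,v))=\Theta(Hu,Hv)=\sqrt2\,Hv=Hw$. Thus $H$ is an extension, with exit in $\mathfrak H$, of the symmetric operator $B^2$.

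I expect the main obstacle to be not the algebra above but the functional–analytic bookkeeping around the energy space $\mathfrak H_H$, which is a completion of $\mathcal D(H)$ rather than a literal subspace of $\mathfrak H$: one must justify that the data $(u,\pm iBu)$ really lie in $\mathfrak W$, that $(u,w)_{\mathfrak H_H}=(Hu,w)_{\mathfrak H}$ for $u\in\mathcal D(H)$ and $w\in\mathfrak H_H$, and that passing to closures in the definition of $D_\pm$ is harmless — all of which hinge on $\ker B=\{0\}$, on $\|u\|_{\mathfrak H_H}=\|Bu\|_{\mathfrak H}$ for $u\in\mathcal D(B^2)$, and on $\mathcal D(B^2)$ being a core for $B$. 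A second delicate point is the exact sense of ``extension with exit'': beyond $B^2\subset H$ one wants the minimality implicit in the Lax--Phillips picture, namely that $\mathfrak H$ is no larger than forced, which is to be extracted from the generating property of $W(t)$ and the span properties of its orbits on $D_\pm$. The rest is the abstract counterpart of the classical facts ``wave equation $\Rightarrow$ d'Alembert splitting'' and ``incoming/outgoing subspaces $\Rightarrow$ underlying shift semigroup''.
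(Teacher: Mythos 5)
The paper does not actually prove Theorem \ref{new31b} --- it is quoted from \cite{KU2, KK} --- but your argument is correct and follows precisely the route those sources (and the paper's own formulas \eqref{e2b}, \eqref{e3b}) take: for $(ii)\Rightarrow(i)$ realize $D_\pm$ as the closures of $\left\{(u,\pm iBu):u\in\mathcal D(B^2)\right\}$, on which $W(t)$ acts through $V(t)=e^{iBt}$, and for $(i)\Rightarrow(ii)$ recover $B$ from the isometric semigroup $W(t)\upharpoonright_{D_+}$ transported to $\mathfrak H_0=P_2D_+$ by the unitary $\sqrt2\,P_2\upharpoonright_{D_+}$. The technical points you flag (energy-space bookkeeping, $\ker B=\{0\}$, density of $\mathcal D(B_0^2)$, and the Wold--Cooper identification of simplicity with the pure-shift property $\bigcap_{t>0}W(t)D_+=\{0\}$) are exactly the ones requiring care, and each goes through by standard arguments.
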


\subsection{The Lax-Phillips scattering matrix and its analytical continuation}\label{sec2.2} 
By Theorem \ref{new31b}, the unitary group $W(t)$
can be investigated by the Lax-Phillips scattering methods if and only if  $H$ is an extension of a symmetric operator $B^2$ acting in 
a subspace $\mathfrak{H}_0$ of $\mathfrak{H}$. 
A simple maximal symmetric operator $B$ in Theorem \ref{new31b} turns out to be a useful technical tool allowing one
to exhibit principal parts of the Lax-Phillips theory in a simple form. 
In particular, the subspaces $D_\pm$  coincide with the closure\footnote{in the space $\mathfrak{W}$} 
of the sets:
\begin{equation}\label{e2b}
\left\{\left[\begin{array}{c}
u \\ iBu
\end{array}\right] \ \left|\right. \ \forall{u}\in\mathcal{D}(B^2) \right\} \quad
\mbox{and} \quad
\left\{\left[\begin{array}{c}
u \\
-iBu
\end{array}\right] \ \left|\right. \ \forall{u}\in\mathcal{D}(B^2) \right\},
\end{equation}
respectively. Moreover, for all $t\geq{0}$,
\begin{equation}\label{e3b}
W(t)\left[\begin{array}{c}
u \\ iBu
\end{array}\right]=\left[\begin{array}{c}
V(t)u \\ iBV(t)u
\end{array}\right], \quad  W(-t)\left[\begin{array}{c}
u \\ -iBu
\end{array}\right]=\left[\begin{array}{c}
V(t)u \\ -iBV(t)u
\end{array}\right],  
\end{equation}
where $V(t)=e^{iBt}$ is a semigroup of isometric operators in $\mathfrak{H}_0$.

 The formulas \eqref{AGH21}, \eqref{e2b},  and \eqref{e3b}
allow one to construct the incoming/outgoing spectral representations for the restrictions of $W(t)$ onto $M_\pm$ 
in an explicit form \cite[Sec. 2.1]{Gawlik}. The latter leads to a simple method for the calculation of the
Lax-Phillips scattering matrix $S(\cdot)$ \cite{CK, AlAn}.  Actually, we need only a positive boundary triplet\footnote{see \cite[Chap 3]{GORB}
for definition of boundary triplets and positive boundary triplets}  $({\mathcal H}, \Gamma_{0},\Gamma_{1})$ of ${B^*}^2$ defined 
as follows: denote  ${\mathcal H}=\ker({B^*}^2+I)$, then $\mathcal{D}({B^*}^2)=\mathcal{D}(B^*B)\dot{+}{\mathcal H}$ and
each vector $f\in\mathcal{D}({B^*}^2)$  can be decomposed:
\begin{equation}\label{bonn41b}
f=u+h, \qquad u\in{\mathcal{D}({B^*}{B})}, \qquad h\in{\mathcal H}.
\end{equation}

The formula (\ref{bonn41b}) allows to define the linear mappings $\Gamma_{i} : \mathcal{D}({B^*}^2) \to {\mathcal H}$
 \begin{equation}\label{e7b}
 \Gamma_{0}f=\Gamma_{0}(u+h)=h,  \qquad \Gamma_{1}f=\Gamma_{1}(u+h)=P_{{\mathcal H}}(B^*B+I)u,
    \end{equation}
  where $P_{{\mathcal H}}$ is the orthogonal projector of ${\mathfrak H}_0$ onto the subspace ${\mathcal H}$.

\begin{theorem}[\cite{CK, AlAn}]\label{esse3}
If conditions of Theorem \ref{new31b} hold, then the Lax-Phillips scattering matrix 
$S(\cdot)$ for  the unitary group $W(t)$  of Cauchy problem solutions of \eqref{bonn60}
has the following analytical continuation into $\mathbb{C}_-$:
\begin{equation}\label{red1}
{S}(z)=[I-2(1+iz)C(z)][I-2(1-iz)C(z)]^{-1},  \qquad z\in\mathbb{C}_-,
\end{equation}
where the operators $C(z): {\mathcal H}\to{\mathcal H}$ are determined  by the relation
\begin{equation}\label{tttt}
C(z)\Gamma_{1}u=\Gamma_{0}u,  \qquad u{\in}P_{{\mathfrak H}_0}({H}-z^2{I})^{-1}\ker(B^*+\overline{z}I), \quad z\in\mathbb{C}_- .
\end{equation}
\end{theorem}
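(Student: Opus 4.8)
\medskip
\noindent\emph{Proof idea.}
The plan is to reconstruct the incoming and outgoing spectral representations of the group $W(t)$ from the explicit data \eqref{AGH21}, \eqref{e2b}, \eqref{e3b}, to recognise $S(\cdot)$ as the multiplier that intertwines them, and then to rewrite this multiplier in the resolvent form \eqref{red1}--\eqref{tttt} with the help of the boundary triplet $(\mathcal H,\Gamma_0,\Gamma_1)$. First I would use that, under the hypotheses inherited from Theorem \ref{new31b}, the operator $B$ is simple, maximal symmetric and has zero defect number in $\mathbb{C}_+$, so on $\mathfrak H_0$ it is unitarily equivalent to $i\,d/dx$ on $L_2(\mathbb{R}_+,\mathcal H)$ with zero boundary condition at the origin, $\mathcal H=\ker({B^*}^2+I)$ being the inner (channel) space and $V(t)=e^{iBt}$ acting as a one-sided shift. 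Substituting this model into \eqref{e2b}--\eqref{e3b} gives explicit translation representations of the restrictions $W(t)\upharpoonright_{M_\pm}$: in the incoming (resp. outgoing) one, $D_-$ (resp. $D_+$) is carried onto the $L_2$-space of a half-line and, after a Fourier transform, onto a Hardy class $H^2(\mathbb C_\pm,\mathcal H)$. By the remark following condition $(iii)$ in Section \ref{sec.2}, the orthogonality $D_-\perp D_+$ places these Hardy classes in complementary half-planes, so the multiplier $S(\delta)$ relating the two representations is the boundary value of a contraction-valued function $S(z)$ holomorphic in $\mathbb C_-$; it is this function whose formula has to be identified. These constructions are the ones of \cite[Sec. 2.1]{Gawlik} and \cite{CK, AlAn}.

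Next I would compute $S(z)$ at a fixed $z\in\mathbb C_-$ by testing it on scattered waves. Picking $\varphi\in\ker(B^*+\overline z I)$, relation \eqref{GGG2} gives $\varphi\in\ker({B^*}^2-z^2I)$, so $\varphi$ is an eigenfunction of ${B^*}^2$ which, in the incoming spectral representation, generates a purely incoming exponential of frequency $z$. Since $H\ge 0$ and $z^2\notin[0,\infty)$ for $z\in\mathbb C_-$, the resolvent $(H-z^2I)^{-1}$ is bounded, and the vector $u=P_{\mathfrak H_0}(H-z^2I)^{-1}\varphi$ of \eqref{tttt} is the corresponding full wave, whose outgoing component is, by the very definition of the scattering matrix, the image of the incoming datum under $S(z)$. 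The crucial point is that both the incoming and the outgoing asymptotics of $u$ are encoded in the single pair $(\Gamma_0u,\Gamma_1u)\in\mathcal H\times\mathcal H$ through the decomposition \eqref{bonn41b} and the formulas \eqref{e7b}: expanding $u$ near the boundary in the $\mu=\pm i$ branches of $\ker({B^*}^2+I)$ and re-expressing it in the $\pm z$ branches introduces precisely the scalar weights $1\pm iz$. Collecting these identities turns the relation ``outgoing datum of $u$ equals $S(z)$ applied to the incoming datum of $u$'' into the linear-fractional (Cayley-type) identity \eqref{red1} for the operator $C(z)$ given by $C(z)\Gamma_1u=\Gamma_0u$; and, since $C(z)$ is, up to this rescaling, the Weyl/$Q$-type function of the extension $H\supset B^2$, its imaginary-part properties (a consequence of $H=H^*$) also yield the invertibility of $I-2(1-iz)C(z)$ on $\mathcal H$, so that \eqref{red1} makes sense throughout $\mathbb C_-$.

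The hard part will be this last matching, for two reasons. First, $H$ is an extension of $B^2$ \emph{with exit} into $\mathfrak H\supset\mathfrak H_0$, so one must carry the projector $P_{\mathfrak H_0}$ carefully through the computation and verify that the triplet of ${B^*}^2$ still reads off the correct asymptotic data of $u$ --- this is where the structure of Theorem \ref{new31b}(ii) is genuinely used. Second, one has to check that the spectral-representation datum attached to $\varphi$ and the one attached to $S(z)\varphi$ are, in the right order and normalisation, precisely $\Gamma_0u$ and $\Gamma_1u$, and that passing from the fixed base point $\mu=i$ of the triplet to a general $z\in\mathbb C_-$ produces exactly the weights $1\pm iz$ in \eqref{red1} rather than some other M\"{o}bius factor. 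Once this dictionary between the Lax-Phillips data and the boundary triplet is set up, \eqref{red1} is an algebraic consequence and its holomorphy in $\mathbb C_-$ is inherited from the first step. The full details follow the scheme of \cite{CK, AlAn}; see also \cite{KU1, KU2} and \cite[Chap. III]{KK}.
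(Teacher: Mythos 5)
The paper does not actually prove Theorem \ref{esse3}: it is imported verbatim from \cite{CK, AlAn}, so there is no in-paper argument to compare yours against. Your outline follows the same route as those references, and it is consistent with the way the paper itself instantiates \eqref{tttt} later on: in the proof of Theorem \ref{ups222} one solves $({\bf H}_{a{\bf q}}-z^2I){\bf f}=(\overline z^{\,2}-z^2)\psi(\mathcal B){\bf h}_{-\overline z}$ with ${\bf h}_{-\overline z}$ spanning $\ker(B^*+\overline zI)$, observes that the projection of ${\bf f}$ onto $\mathfrak H_0$ splits as an incoming exponential plus an outgoing one, applies $\Gamma_0,\Gamma_1$, and reads off $S(z)$ --- exactly the ``scattered wave'' computation you describe in the abstract setting. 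So the strategy is the right one.

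As submitted, however, this is a programme rather than a proof: the decisive step --- showing that ``outgoing datum equals $S(z)$ applied to the incoming datum'' in the Lax--Phillips spectral representations is \emph{equivalent} to the linear-fractional identity \eqref{red1} with the specific weights $1\pm iz$ --- is described but not carried out, and you explicitly defer it. Two concrete points of caution for when you do carry it out. First, the identification is not the naive one: for $u=\alpha e^{i\overline z x}+\beta e^{-izx}$ the triplet of Corollary \ref{kkkk} gives $\Gamma_1u=2[(1+i\overline z)\alpha+(1-iz)\beta]e^{-x}$ (note the conjugate $\overline z$ on the incoming weight), and the action of $S(z)$ predicted by \eqref{red1}--\eqref{tttt} on the incoming coefficient is $-i\frac{\mathrm{Im}\,z}{\mathrm{Re}\,z}\,\alpha-\frac{z}{\mathrm{Re}\,z}\,\beta$ (cf.\ \eqref{ggg11d}), not simply $\beta$; the normalisation issue you flag at the end is therefore genuine and is where most of the actual content of the proof sits. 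Second, the invertibility of $I-2(1-iz)C(z)$ is asserted via ``imaginary-part properties'' of $C(z)$ but never established; since $C(z)$ is built from the compressed resolvent $P_{\mathfrak H_0}(H-z^2I)^{-1}$ of an extension \emph{with exit} from $\mathfrak H_0$, this requires a separate argument (it is supplied in \cite{AlAn}). In short: a correct and well-oriented sketch, but the theorem is not yet proved by it.
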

An investigation of $C(z)$ carried out in \cite{AlAn} shows that
the values of $S(z)$ are contraction operators in $\mathcal{H}$ and ${S}^*(z)=S(-\overline{z})$.

In what follows,  the analytical continuation \eqref{red1} of the  Lax-Phillips scattering matrix 
will be called the \emph{$S$-matrix} of the positive self-adjoint operator $H$ in \eqref{bonn60}. 
For this reason it is natural
to ask: \emph{to what extend the $S$-matrix determines  $H$?}

We recall that a self-adjoint operator $H$ is called  \emph{minimal} if 
each subspace of $\mathfrak{H}\ominus\mathfrak{H}_0$ that reduces $H$ is trivial.
Minimal self-adjoint extensions $H_1$  and $H_2$ of $B^2$ are called
\emph{unitary equivalent} if there exists an unitary operator $Z$ in $\mathfrak{H}$ 
such that $ZH_1=H_2Z$ and $Zf=f$ for all $f\in\mathfrak{H}_0$.

It follows from  \cite{AlAn} that the ${S}$-matrix  determines 
a minimal positive self-adjoint extension $H$ of $B^2$ up to unitary equivalence.

\begin{remark}\label{BBB2}
Various approaches in non-stationary scattering theory are based on  the comparing of two evolutions:  ``unperturbed''  and ``perturbed''.
The subspaces $D_\pm$ characterize  unperturbed evolution in the Lax-Phillips approach.
Due to \eqref{e2b},  the subspaces $D_\pm$  are described by the operator $B$. The operator $B^*B$ is a positive self-adjoint 
extension of $B^2$ in the space $\mathfrak{H}_0$ and the group $W_0(t)$ of solutions of the Cauchy problem of \eqref{bonn60}
(with $B^*B$ instead of $H$) determines an unperturbed evolution. The corresponding wave operators 
$\Omega_\pm=s-\lim_{t\to\pm\infty}W(-t)W_0(t)$ exist and are isometric in $\mathfrak{H}_0$. The scattering
operator $\Omega_+^{*}\Omega_-$ coincides with the Lax-Phillips scattering matrix $S(\delta)$ in the spectral
representation of the unperturbed evolution $W_0(t)$ \cite{AlAn}.
\end{remark}  

\section{Properties of operators ${\bf H}_{a{\bf q}}$}\label{sec.3}
\subsection{Preliminaries}\label{sec3.1}
For technical reasons it is convenient to calculate the $S$-matrix  
for unitary equivalent copy of the operator $H_{aq}$ 
 in the Hilbert space  $L_2(\mathbb{R}_+, \mathbb{C}^2)$.
To do that, for each function ${f}\in{L_2(\mathbb{R})}$, we define the operator\footnote{we will use the mathbf font for $\mathbb{C}^2$-valued 
functions of $L_2(\mathbb{R}_+, \mathbb{C}^2)$ in order to avoid confusion with functions from ${L}_2(\mathbb{R})$. In particular,
${\bf e}^{-i{\mu}x}\equiv\left[\begin{array}{c}
 e^{-i{\mu}x}  \\
e^{-i{\mu}x} 
\end{array}\right]$.} 
$$
Yf=\left[\begin{array}{c}
f(x)   \\
f(-x) 
\end{array}\right]={\bf f}(x), \qquad  x>{0}
$$
that maps isometrically $L_2(\mathbb{R})$ onto $L_2(\mathbb{R}_+, \mathbb{C}^2)$ 
and maps ${W}_2^2(\mathbb{R}\backslash\{0\})$ onto  ${W}_2^2(\mathbb{R}_+, \mathbb{C}^2)$.
For all  ${\bf f}=Yf$, $f\in{W}_2^2(\mathbb{R}\backslash\{0\})$  we denote
$[{\bf f}]_r=f_r(0)$ and  $[{\bf f}]_s=f_s(0)$. In other words,
\begin{equation}\label{GGG14}
[{\bf f}]_r=\frac{1}{2}\lim_{x\to+0}(f_1(x)+f_2(x)), \quad  [{\bf f}]_s=\lim_{x\to+0}(f_1(x)-f_2(x)), \quad {\bf f}=\left[\begin{array}{c}
f_1  \\
f_2 
\end{array}\right].
\end{equation}

It is easy to see that    $YH_{aq}={\bf H}_{a\sf{q}}{Y}$, 
where $H_{aq}$ is defined by \eqref{AGH1}, \eqref{ggg1} and the operator
\begin{equation}\label{newww6a}
{\bf H}_{a{\bf q}}{\bf f}=-\frac{d^2{\bf f}}{dx^2}+[{\bf f}]_r{\bf q}(x),  \qquad  {\bf q}=\left[\begin{array}{c}
q_1  \\
q_2
\end{array}\right]=Yq
\end{equation}
acts in $L_2(\mathbb{R}_+, \mathbb{C}^2)$ with domain of definition 
\begin{equation}\label{newww6}
\mathcal{D}({\bf H}_{a{\bf q}})=\{{\bf f}\in W_2^2(\mathbb{R}_+,\mathbb{C}^2): \
 [{\bf f}]_s=0,  \quad  [{\bf f'}]_r= {a}[{\bf f}]_r+({\bf f}, {\bf q})_{+} \},
\end{equation}
where $({\bf f}, {\bf q})_{+}=(Yf,Yq)_+=(f,q)$ is the scalar product in $L_2(\mathbb{R}_+, \mathbb{C}^2)$.

When $a\to\infty$, the formulas  \eqref{newww6a} and \eqref{newww6} determine  a positive self-adjoint operator 
in $L_2(\mathbb{R}_+, \mathbb{C}^2)$
$$
{\bf H}_{\infty}\equiv{\bf H}_{\infty{\bf q}}=-\frac{d^2}{dx^2}, \quad \mathcal{D}({\bf H}_{\infty})=\{{\bf f}\in W_2^2(\mathbb{R}_+,\mathbb{C}^2): \ {\bf f}(0)=0\}
$$
that does not depend on the choice of ${\bf q}$ and can be decomposed 
$$
{\bf H}_{\infty}{\bf f}=\left[\begin{array}{c}
H_\infty{f}_1  \\
H_\infty{f}_2 
\end{array}\right], \quad H_\infty=-\frac{d^2}{dx^2}, \quad \mathcal{D}({H}_{\infty})=\{{f}\in W_2^2(\mathbb{R}_+): \ {f}(0)=0\}.
$$

By analogy with \cite[Sec. 5]{KZ} (where the case of operators $H_{aq}$ has been studied)
we consider ${\bf H}_{a{\bf q}}$ and  ${\bf H}_\infty$  as restrictions of the maximal operator
$$
{\bf H}_{max}{\bf f}=-\frac{d^2{\bf f}}{dx^2}+[{\bf f}]_r{\bf q}(x), \qquad \mathcal{D}({\bf H}_{max})=\{{\bf f}\in W_2^2(\mathbb{R}_+,\mathbb{C}^2): \
[{\bf f}]_s=0 \}.
$$
onto the corresponding domain of definition. 

The maximal operator ${\bf H}_{max}$ has a boundary triplet $(\mathbb{C}, \Gamma_0, \Gamma_1)$,  where 
\begin{equation}\label{DDD5}
\Gamma_0{\bf f}=[{\bf f}]_r, \qquad \Gamma_1{\bf f}=2[{\bf f'}]_r-({\bf f}, {\bf q})_+,  \quad {\bf f}\in\mathcal{D}({\bf H}_{max})
\end{equation}
and the formulas \eqref{newww6a} and \eqref{newww6} are rewritten:
\begin{equation}\label{DDD1}
{\bf H}_{a{\bf q}}={\bf H}_{max}\upharpoonright_{\mathcal{D}({\bf H}_{a{\bf q}})}, \quad \mathcal{D}({\bf H}_{a{\bf q}})=\{{\bf f}\in\mathcal{D}({\bf H}_{max}) \ : \ 
a\Gamma_0{\bf f}=\Gamma_1{\bf f}\}.
\end{equation}
In particular,  ${\bf H}_\infty$ is the restriction of ${\bf H}_{max}$ onto $\ker\Gamma_0$ and
its resolvent is
\begin{equation}\label{newww2}
({\bf H}_\infty-z^2I)^{-1}{\bf f}=\frac{i}{2z}[{\bf A}_z(x)e^{-izx}+{\bf B}_z(x)e^{izx}], \qquad  {\bf f}\in{L_2(\mathbb{R}_+, \mathbb{C}^2)},   
\end{equation}
where $z\in\mathbb{C}_-$ and
$$
{\bf A}_z(x)=\int_0^\infty{e^{-izs}}{\bf f}(s)ds-\int_0^x{e^{izs}}{\bf f}(s)ds,  \quad {\bf B}_z(x)=-\int_x^\infty{e^{-izs}}{\bf f}(s)ds.
$$
\begin{lemma}\label{neww71}
The Krein-Naimark resolvent formula
\begin{equation}\label{DDD4}
({\bf H}_{a{\bf q}}-z^2{I})^{-1}{\bf f}=({\bf H}_\infty-z^2{I})^{-1}{\bf f}+\frac{({\bf f}, {\bf u}_{-\overline{z}})_{+}}{a-W(z^2)}{\bf u}_z(x)
\end{equation}
holds for $a\not=W(z^2)$. Here,
\begin{equation}\label{DDD}
{\bf u}_{\mu}(x)={\bf e}^{-i{\mu}x}-({\bf H}_\infty-\mu^2I)^{-1}{\bf q}, \qquad \ \mu\in\{z, -\overline{z}\}\subset\mathbb{C}_-
\end{equation}
is an eigenfunction of ${\bf H}_{max}$ corresponding to the eigenvalue $\mu^2$   and
\begin{equation}\label{DDD3}
W(z^2)=-2iz-2({\bf e}^{-izx}, Re \ {\bf q})_{+}+(({\bf H}_\infty-z^2I)^{-1}{\bf q}, {\bf q})_{+},
 \quad z\in\mathbb{C}_-.
\end{equation}
\end{lemma}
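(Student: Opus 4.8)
The plan is to verify directly that the function $R_z{\bf f}$ standing on the right-hand side of \eqref{DDD4} belongs to $\mathcal{D}({\bf H}_{a{\bf q}})$ and satisfies $({\bf H}_{a{\bf q}}-z^2I)R_z{\bf f}={\bf f}$; this is just the explicit form, in the boundary triplet \eqref{DDD5}, of the abstract Krein--Naimark resolvent formula for the extensions of the underlying symmetric operator that are parametrized by $a\Gamma_0=\Gamma_1$. I would first dispose of the auxiliary claim about ${\bf u}_\mu$. For $\mu\in\mathbb{C}_-$ the exponential ${\bf e}^{-i\mu x}$ is square-integrable, while $({\bf H}_\infty-\mu^2I)^{-1}{\bf q}\in\mathcal{D}({\bf H}_\infty)$ vanishes at $x=0$; hence ${\bf u}_\mu\in W_2^2(\mathbb{R}_+,\mathbb{C}^2)$, $[{\bf u}_\mu]_s=0$ and $[{\bf u}_\mu]_r=[{\bf e}^{-i\mu x}]_r=1$, so ${\bf u}_\mu\in\mathcal{D}({\bf H}_{max})$. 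Putting ${\bf v}=({\bf H}_\infty-\mu^2I)^{-1}{\bf q}$, so that $-{\bf v}''-\mu^2{\bf v}={\bf q}$, and using $-({\bf e}^{-i\mu x})''=\mu^2{\bf e}^{-i\mu x}$, one gets $-{\bf u}_\mu''=\mu^2{\bf u}_\mu-{\bf q}$, whence ${\bf H}_{max}{\bf u}_\mu=-{\bf u}_\mu''+[{\bf u}_\mu]_r{\bf q}=\mu^2{\bf u}_\mu$. (Since $z^2\in\rho({\bf H}_\infty)$ for every $z\in\mathbb{C}_-$, all resolvents above are well defined.)

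Next, $({\bf H}_{max}-z^2I)R_z{\bf f}={\bf f}$ is immediate: ${\bf H}_{max}$ agrees with ${\bf H}_\infty$ on $\mathcal{D}({\bf H}_\infty)$, so the resolvent summand reproduces ${\bf f}$, and $({\bf H}_{max}-z^2I){\bf u}_z=0$ by the eigenfunction property just established. It remains to check $R_z{\bf f}\in\mathcal{D}({\bf H}_{a{\bf q}})$, i.e. $a\Gamma_0R_z{\bf f}=\Gamma_1R_z{\bf f}$ (the requirement $[R_z{\bf f}]_s=0$ being automatic). Since $\Gamma_0$ annihilates $({\bf H}_\infty-z^2I)^{-1}{\bf f}$ and $\Gamma_0{\bf u}_z=1$, we have $\Gamma_0R_z{\bf f}=({\bf f},{\bf u}_{-\overline{z}})_{+}/(a-W(z^2))$, so everything reduces to the two identities
\[
\Gamma_1{\bf u}_z=W(z^2),\qquad\qquad \Gamma_1\big(({\bf H}_\infty-z^2I)^{-1}{\bf f}\big)=({\bf f},{\bf u}_{-\overline{z}})_{+}.
\]
Granting them, $\Gamma_1R_z{\bf f}=({\bf f},{\bf u}_{-\overline{z}})_{+}\big(1+W(z^2)/(a-W(z^2))\big)=a({\bf f},{\bf u}_{-\overline{z}})_{+}/(a-W(z^2))=a\Gamma_0R_z{\bf f}$, as required.

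Both identities I would read off from the explicit resolvent \eqref{newww2}, which tells us that for any ${\bf g}\in L_2(\mathbb{R}_+,\mathbb{C}^2)$ the function $({\bf H}_\infty-z^2I)^{-1}{\bf g}$ vanishes at $x=0$ while its derivative there equals $\int_0^\infty e^{-izs}{\bf g}(s)\,ds$, so that $\Gamma_0\big(({\bf H}_\infty-z^2I)^{-1}{\bf g}\big)=0$ and $2\big[\big(({\bf H}_\infty-z^2I)^{-1}{\bf g}\big)'\big]_r=({\bf g},{\bf e}^{i\overline{z}x})_{+}$, where ${\bf e}^{i\overline{z}x}={\bf e}^{-i(-\overline{z})x}$. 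Taking ${\bf g}={\bf q}$ in the definition \eqref{DDD5} of $\Gamma_1$ and using $({\bf q},{\bf e}^{i\overline{z}x})_{+}+({\bf e}^{-izx},{\bf q})_{+}=2({\bf e}^{-izx},Re\,{\bf q})_{+}$ reproduces exactly $W(z^2)$ as in \eqref{DDD3}; taking ${\bf g}={\bf f}$ and using $\big(({\bf H}_\infty-z^2I)^{-1}\big)^{*}=({\bf H}_\infty-\overline{z}^2I)^{-1}$ together with $(-\overline{z})^2=\overline{z}^2$ converts $\Gamma_1\big(({\bf H}_\infty-z^2I)^{-1}{\bf f}\big)$ into $\big({\bf f},\,{\bf e}^{i\overline{z}x}-({\bf H}_\infty-\overline{z}^2I)^{-1}{\bf q}\big)_{+}=({\bf f},{\bf u}_{-\overline{z}})_{+}$. (Abstractly, the second identity is the standard relation $\Gamma_1({\bf H}_\infty-\lambda I)^{-1}=\gamma(\overline{\lambda})^{*}$ between $\Gamma_1$ and the $\gamma$-field of the triplet.) Having thus shown that $({\bf H}_{a{\bf q}}-z^2I)$ is onto, I would finish by noting it is also injective: $\ker({\bf H}_{a{\bf q}}-z^2I)\subseteq\ker({\bf H}_{max}-z^2I)$, and a direct count of square-integrable solutions of $-{\bf h}''+[{\bf h}]_r{\bf q}=z^2{\bf h}$, $[{\bf h}]_s=0$, shows $\ker({\bf H}_{max}-z^2I)=\mathbb{C}\,{\bf u}_z$ for $z\in\mathbb{C}_-$; but ${\bf u}_z\in\mathcal{D}({\bf H}_{a{\bf q}})$ would force $a=\Gamma_1{\bf u}_z=W(z^2)$, which is excluded. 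Hence $z^2\in\rho({\bf H}_{a{\bf q}})$ and $R_z=({\bf H}_{a{\bf q}}-z^2I)^{-1}$, which is \eqref{DDD4}.

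The main obstacle is not conceptual but the careful bookkeeping in the two identities above: keeping straight which of $z,\overline{z},-\overline{z}$ lies in which half-plane so that all exponentials decay, and matching the correction term $-({\bf f},{\bf q})_{+}$ built into $\Gamma_1$ against the $Re\,{\bf q}$ that appears in \eqref{DDD3}. The safest route is to compute the boundary value and the $\mathbb{C}^2$-valued derivative of $({\bf H}_\infty-z^2I)^{-1}{\bf g}$ at $x=0$ once, directly from \eqref{newww2}, and then reuse the result for ${\bf g}={\bf q}$ and ${\bf g}={\bf f}$.
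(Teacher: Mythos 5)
Your proof is correct, and it reaches the same formula through the same two key computations as the paper, but packages them differently. The paper's proof simply cites the abstract Krein--Naimark resolvent formula for boundary triplets (Schm\"udgen, Theorem 14.18 and Proposition 14.14), which hands over the identity
$({\bf H}_{a{\bf q}}-z^2I)^{-1}{\bf f}=({\bf H}_\infty-z^2I)^{-1}{\bf f}+\frac{\Gamma_1{\bf u}}{a-W(z^2)}{\bf u}_z$
with ${\bf u}=({\bf H}_\infty-z^2I)^{-1}{\bf f}$, and then all that remains is exactly what you also compute from \eqref{newww2}: the value and derivative of the resolvent at $x=0$, giving $\Gamma_1{\bf u}_z=W(z^2)$ in the form \eqref{DDD3} and $\Gamma_1{\bf u}=({\bf f},{\bf u}_{-\overline z})_+$. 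You instead verify the formula from scratch: you check that the right-hand side lands in $\mathcal{D}({\bf H}_{a{\bf q}})$ via the boundary condition $a\Gamma_0=\Gamma_1$, that $({\bf H}_{max}-z^2I)$ maps it to ${\bf f}$, and that $({\bf H}_{a{\bf q}}-z^2I)$ is injective because $\ker({\bf H}_{max}-z^2I)=\mathbb{C}{\bf u}_z$ (which the paper takes from \cite{KZ}) and ${\bf u}_z\notin\mathcal{D}({\bf H}_{a{\bf q}})$ when $a\neq W(z^2)$. Your route is self-contained and in particular establishes along the way that $z^2\in\rho({\bf H}_{a{\bf q}})$ whenever $a\neq W(z^2)$, at the cost of redoing bookkeeping that the boundary-triplet machinery is designed to absorb; the paper's route is shorter but rests on the cited general theory. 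The bookkeeping you flag as the main hazard (the sign conventions in $\Gamma_1$, the appearance of $\mathrm{Re}\,{\bf q}$, and the passage from $z$ to $-\overline z$) checks out in your computation.
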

\begin{proof} 
It follows from \cite{KZ} that the subspace $\ker({\bf H}_{max}-\mu^2I)$  is one dimensional and it is generated by the function 
${\bf u}_{\mu}$ defined by \eqref{DDD}.  Setting $\mu=z$ and  using \eqref{DDD5}, we conclude that 
$\Gamma_0{\bf u}_{z}=1$ and the Weyl-Titchmarsh function associated to  the boundary triplet  $(\mathbb{C}, \Gamma_0,
\Gamma_1)$ takes the form  
$$
W(z^2)=\Gamma_1{\bf u}_z=-2iz-2[{\bf v'}]_r-({\bf e}^{-izx}+{\bf v}, {\bf q})_+,
$$
where ${\bf v}=({\bf H}_\infty-z^2I)^{-1}{\bf q}$.  In view of \eqref{newww2},
${\bf v'}(0)=\int_0^\infty{e^{-izs}}{\bf q}(s)ds$ and hence,
$$
2[{\bf v'}]_r+({\bf e}^{-izx}, {\bf q})_+=2({\bf e}^{-izx}, Re \ {\bf q})_+, \quad Re \ {\bf q}=\left[\begin{array}{c}
Re \ q_1  \\
Re \ q_2 
\end{array}\right].
$$
Substituting this  expression into the formula for $W(z^2)$ we obtain \eqref{DDD3}.

In terms of the boundary triplet $(\mathbb{C}, \Gamma_0, \Gamma_1)$, the Krein-Naimark resolvent formula has the form 
\cite[Theorem 14.18, Proposition 14.14]{Schm}
$$
({\bf H}_{a{\bf q}}-z^2{I})^{-1}{\bf f}=({\bf H}_\infty-z^2{I})^{-1}{\bf f}+\frac{\Gamma_1{\bf u}}{a-W(z^2)}{\bf u}_z(x), 
$$
where ${\bf u}=({\bf H}_\infty-z^2{I})^{-1}{\bf f}.$  
In view of \eqref{newww2},  ${\bf u'}(0)=\int_0^\infty{e^{-izs}}{\bf f}(s)ds$. Taking \eqref{GGG14} into account,
$$
2[{\bf u'}]_r=\int_0^\infty{e^{-izs}}({f_1(s)+f_2(s)})dx=({\bf f},  {\bf e}^{i\overline{z}x})_+.
$$
Finally,  using  \eqref{DDD5} and \eqref{DDD} with $\mu=-\overline{z}$, we obtain
$$
\Gamma_1{\bf u}=({\bf f},  {\bf e}^{i\overline{z}x})_{+}-({\bf u}, {\bf q})_{+}=({\bf f},  {\bf e}^{i\overline{z}x}-({\bf H}_\infty-\overline{z}^2{I})^{-1} {\bf q})_{+}= 
({\bf f},  {\bf u}_{-\overline{z}})_{+}
$$
that completes the proof.
\end{proof}

 \subsection{Applicability of the Lax-Phillips approach for ${\bf H}_{a{\bf q}}$}
 Denote by
  \begin{equation}\label{AGH121bc}
  \mathcal{B}=i\frac{d}{dx}, \qquad \mathcal{D}(\mathcal{B})=\{u\in{{W}_{2}^{1}}({\mathbb R}_+) :  u(0)=0\}
  \end{equation}
 the first derivative operator in $L_2(\mathbb{R}_+)$. The \emph{same notation} will be used for its analog  acting in   
 $L_2(\mathbb{R}_+, \mathbb{C}^2)$.  The both operators
 are simple maximal symmetric with zero defect numbers in $\mathbb{C}_+$,  and theirs Cayley transforms
 \begin{equation}\label{KKKK1}
T=({\mathcal B}-iI)({\mathcal B}+iI)^{-1}
\end{equation}
are forward shift operators  in the corresponding spaces.
 
A function ${\bf q}\in L_2(\mathbb{R}_+, \mathbb{C}^2)$ is called \textit{non-cyclic} for the backward shift operator $T^*$ if the subspace
$$
E_{\bf q}=\bigvee _{n=0}^{\infty}{T^*}^n{\bf q}
$$
\emph{does not coincide} with  $L_2(\mathbb{R}_+, \mathbb{C}^2)$. 

Considering $L_2(\mathbb{R}_+)$ as a subspace of $L_2(\mathbb{R})$ we conclude that
the Fourier transform
$$
Ff(\delta)=\frac{1}{\sqrt{2\pi}}\int_{-\infty}^{\infty}e^{i\delta s}f(s)ds
$$
maps isometrically $L_2(\mathbb{R}_+)$ onto the Hardy space $H^2(\mathbb{C}_+)$ and
$$
F{\mathcal B}u=\delta Fu, \quad F{T}f=\frac{\delta-i}{\delta+i}Ff, \qquad u\in\mathcal{D}({\mathcal B}), \quad  f\in{L_2(\mathbb{R}_+)}. 
$$

Let $\psi\in H^{\infty}(\mathbb C_+)$ be an inner function.
Then  
\begin{equation}\label{AGHH}
\psi ({\mathcal B})=F^{-1}\psi(\delta)F
\end{equation}
 is an isometric operator in $L_2(\mathbb{R}_+)$ which commutes with ${\mathcal B}$ \cite[Sec. 5]{Gawlik}.

\begin{lemma}\label{zzzz}
The following are equivalent:
\begin{enumerate}
\item[(i)]  a function ${\bf q}=\left[\begin{array}{c}
q_1  \\
q_2 
\end{array}\right]$  is non-cyclic for the backward shift operator $T^*$; 
\item[(ii)] there exists an inner function $\psi\in H^{\infty}(\mathbb C_+)$ such that
the subspace ${\mathfrak H}_0=\psi ({\mathcal B})L_2(\mathbb{R}_+)$ of $L_2(\mathbb{R}_+)$ is orthogonal to 
at least one of the functions $q_i$.
\end{enumerate}
\end{lemma}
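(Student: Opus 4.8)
The plan is to transfer the statement to the Hardy--space model by the Fourier transform and to recognise (ii) as a Beurling--Lax description of the shift-invariant subspace $E_{\mathbf q}^{\perp}$.

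First I would fix the model. The componentwise Fourier transform $F$ identifies $L_2(\mathbb{R}_+,\mathbb{C}^2)$ with $H^2(\mathbb{C}_+)\oplus H^2(\mathbb{C}_+)$, turns $\mathcal B$ into multiplication by $\delta$, and turns $T$ into multiplication by the scalar inner function $m(\delta)=\frac{\delta-i}{\delta+i}$; hence $T^{*}$ becomes the backward shift acting diagonally on the two coordinates. For inner $\psi\in H^{\infty}(\mathbb{C}_+)$ the subspace $\psi(\mathcal B)L_2(\mathbb{R}_+)$ corresponds under $F$ to $\psi H^2(\mathbb{C}_+)$, so (ii) says that, for some $i$, $Fq_i$ lies in $K_\psi:=H^2(\mathbb{C}_+)\ominus\psi H^2(\mathbb{C}_+)$ for some inner $\psi$. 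Since the closed backward-shift invariant subspaces of scalar $H^2(\mathbb{C}_+)$ are precisely the spaces $K_\psi$ (with $K_\psi=\{0\}$ when $\psi$ is constant, which forces $q_i=0$), (ii) is equivalent to: at least one of the scalar components $q_1,q_2$ is non-cyclic for the scalar backward shift. I would record this reformulation first, since it makes both implications transparent.

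For (ii)$\Rightarrow$(i): if, say, $q_1\perp\psi H^2(\mathbb{C}_+)$, then for every nonzero $g\in\psi H^2(\mathbb{C}_+)$ the vector $\mathbf g=(g,0)$ satisfies $(\mathbf g,(T^{*})^{n}\mathbf q)_{+}=(m^{n}g,q_1)=0$ for all $n\geq0$, because $m^{n}g\in\psi H^2(\mathbb{C}_+)$; hence $\mathbf g\in E_{\mathbf q}^{\perp}\setminus\{0\}$ and $\mathbf q$ is non-cyclic. The case $q_2\perp\psi H^2(\mathbb{C}_+)$ is symmetric.

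The substance is (i)$\Rightarrow$(ii). Here I would use that $E_{\mathbf q}^{\perp}$, being the orthogonal complement of the $T^{*}$-invariant subspace $E_{\mathbf q}$, is a nonzero subspace of $H^2(\mathbb{C}_+)\oplus H^2(\mathbb{C}_+)$ invariant under multiplication by $m$, hence, via the Cayley transform to the disc, a nonzero shift-invariant subspace. By the vector Beurling--Lax theorem it has the form $\Theta H^2(\mathbb{C}_+,\mathcal E)$ for an inner $\Theta$ with $\dim\mathcal E\geq1$; restricting $\Theta$ to a one-dimensional subspace of $\mathcal E$ produces a column-inner function $\theta=(\theta_1,\theta_2)$, $|\theta_1|^2+|\theta_2|^2=1$ a.e., with $\theta H^2(\mathbb{C}_+)\subseteq E_{\mathbf q}^{\perp}$, i.e. $\overline{\theta_1}\,Fq_1+\overline{\theta_2}\,Fq_2\perp H^2(\mathbb{C}_+)$. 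The main obstacle, and the step I expect to need the most care, is to pass from such a $\theta$ to a genuinely scalar inner function $\psi$ annihilating a single component $q_i$: I would try to achieve this by choosing the Beurling--Lax generator of $E_{\mathbf q}^{\perp}$ adapted to $\mathbf q$, so that one coordinate of $\theta$ is identically zero and the other is the desired $\psi$, and by analysing the interaction of the inner parts of $\theta_1,\theta_2$ with $Fq_1,Fq_2$. Once (ii) holds in the Hardy model, undoing $F$ returns the subspace $\mathfrak H_0=\psi(\mathcal B)L_2(\mathbb{R}_+)$ orthogonal to $q_i$, completing the equivalence.
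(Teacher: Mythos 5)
Your reduction of (ii) to ``at least one $q_i$ is non-cyclic for the scalar backward shift'' is correct, and your proof of (ii)$\Rightarrow$(i) is complete and is essentially the paper's argument (the paper shows $E_{q_i}\perp\mathfrak H_0$ and then passes to $\mathbf q$). The problem is the direction (i)$\Rightarrow$(ii), where you candidly leave the key step as something you ``would try to achieve'': passing from a column-inner generator $\theta=(\theta_1,\theta_2)$ of a shift-invariant subspace of $E_{\mathbf q}^{\perp}$ to a \emph{scalar} inner $\psi$ annihilating a \emph{single} component $q_i$. This is not a technical detail you can defer --- it is the entire content of the implication, and with the definition of non-cyclicity as literally stated ($E_{\mathbf q}=\bigvee_n (T^*)^n\mathbf q$ for the diagonal backward shift on $L_2(\mathbb R_+,\mathbb C^2)$) the step genuinely fails. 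Take $q_1=q_2=q$ with $q$ cyclic for the scalar backward shift (such $q$ exist by Douglas--Shapiro--Shields). Then every $(T^*)^n\mathbf q$ lies in the diagonal $\Delta=\{(f,f)\}$, so $E_{\mathbf q}\subseteq\Delta$ is proper and (i) holds; but $E_{\mathbf q}^{\perp}$ is the antidiagonal, generated by the column-inner function $\tfrac1{\sqrt2}\bigl(\begin{smallmatrix}1\\-1\end{smallmatrix}\bigr)$, whose coordinates cannot be separated, and (ii) fails because neither component $q$ is non-cyclic in the scalar sense. So the obstacle you flagged is an actual obstruction, not a gap you failed to close.

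For comparison, the paper's proof of (i)$\rightarrow$(ii) disposes of this step in one line by asserting $E_{\mathbf q}=E_{q_1}\oplus E_{q_2}$, from which ``$\mathbf q$ non-cyclic iff some $q_i$ non-cyclic'' is immediate; but that identity is exactly what breaks in the example above (there $E_{\mathbf q}=\Delta\neq L_2\oplus L_2=E_{q_1}\oplus E_{q_2}$). In other words, your more careful Beurling--Lax analysis does not miss an idea that the paper supplies --- it exposes that the reduction is the crux and that it holds only under an additional hypothesis (e.g.\ if one reads $E_{\mathbf q}$ as $E_{q_1}\oplus E_{q_2}$ by definition, or restricts to $\mathbf q$ for which the two scalar orbits are ``independent''). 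As submitted, your proposal proves (ii)$\Rightarrow$(i) and only the easy half of the equivalence claimed in (i)$\Rightarrow$(ii); to match the paper you would either have to justify $E_{\mathbf q}=E_{q_1}\oplus E_{q_2}$ (which you cannot in general) or adopt the modified definition under which that identity is built in.
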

\begin{proof} 
$(i)\to(ii)$ Since $E_{\bf q}=E_{q_1}{\oplus}E_{q_2}$, 
the function ${\bf q}$ is non-cyclic if and only if at least one of the functions $q_i{\in}L_2(\mathbb{R}_+)$
is non-cyclic for the  backward shift operator $T^*$ in $L_2(\mathbb{R}_+)$.
Let $q\equiv{q}_i$ be non-cyclic. Then the non-zero subspace 
$$
{\mathfrak H}_0=L_2(\mathbb{R}_+)\ominus E_{q}
$$
is invariant with respect to $T$. 
This means that  $F{\mathfrak H}_0$ is invariant  
with respect to the multiplication by $\frac{\delta-i}{\delta+i}$ in $H^2(\mathbb{C}_+)$. 
The Beurling theorem \cite[p. 164]{LB} yields the existence of an inner function $\psi \in H^\infty(\mathbb{C}_+)$ such that 
$F{\mathfrak H}_0=\psi(\delta)H_2(\mathbb{C}_+)$. Therefore 
 $$
 {\mathfrak H}_0=F^{-1}\psi(\delta)FL_2(\mathbb{R}_+)=\psi({\mathcal B})L_2(\mathbb{R}_+).
 $$
 By the construction, ${\mathfrak H}_{0}$ is orthogonal to $q$ (since, $q$ belongs to $E_{q}$).
  
  $(ii)\to(i)$  Let  ${\mathfrak H}_{0}=\psi ({\mathcal B})L_2(\mathbb{R}_+)$ be orthogonal to $q$. Then\footnote{here, $(\cdot,\cdot)_+$ is the scalar product in
  $L_2(\mathbb{R}_+)$.}  
  $$
  (\psi({\mathcal B})f, {{T}^*}^nq)_+=({{T}}^n\psi({\mathcal B})f, q)_+=(\psi({\mathcal B}){T}^nf, q)_+=0 \quad \mbox{for all} \quad f\in{L_2(\mathbb{R}_+)}.
  $$
  Therefore, ${{T}^*}^nq$ is orthogonal to ${\mathfrak H}_0$.
   This means that $E_{q}$   is orthogonal to ${\mathfrak H}_0$. Therefore,
  $E_{q}$ is a proper subspace of $L_2(\mathbb R_+)$ and $q$ is non-cyclic.
  \end{proof}

\begin{theorem}\label{yyyy}
If $\bf q$  is non-cyclic for $T^*$, then there exists a simple maximal symmetric operator $B$ acting in a subspace  ${\mathfrak H}_{0}$
of $L_2(\mathbb{R}_+, \mathbb{C}^2)$ such that the operators ${\bf H}_{a{\bf q}}$ 
are extensions of the symmetric operator $B^2$ for all ${a}\in\mathbb{C}$.
\end{theorem}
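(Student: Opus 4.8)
\medskip
\noindent\emph{Proof proposal.}\ The plan is to build $\mathfrak{H}_0$ and $B$ explicitly from a scalar non-cyclic component of ${\bf q}$. First I would use the identity $E_{\bf q}=E_{q_1}\oplus E_{q_2}$ from the proof of Lemma~\ref{zzzz}: non-cyclicity of ${\bf q}$ for $T^*$ forces at least one component, say $q_1$ (the case of $q_2$ being symmetric, obtained by swapping the two coordinates of $\mathbb{C}^2$), to be non-cyclic for the backward shift $T^*$ in $L_2(\mathbb{R}_+)$. I would set $\mathfrak{h}_0=L_2(\mathbb{R}_+)\ominus E_{q_1}$, which — again by the proof of Lemma~\ref{zzzz} — is a nonzero subspace of $L_2(\mathbb{R}_+)$, invariant under the forward shift $T$, and orthogonal to $q_1$. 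Then I would take $\mathfrak{H}_0=\{(u,0):u\in\mathfrak{h}_0\}\subset L_2(\mathbb{R}_+,\mathbb{C}^2)$, the copy of $\mathfrak{h}_0$ sitting in the first coordinate; the point of this one-sided embedding is that it makes ${\bf q}\perp\mathfrak{H}_0$ automatic, since $({\bf q},(u,0))_+=(q_1,u)_{L_2(\mathbb{R}_+)}=0$.

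Next I would take $B$ to be the inverse Cayley transform, computed inside $\mathfrak{h}_0$, of the isometry $\hat T:=T\!\upharpoonright_{\mathfrak{h}_0}$ (an everywhere-defined isometry of $\mathfrak{h}_0$, because $\mathfrak h_0$ is $T$-invariant). Since $T$ is a pure forward shift, $\hat T$ is again a unilateral shift on $\mathfrak h_0$ (equivalently, through $F$ and Beurling's theorem one has $\mathfrak h_0\cong\psi H^2(\mathbb C_+)$ and $\hat T$ is multiplication by $\tfrac{\delta-i}{\delta+i}$). Hence $B=i(I+\hat T)(I-\hat T)^{-1}$ is a closed densely defined maximal symmetric operator in $\mathfrak h_0$ with zero defect number in $\mathbb C_+$, it is simple (the shift $\hat T$ has no unitary summand, since $\bigcap_n\hat T^n\mathfrak h_0\subseteq\bigcap_n T^nL_2(\mathbb R_+)=\{0\}$), and $B\subseteq\mathcal B$. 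So $B$ is a simple maximal symmetric operator acting in the subspace $\mathfrak H_0$ of $L_2(\mathbb R_+,\mathbb C^2)$.

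It then remains to verify that each ${\bf H}_{a{\bf q}}$, $a\in\mathbb C$, is an extension of $B^2$ with exit into $L_2(\mathbb R_+,\mathbb C^2)$. From $B\subseteq\mathcal B$ one gets $\mathcal D(B^2)\subseteq\mathcal D(\mathcal B^2)=\{u\in W_2^2(\mathbb R_+):u(0)=u'(0)=0\}$ and $B^2u=\mathcal B^2u=-u''$ there. Consequently, for $u\in\mathcal D(B^2)$ the function ${\bf u}=(u,0)$ lies in $W_2^2(\mathbb R_+,\mathbb C^2)$ and, by \eqref{GGG14}, satisfies $[{\bf u}]_s=[{\bf u}]_r=[{\bf u'}]_r=0$, while $({\bf u},{\bf q})_+=(u,q_1)_{L_2(\mathbb R_+)}=0$ because $u\in\mathfrak h_0$; therefore the boundary condition $[{\bf u'}]_r=a[{\bf u}]_r+({\bf u},{\bf q})_+$ of \eqref{newww6} collapses to $0=0$ for every $a$, so ${\bf u}\in\mathcal D({\bf H}_{a{\bf q}})$ and ${\bf H}_{a{\bf q}}{\bf u}=-{\bf u}''+[{\bf u}]_r{\bf q}=(-u'',0)=(B^2u,0)$. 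Under the identification of $\mathfrak H_0$ with $\mathfrak h_0$ this says precisely that $B^2\subseteq{\bf H}_{a{\bf q}}$ for all $a\in\mathbb C$.

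I expect the only real technical content to be the second step — converting the $T$-invariance of $\mathfrak h_0$ into the assertion that its inverse Cayley transform is a \emph{simple} and \emph{maximal} symmetric operator with zero defect in $\mathbb C_+$ (i.e. satisfying \eqref{GGG2}), which is exactly the hypothesis of Theorem~\ref{new31b}(ii) needed to then invoke Theorems~\ref{new31b} and~\ref{esse3}. Once $B$ is in place, the one-sided choice of $\mathfrak H_0$ kills the ${\bf q}$-term in the domain constraint and the vanishing Cauchy data of $\mathcal D(B^2)$ kills the $a$-term, so the extension property holds uniformly in $a$ with no further work.
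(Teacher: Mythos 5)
Your proposal is correct and rests on the same mechanism as the paper's proof: restrict the first‑derivative operator $\mathcal{B}$ to a forward‑shift‑invariant subspace ${\mathfrak H}_0$ orthogonal to ${\bf q}$, so that for ${\bf u}\in\mathcal{D}(B^2)$ the vanishing Cauchy data kill the $a$‑term and the orthogonality kills the $({\bf u},{\bf q})_+$‑term in \eqref{newww6}. The differences are in the packaging and in one structural choice. You obtain $B$ as the inverse Cayley transform of $T\!\upharpoonright_{{\mathfrak h}_0}$, which is the same operator the paper writes as $\psi(\mathcal{B})\mathcal{B}\psi(\mathcal{B})^*$ in \eqref{e7}--\eqref{pppp}; your route delivers maximality, simplicity and zero defect number in $\mathbb{C}_+$ directly from the fact that the restricted shift is a non‑unitary isometry with no unitary summand, which is a clean way to land in hypothesis (ii) of Theorem \ref{new31b}. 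The substantive deviation is the choice of ${\mathfrak H}_0$: you always take the one‑component subspace $\bigl(L_2(\mathbb{R}_+)\ominus E_{q_1}\bigr)\oplus\{0\}$, whereas the paper takes $\psi_1(\mathcal{B})L_2(\mathbb{R}_+)\oplus\psi_2(\mathcal{B})L_2(\mathbb{R}_+)$ with both components nonzero whenever both $q_i$ are non‑cyclic, falling back to your one‑sided subspace (via $\psi_2=0$) only when $q_2$ is cyclic. For the bare existence statement of Theorem \ref{yyyy} either choice works, and yours avoids the case distinction; but the larger ${\mathfrak H}_0$ is the one the rest of the paper needs, since with it $\mathcal{H}=\ker({B^*}^2+I)$ is two‑dimensional and the $S$‑matrix \eqref{ups4} is a genuine $2\times 2$ matrix involving both $\psi_1$ and $\psi_2$, while with your $B$ the space $\mathcal{H}$ would be one‑dimensional and the later formulas would degenerate. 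One shared caveat: both your argument and the paper's lean on the identity $E_{\bf q}=E_{q_1}\oplus E_{q_2}$ from the proof of Lemma \ref{zzzz} to extract a non‑cyclic scalar component, so neither proof is more vulnerable than the other on that point.
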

\begin{proof}  
If $\bf q$  is non-cyclic, then at least one of $q_i$ is non-cyclic.  Consider firstly the case where the both of functions $q_i$ are 
non-cyclic. Due to the proof of  Lemma \ref{zzzz},  for each $q_i$ there exists an inner function $\psi_i$ such that the subspace
 $\psi_i({\mathcal B})L_2(\mathbb{R}_+)$ is orthogonal to $q_i$. Denote
 \begin{equation}\label{AGH33b}
 {\mathfrak H}_{0}=\left[\begin{array}{l} 
 \psi_1({\mathcal B})L_2(\mathbb{R}_+) \\
 \psi_2({\mathcal B})L_2(\mathbb{R}_+)
 \end{array}\right]=\psi({\mathcal B})L_2(\mathbb{R}_+, \mathbb{C}^2), 
\end{equation}
where  
\begin{equation}\label{AGH77} 
\psi({\mathcal B})=\left[\begin{array}{cc}
\psi_1({\mathcal B})  & 0 \\
0 & \psi_2({\mathcal B})
\end{array}\right]
\end{equation}
is an isometric operator in $L_2(\mathbb{R}_+, \mathbb{C}^2)$ that commutes with ${\mathcal B}$. This allows 
to define a simple maximal symmetric operator in $\mathfrak{H}_0$:
\begin{equation}\label{e7}
B=\psi({\mathcal B}){\mathcal B}\psi({\mathcal B})^*, \qquad  \mathcal{D}(B)=\psi({\mathcal B})\mathcal{D}({\mathcal{B}}).
\end{equation}

Since $\psi({\mathcal B})$ commutes with ${\mathcal B}$, the formula \eqref{e7}
can be rewritten as
\begin{equation}\label{pppp}
B{\bf u}=\mathcal{B}{\bf u}, \qquad {\bf u}\in\mathcal D(B)=\psi({\mathcal B})\mathcal{D}({\mathcal{B}})=
\mathcal{D}({\mathcal{B}})\cap{\mathfrak H}_0.
\end{equation}
 (i.e., $B$ is a part of ${\mathcal B}$ restricted on $\mathfrak{H}_{0}$).
 In view of \eqref{AGH121bc} and \eqref{pppp}
\begin{equation}\label{dddd}
B^2=-\frac{d^2}{dx^2}, \quad 
\mathcal D(B^2)=\{{\bf u}\in{{W}_{2}^{2}}({\mathbb R}_+, \mathbb{C}^2)\cap{\mathfrak H}_0 :  {\bf u}(0)={\bf u}'(0)=0\}.
\end{equation}
 
By  Lemma \ref{zzzz} and \eqref{AGH33b}, the subspace  ${\mathfrak H}_{0}$  is orthogonal to ${\bf q}$.
Hence, in view of \eqref{newww6a},  \eqref{newww6}, and \eqref{dddd},
$\mathcal D({\bf H}_{a{\bf q}})\supset\mathcal D({B}^2)$ and 
$$
{\bf H}_{a{\bf q}}{\bf u}=-\frac{d^2\bf u}{dx^2}=B^2{\bf u} \quad \mbox{for all} \quad {\bf u}\in\mathcal D(B^2).
$$

The case where only one $q_i$ is considered similarly. For example, if $q_1$ is non-cyclic whereas $q_2$ is cyclic (i.e., $E_{q_2}=L_2(\mathbb{R}_+)$), then
${\mathfrak H}_0$ and $\psi({\mathcal B})$ are determined as above with  $\psi_2=0$.
\end{proof}
\begin{corollary}\label{ffff} 
Assume  that $H={\bf H}_{a{\bf q}}$ is a positive self-adjoint operator. If  ${\bf q}$ is non-cyclic for $T^*$, then
the group $W(t)$ of Cauchy problem solutions of \eqref{bonn60} has incoming/outgoing subspaces $D_\pm$
defined by \eqref{e2b}, where $B$ is from \eqref{pppp}.
\end{corollary}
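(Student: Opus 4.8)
The plan is to read the statement off the abstract machinery of Section \ref{sec.2}: Theorem \ref{yyyy} supplies the symmetric operator whose square is extended by $H$, the equivalence in Theorem \ref{new31b} then produces the incoming/outgoing subspaces, and the explicit description recalled in Subsection \ref{sec2.2} identifies those subspaces with the sets in \eqref{e2b}.

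First I would apply Theorem \ref{yyyy}: since $\bf q$ is non-cyclic for $T^*$, there is a simple maximal symmetric operator $B$ acting in a subspace $\mathfrak H_0$ of $L_2(\mathbb{R}_+,\mathbb{C}^2)$ with ${\bf H}_{a{\bf q}}\supset B^2$ for every $a\in\mathbb{C}$. Inspecting the construction \eqref{e7}--\eqref{dddd} in the proof of that theorem, this $B$ is precisely the operator \eqref{pppp}, i.e.\ the part of $\mathcal B$ restricted to $\mathfrak H_0=\psi(\mathcal B)L_2(\mathbb{R}_+,\mathbb{C}^2)$ (with $\psi_i$ set to $0$ for a cyclic $q_i$), and $B^2=-d^2/dx^2$ on the domain \eqref{dddd}. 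Because, by hypothesis, $H={\bf H}_{a{\bf q}}$ is a positive self-adjoint operator in $\mathfrak H=L_2(\mathbb{R}_+,\mathbb{C}^2)$, while $\mathcal D(B^2)\subset\mathcal D(H)$ and $H\!\upharpoonright_{\mathcal D(B^2)}=B^2$ (an extension with exit into $\mathfrak H$, since $\mathfrak H_0$ is in general a proper subspace of $\mathfrak H$), condition (ii) of Theorem \ref{new31b} is satisfied with this $B$. Hence, by the implication (ii)$\Rightarrow$(i) of that theorem, the unitary group $W(t)$ of solutions of the Cauchy problem for \eqref{bonn60} with this $H$ possesses incoming and outgoing subspaces $D_\pm$ enjoying the properties $(i)$--$(iii)$ and the time-reversal relation \eqref{e15}.

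It remains to pin down the explicit form of $D_\pm$, and for this I would invoke the structural part of the Lax--Phillips construction recalled in Subsection \ref{sec2.2} (formulas \eqref{e2b}--\eqref{e3b}): once $H$ is realized as an extension of $B^2$ with $B$ simple maximal symmetric, the subspaces $D_-$ and $D_+$ are exactly the closures in $\mathfrak W=\mathfrak H_H\oplus\mathfrak H$ of the two sets displayed in \eqref{e2b}, formed with the operator $B$ just identified, that is with $B$ as in \eqref{pppp}. There is no essential obstacle — the corollary is a specialization of the general scheme — and the only points that genuinely require attention are (a) checking that the $B$ delivered by Theorem \ref{yyyy} is literally the $B$ of \eqref{pppp}, which is read off from \eqref{e7}--\eqref{dddd}, and (b) retaining the positivity and self-adjointness of ${\bf H}_{a{\bf q}}$ as a standing hypothesis, since for non-real $a$, or for real $a$ with ${\bf H}_{a{\bf q}}$ not positive, the group $W(t)$ need not be unitary and Theorem \ref{new31b} does not apply.
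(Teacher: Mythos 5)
Your argument is correct and is exactly the paper's route: the paper proves this corollary in one line by citing Theorems \ref{new31b} and \ref{yyyy}, and your write-up simply unpacks that chain (Theorem \ref{yyyy} supplies the $B$ of \eqref{pppp} with $B^2\subset{\bf H}_{a{\bf q}}$, Theorem \ref{new31b} (ii)$\Rightarrow$(i) yields $D_\pm$, and \eqref{e2b} gives their explicit form). No issues.
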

\begin{proof}
It follows from Theorems \ref{new31b} and \ref{yyyy}.
\end{proof}
 
\section{$S$-matrix for positive self-adjoint operator}\label{sec.4}

In this section we suppose that ${\bf H}_{a{\bf q}}$    
is a positive self-adjoint operator and the function ${\bf q}$ is non-cyclic.
By Theorem \ref{yyyy},  ${\bf H}_{a{\bf q}}$ is an extension of the symmetric operator  $B^2$ defined by \eqref{dddd} that acts in the subspace
${\mathfrak H}_0=\psi({\mathcal B})L_2(\mathbb{R}_+, \mathbb{C}^2)$. 
In view of Corollary \ref{ffff} and Theorem \ref{esse3},  the $S$-matrix
of ${\bf H}_{a{\bf q}}$  exists and is given by \eqref{red1}. Our goal is to modify this general formula 
taking into account the specific choice of $B$ in \eqref{pppp}. 

\subsection{Preliminaries}
The following technical results are needed for the calculation of $S$-matrix.
\begin{lemma}\label{AGH44}
Let an isometric operator $\psi(\mathcal{B})$ be defined by \eqref{AGHH}. Then
$$
\psi(\mathcal{B})^*{e}^{-i{\mu}x}=\overline{\psi(\overline{\mu})}e^{-i{\mu}x}, \qquad \mu\in\mathbb{C}_-.
$$
\end{lemma}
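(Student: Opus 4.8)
The plan is to verify the identity by a direct computation on the Fourier side, where $\psi(\mathcal{B})$ acts as multiplication by the inner function $\psi(\delta)$, and then to handle the subtlety that $e^{-i\mu x}\notin L_2(\mathbb{R}_+)$ by a density/analyticity argument. First I would recall from the excerpt that the Fourier transform $F$ maps $L_2(\mathbb{R}_+)$ isometrically onto $H^2(\mathbb{C}_+)$, that $\psi(\mathcal{B})=F^{-1}\psi(\delta)F$, and hence that $\psi(\mathcal{B})^*=F^{-1}\overline{\psi(\delta)}F$, with $\overline{\psi(\delta)}$ acting on $H^2(\mathbb{C}_+)$ as the compression (i.e. the adjoint of multiplication by $\psi$, which on $L_2(\mathbb{R})$ is honest multiplication by $\overline{\psi}$ followed by the Riesz projection $P_+$ onto $H^2(\mathbb{C}_+)$).

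Next I would test the claimed identity against an arbitrary $f\in L_2(\mathbb{R}_+)$: it suffices to show $(\psi(\mathcal{B})^* e^{-i\mu x}, f)_+ = \overline{\psi(\overline{\mu})}\,(e^{-i\mu x}, f)_+$, i.e. $(e^{-i\mu x}, \psi(\mathcal{B}) f)_+ = \overline{\psi(\overline{\mu})}\,(e^{-i\mu x}, f)_+$. Since $\mu\in\mathbb{C}_-$, the function $x\mapsto e^{-i\mu x}$ decays on $\mathbb{R}_+$ and the pairing $(e^{-i\mu x}, g)_+ = \int_0^\infty e^{-i\mu x}\overline{g(x)}\,dx$ is, up to a constant, the value of (the analytic continuation of) $\widehat{g}$ at the point $\overline\mu\in\mathbb{C}_-$ — more precisely it is a bounded antilinear functional on $L_2(\mathbb{R}_+)$ whose action on $g\in L_2(\mathbb R_+)$ equals $\sqrt{2\pi}\,\overline{(Fg)(\overline\mu)}$ if one extends $Fg\in H^2(\mathbb{C}_+)$... wait, the evaluation point is in $\mathbb{C}_-$, so this is the pairing with the reproducing-kernel-type vector of $H^2(\mathbb{C}_-)^{\perp}$ inside $L_2$. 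I would set $k_\mu(x) := e^{-i\mu x}\mathbf{1}_{x>0}$, note $k_\mu\in L_2(\mathbb{R}_+)$, compute $Fk_\mu(\delta) = \frac{1}{\sqrt{2\pi}}\cdot\frac{i}{\delta-\mu}$ (a genuine element of $H^2(\mathbb{C}_+)$ since $\mu\in\mathbb{C}_-$), and then the whole identity becomes the single statement that for $g\in H^2(\mathbb{C}_+)$,
\[
\int_{\mathbb{R}}\frac{\overline{g(\delta)}}{\delta-\mu}\,d\delta \;=\; \overline{\psi(\overline\mu)}\int_{\mathbb{R}}\frac{\overline{(P_+(\overline{\psi}g))(\delta)}}{\delta-\mu}\,d\delta \quad\text{is wrong as stated};
\]
rather the clean route is: $(k_\mu,\psi(\mathcal B)f)_+ = (Fk_\mu,\psi Ff)_{H^2} = \frac{1}{\sqrt{2\pi}}\langle \tfrac{i}{\,\cdot\,-\mu}, \psi Ff\rangle$, and since $\mu\in\mathbb C_-$ the Cauchy kernel $\frac{i}{\delta-\mu}$ reproduces: $\langle \tfrac{i}{\,\cdot\,-\mu}, h\rangle = \overline{h(\overline\mu)}$ is false in $H^2(\mathbb C_+)$ — the reproducing kernel at a point of $\mathbb C_+$ is $\frac{i}{2\pi}\cdot\frac{1}{\overline{w}-\delta}$... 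This is exactly the place I expect the real work to be: pinning down which Cauchy-kernel identity is in force and with which conjugations, given that $\mu$ lies in the \emph{lower} half-plane while $H^2$ lives over the \emph{upper} half-plane.

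So the honest key step, and the main obstacle, is the following: one must observe that $\psi(\mathcal{B})^*e^{-i\mu x}$, although $e^{-i\mu x}$ is a "generalized eigenvector" of $\mathcal{B}$ with eigenvalue $\overline\mu\in\mathbb C_-$, can be identified concretely because $\psi(\mathcal B)$ commutes with $\mathcal B$ and $\psi(\mathcal B)$ is given by the \emph{outer-to-the-lower-half-plane} boundary values of $\psi$; equivalently, one writes $\psi(\mathcal B)^* = \psi(\mathcal B^*)^{\flat}$ and uses that on the domain where things make sense $\psi(\mathcal B)^*$ sends an eigenfunction of $\mathcal B^*$ at $\overline\mu$ to its scalar multiple by $\overline{\psi(\overline\mu)}$. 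I would therefore prove it by testing against the dense family $\{k_\nu : \nu\in\mathbb C_-\}$ (finite linear combinations of which are dense in $L_2(\mathbb R_+)$, since their Fourier transforms $\frac{i}{\delta-\nu}$ span a dense subspace of $H^2(\mathbb{C}_+)$), reducing to the purely scalar identity
\[
\int_0^\infty \int_0^\infty e^{-i\mu x}\,\overline{(\psi(\mathcal B)k_\nu)(x)}\;dx \;=\;\overline{\psi(\overline\mu)}\cdot\frac{i}{\overline\mu-\overline\nu},
\]
which in turn follows from the explicit formula $F(\psi(\mathcal B)k_\nu)(\delta)=\frac{1}{\sqrt{2\pi}}\,\frac{i\,\psi(\delta)}{\delta-\nu}$ together with a contour-integral / residue evaluation closing in $\mathbb C_+$ (legitimate because $\psi\in H^\infty(\mathbb C_+)$ and the integrand is $O(|\delta|^{-2})$), picking up the single pole at $\delta=\overline\nu\in\mathbb{C}_+$... again the pole location forces care with which half-plane one closes in. Once this scalar identity is in hand, the Lemma follows by linearity and density. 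I do not expect to need anything beyond the Paley–Wiener description of $H^2(\mathbb C_+)$, the definition \eqref{AGHH}, and Cauchy's theorem; the only genuinely delicate point is bookkeeping the three interacting conjugations ($\psi$ vs.\ $\overline\psi$, $\mu$ vs.\ $\overline\mu$, $H^2(\mathbb C_+)$ vs.\ its conjugate), so I would state the reproducing-kernel / residue facts as a short preliminary computation before invoking them.
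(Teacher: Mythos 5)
Your overall framework --- pass to the Fourier side where $\psi(\mathcal{B})$ acts as multiplication by $\psi(\delta)$, test the identity against the dense family of Cauchy kernels $k_\nu(x)=e^{-i\nu x}$, $\nu\in\mathbb{C}_-$, and evaluate one explicit integral --- is viable and in the same spirit as the paper's proof. But the crux, which you yourself flag as unresolved (``pinning down which Cauchy-kernel identity is in force and with which conjugations''), is wrong in the one concrete form you give. After Parseval, $(k_\mu,\psi(\mathcal{B})k_\nu)_+=\frac{1}{2\pi}\int_{\mathbb{R}}\overline{\psi(\delta)}\,(\delta-\mu)^{-1}(\delta-\overline{\nu})^{-1}\,d\delta$. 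Here $\overline{\psi(\delta)}$ is the boundary value of $w\mapsto\overline{\psi(\overline{w})}$, which is bounded and analytic in the \emph{lower} half-plane; so the contour must be closed in $\mathbb{C}_-$, picking up the pole at $\delta=\mu$, which yields $\overline{\psi(\overline{\mu})}\cdot i\,(\overline{\nu}-\mu)^{-1}=\overline{\psi(\overline{\mu})}\,(k_\mu,k_\nu)_+$ as required. Closing in $\mathbb{C}_+$ at $\delta=\overline{\nu}$, as you propose, is not legitimate for a general inner function (for $\psi(\delta)=e^{i\rho\delta}$ the reflected function $e^{-i\rho w}$ is unbounded in $\mathbb{C}_+$), and even formally it produces the wrong constant $\overline{\psi(\overline{\nu})}$ in place of $\overline{\psi(\overline{\mu})}$. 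Your right-hand side is also off: $(k_\mu,k_\nu)_+=i\,(\overline{\nu}-\mu)^{-1}$, not $i\,(\overline{\mu}-\overline{\nu})^{-1}$. So as written the key computation would not deliver the lemma; it is repairable, but the repair is precisely the content you deferred.

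The paper's argument is shorter and avoids the density step entirely, and you actually gesture at it without carrying it out. Since $\psi(\mathcal{B})$ commutes with $\mathcal{B}$, for every $u\in\mathcal{D}(\mathcal{B})$ one has $((\mathcal{B}-\overline{\mu}I)u,\psi(\mathcal{B})^*e^{-i\mu x})_+=((\mathcal{B}-\overline{\mu}I)\psi(\mathcal{B})u,e^{-i\mu x})_+=0$, so $\psi(\mathcal{B})^*e^{-i\mu x}$ is orthogonal to $\mathcal{R}(\mathcal{B}-\overline{\mu}I)$ and hence lies in $\ker(\mathcal{B}^*-\mu I)=\mathbb{C}\,e^{-i\mu x}$ (note the eigenvalue is $\mu$, not $\overline{\mu}$ as in your sketch). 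This reduces everything to identifying a single scalar $c$, which is read off from the one inner product $(\psi(\mathcal{B})^*e^{-i\mu x},e^{-i\mu x})_+=-c/(2\,\mathrm{Im}\,\mu)$: on the Fourier side this inner product is the Poisson integral of $\overline{\psi}$ at the point $\overline{\mu}\in\mathbb{C}_+$, and the Poisson formula for $H^\infty(\mathbb{C}_+)$ gives $c=\overline{\psi(\overline{\mu})}$. If you prefer to keep your density route, the corrected residue identity above is the missing piece; either way the conjugation bookkeeping has to be done, not postponed.
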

\begin{proof} 
It follows from \eqref{AGH121bc} that  
$\mathcal{B}^*=i\frac{d}{dx}, \ \mathcal{D}(\mathcal{B}^*)={{W}_{2}^{1}}({\mathbb R}_+)$. Therefore, \
$\ker({{\mathcal B}^*}-\mu{I})=\{c{e}^{-i{\mu}x} : c\in\mathbb{C}\}$. This means that, for all $u\in\mathcal{D}({\mathcal B})$, 
$$
(({\mathcal B}-\overline{\mu}I)u, \psi(\mathcal{B})^*e^{-i{\mu}x})_+=(\psi(\mathcal{B})({\mathcal B}-\overline{\mu}I)u, e^{-i{\mu}x})_+=(({\mathcal B}-\overline{\mu}I)\psi(\mathcal{B})u, e^{-i{\mu}x})_+=0.
$$
Hence $\psi(\mathcal{B})^*e^{-i{\mu}x}$ belongs to $\ker({{\mathcal B}^*}-\mu{I})$ and
\begin{equation}\label{bbb1}
(\psi(\mathcal{B})^*e^{-i{\mu}x}, e^{-i{\mu}x})_+=c(e^{-i{\mu}x}, e^{-i{\mu}x})_+=-\frac{c}{2Im\ \mu}.
\end{equation}
Using \eqref{AGHH} and taking into account that  $F\chi_{{\mathbb R}_+}(x)e^{-i{\mu}x}=\frac{i}{\sqrt{2\pi}}\cdot\frac{1}{\delta-\mu}$, 
we verify that the inner product 
$$
(\psi(\mathcal{B})^*e^{-i{\mu}x}, e^{-i{\mu}x})_+=(e^{-i{\mu}x}, \psi(\mathcal{B})e^{-i{\mu}x})_+=(F\chi_{{\mathbb R}_+}(x)e^{-i{\mu}x}, \psi(\delta)F\chi_{{\mathbb R}_+}(x)e^{-i{\mu}x})
$$ 
is equal to  ${\frac{1}{2\pi}\int_{-\infty}^\infty\frac{\overline{\psi(\delta)}}{(Re \ \mu-\delta)^2+(Im\ \mu)^2}d\delta}$.
The Poisson formula \cite[p.147]{Nik} and \eqref{bbb1} lead to the conclusion that
$$
c={\frac{1}{\pi}\int_{-\infty}^\infty\frac{-(Im \ \mu)\overline{\psi(\delta)}}{(Re \ \mu-\delta)^2+(Im\ \mu)^2}d\delta}=\overline{\psi(Re\ \mu-iIm\ \mu)}=\overline{\psi(\overline{\mu})}
$$
that completes the proof.
 \end{proof}
 
\begin{lemma}\label{GGG24} 
 Let  $B$ and $\psi(\mathcal{B})$ be defined by \eqref{e7} and \eqref{AGH77}, respectively. Then, for any $\mu\in\mathbb{C}_-$,
$$
\ker({B^*}^2-\mu^2I)=\ker({B}^*-\mu{I})= \psi({\mathcal B})\left\{ {\bf h}_{\mu}=\left[\begin{array}{c}
{\alpha}_\mu  \\
{\beta}_\mu
\end{array}\right]e^{-i{\mu}x}  \ : \  \alpha_\mu, \beta_\mu\in\mathbb{C} \right\}.
$$
\end{lemma}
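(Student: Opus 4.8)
The plan is to reduce everything to the ``free'' first-derivative operator by unitary equivalence. By \eqref{e7}--\eqref{AGH77} the operator $\psi(\mathcal{B})$ is block diagonal with isometric entries $\psi_i(\mathcal{B})$ that commute with $\mathcal{B}$; hence $\psi(\mathcal{B})^*\psi(\mathcal{B})=I$ and $\psi(\mathcal{B})\psi(\mathcal{B})^*=P_{\mathfrak{H}_0}$, the operator $\psi(\mathcal{B})$ commutes with the $\mathbb{C}^2$-valued operator $\mathcal{B}$, and (as already recorded in \eqref{pppp}) $B$ is the part of $\mathcal{B}$ in $\mathfrak{H}_0$. Regarding $U:=\psi(\mathcal{B})$ as a unitary operator from $L_2(\mathbb{R}_+,\mathbb{C}^2)$ onto $\mathfrak{H}_0$, with $U^{-1}=\psi(\mathcal{B})^*\!\upharpoonright_{\mathfrak{H}_0}$, one has $B=U\mathcal{B}U^{-1}$, and therefore, as operators acting in $\mathfrak{H}_0$,
\[
B^*=U\mathcal{B}^*U^{-1},\qquad {B^*}^2=U{\mathcal{B}^*}^2U^{-1},
\]
with domains $\psi(\mathcal{B})\mathcal{D}(\mathcal{B}^*)$ and $\psi(\mathcal{B})\mathcal{D}({\mathcal{B}^*}^2)$ respectively. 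Consequently, for every $\mu\in\mathbb{C}_-$,
\[
\ker(B^*-\mu I)=\psi(\mathcal{B})\ker(\mathcal{B}^*-\mu I),\qquad \ker({B^*}^2-\mu^2 I)=\psi(\mathcal{B})\ker({\mathcal{B}^*}^2-\mu^2 I).
\]

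It then remains to compute the kernels of the free operator explicitly. From \eqref{AGH121bc}, $\mathcal{B}^*=i\frac{d}{dx}$ with $\mathcal{D}(\mathcal{B}^*)=W_2^1(\mathbb{R}_+,\mathbb{C}^2)$, and $\mathcal{D}({\mathcal{B}^*}^2)=W_2^2(\mathbb{R}_+,\mathbb{C}^2)$ (no boundary conditions in either case). The equation $\mathcal{B}^*{\bf u}=\mu{\bf u}$ has the general solution ${\bf u}(x)=[\alpha,\beta]^{\top}e^{-i\mu x}$, while $-{\bf u}''=\mu^2{\bf u}$ has the general solution ${\bf u}(x)=[\alpha_1,\beta_1]^{\top}e^{-i\mu x}+[\alpha_2,\beta_2]^{\top}e^{i\mu x}$. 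Since $\mu\in\mathbb{C}_-$ means $\operatorname{Im}\mu<0$, one has $|e^{-i\mu x}|=e^{(\operatorname{Im}\mu)x}\in L_2(\mathbb{R}_+)$ but $|e^{i\mu x}|=e^{-(\operatorname{Im}\mu)x}\notin L_2(\mathbb{R}_+)$; the second exponential is thus excluded, so
\[
\ker(\mathcal{B}^*-\mu I)=\ker({\mathcal{B}^*}^2-\mu^2 I)=\Big\{[\alpha,\beta]^{\top}e^{-i\mu x}\ :\ \alpha,\beta\in\mathbb{C}\Big\}.
\]
Applying $\psi(\mathcal{B})$ and invoking the transport identities above yields the assertion. (Alternatively, the equality $\ker({B^*}^2-\mu^2 I)=\ker(B^*-\mu I)$ can be deduced directly from \eqref{GGG2}, since $B$, being unitarily equivalent to the maximal symmetric operator $\mathcal{B}$, is itself simple maximal symmetric with zero defect number in $\mathbb{C}_+$.)

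The only point requiring care is the bookkeeping around the fact that $\psi(\mathcal{B})$ is merely a partial isometry on the ambient space $L_2(\mathbb{R}_+,\mathbb{C}^2)$: the identities $B^*=\psi(\mathcal{B})\mathcal{B}^*\psi(\mathcal{B})^*$ and ${B^*}^2=\psi(\mathcal{B}){\mathcal{B}^*}^2\psi(\mathcal{B})^*$ must be understood as statements about operators acting in the subspace $\mathfrak{H}_0$, and it is precisely in verifying them (and the corresponding domain descriptions) that $\psi(\mathcal{B})^*\psi(\mathcal{B})=I$, $\psi(\mathcal{B})\psi(\mathcal{B})^*=P_{\mathfrak{H}_0}$ and the commutation of $\psi(\mathcal{B})$ with $\mathcal{B}$ are used. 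The case in which only one of $q_1,q_2$ is non-cyclic is treated analogously, with the absent component of $\psi(\mathcal{B})$ simply omitted.
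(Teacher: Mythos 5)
Your proof is correct and follows essentially the same route as the paper: transport the kernels through the isometry $\psi(\mathcal{B})$ via $B^*=\psi(\mathcal{B})\mathcal{B}^*\psi(\mathcal{B})^*$ (the paper's \eqref{AGH23}) and then identify $\ker(\mathcal{B}^*-\mu I)$ explicitly as the span of $e^{-i\mu x}$. The only cosmetic difference is that you verify $\ker({B^*}^2-\mu^2 I)=\ker(B^*-\mu I)$ by solving $-{\bf u}''=\mu^2{\bf u}$ and discarding the non-$L_2$ exponential, whereas the paper simply invokes \eqref{GGG2} — an alternative you also note.
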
 
 \begin{proof} The first identity follows from \eqref{GGG2}.
It follows from \eqref{e7} that 
 \begin{equation}\label{AGH23}
 B^*=\psi({\mathcal B}){\mathcal B}^*\psi({\mathcal B})^*,  \quad \mathcal{D}(B^*)=\psi({\mathcal B})\mathcal{D}({\mathcal B}^*)=\psi({\mathcal B})W_2^1({\mathbb R}_+, \mathbb{C}^2).   
 \end{equation}
By virtue of \eqref{AGH23} we conclude that  $\ker({B}^*-\mu{I})=\psi({\mathcal B})\ker(\mathcal{B}^*-\mu{I})$.
It follows from the proof of Lemma \ref{AGH44} that  $\ker(\mathcal{B}^*-\mu{I})$ coincides with the set
of vectors $\{{\bf h}_{\mu}\}$ defined above.
 \end{proof}
 
 \begin{corollary}\label{GGG15}
Let $\psi(\mathcal{B})$ be defined by \eqref{AGH77}. Then,  for any $\mu\in\mathbb{C}_-$,
\begin{equation}\label{AGH79}
\psi(\mathcal{B})^*{\bf e}^{-i{\mu}x}=
\overline{\left[\begin{array}{c}
 {\psi_1(\overline{\mu})} \\
{\psi_2(\overline{\mu})}
\end{array}\right]}e^{-i{\mu}x}, \quad 
\psi(\mathcal{B})^*{\bf u}_{\mu}=\left[\begin{array}{c}
c(\mu, q_1) \\
c_(\mu, q_2) 
\end{array}\right]e^{-i{\mu}x},
\end{equation}
where ${\bf u}_{\mu}$ is defined by \eqref{DDD}  and
\begin{equation}\label{AGH44b}
c(\mu, q_j)=\overline{{\psi_j(\overline{\mu})}}+2(Im \ \mu)(({H}_\infty-\mu^2I)^{-1}q_j, \psi_j(\mathcal{B})e^{-i\mu{x}})_+.
\end{equation}
\end{corollary}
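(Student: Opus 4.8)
The plan is to reduce both identities in \eqref{AGH79} to the scalar computation already performed in Lemma~\ref{AGH44}. By \eqref{AGH77} the operator $\psi(\mathcal{B})^{*}=\mathrm{diag}(\psi_{1}(\mathcal{B})^{*},\psi_{2}(\mathcal{B})^{*})$ is block-diagonal, and ${\bf e}^{-i\mu x}$ has both components equal to $e^{-i\mu x}$; hence the first identity is just Lemma~\ref{AGH44} applied separately in each coordinate. For the second identity I would substitute \eqref{DDD}, ${\bf u}_{\mu}={\bf e}^{-i\mu x}-({\bf H}_{\infty}-\mu^{2}I)^{-1}{\bf q}$, and use that ${\bf H}_{\infty}$ acts component-wise, so that the $j$-th coordinate of $({\bf H}_{\infty}-\mu^{2}I)^{-1}{\bf q}$ is $w_{j}:=(H_{\infty}-\mu^{2}I)^{-1}q_{j}$. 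Then the $j$-th coordinate of $\psi(\mathcal{B})^{*}{\bf u}_{\mu}$ equals $\psi_{j}(\mathcal{B})^{*}e^{-i\mu x}-\psi_{j}(\mathcal{B})^{*}w_{j}$, and by Lemma~\ref{AGH44} the first term is $\overline{\psi_{j}(\overline{\mu})}\,e^{-i\mu x}$; it remains to identify $\psi_{j}(\mathcal{B})^{*}w_{j}$.

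The key step is to show that $\psi_{j}(\mathcal{B})^{*}w_{j}$ lies in $\ker(\mathcal{B}^{*}-\mu I)=\{c\,e^{-i\mu x}:c\in\mathbb{C}\}$, the last equality being recorded in the proof of Lemma~\ref{AGH44}. I would combine three observations. First, since $\psi_{j}(\mathcal{B})$ commutes with $\mathcal{B}$, taking adjoints (most transparently on the Fourier side, where $\mathcal{B}$, $\psi_{j}(\mathcal{B})$ and their adjoints are all multiplication operators) shows that $\psi_{j}(\mathcal{B})^{*}$ commutes with $\mathcal{B}^{*}$, and therefore maps $\mathcal{D}(\mathcal{B}^{*2})=W_{2}^{2}(\mathbb{R}_{+})$ into itself with $\mathcal{B}^{*2}\psi_{j}(\mathcal{B})^{*}=\psi_{j}(\mathcal{B})^{*}\mathcal{B}^{*2}$ there (recall $\mathcal{B}^{*2}=-d^{2}/dx^{2}$). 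Second, $w_{j}\in\mathcal{D}(H_{\infty})\subset W_{2}^{2}(\mathbb{R}_{+})$ satisfies $-w_{j}''-\mu^{2}w_{j}=q_{j}$. Third, $\psi_{j}(\mathcal{B})^{*}q_{j}=0$ because $q_{j}$ is orthogonal to $\mathcal{R}(\psi_{j}(\mathcal{B}))=\psi_{j}(\mathcal{B})L_{2}(\mathbb{R}_{+})$ by the construction of $\psi_{j}$ in the proof of Theorem~\ref{yyyy} (cf.\ Lemma~\ref{zzzz}). Putting these together,
\[
(\mathcal{B}^{*2}-\mu^{2}I)\,\psi_{j}(\mathcal{B})^{*}w_{j}=\psi_{j}(\mathcal{B})^{*}\bigl(-w_{j}''-\mu^{2}w_{j}\bigr)=\psi_{j}(\mathcal{B})^{*}q_{j}=0 ,
\]
and since $\mathcal{B}$ has zero defect number in $\mathbb{C}_{+}$, relation \eqref{GGG2} for $\mathcal{B}$ gives $\ker(\mathcal{B}^{*2}-\mu^{2}I)=\ker(\mathcal{B}^{*}-\mu I)$, so $\psi_{j}(\mathcal{B})^{*}w_{j}=c_{j}\,e^{-i\mu x}$ for some $c_{j}\in\mathbb{C}$.

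It remains to compute $c_{j}$. Pairing with $e^{-i\mu x}$ and using $(e^{-i\mu x},e^{-i\mu x})_{+}=-1/(2\,\mathrm{Im}\,\mu)$ (as in the proof of Lemma~\ref{AGH44}) gives $c_{j}=-2(\mathrm{Im}\,\mu)(\psi_{j}(\mathcal{B})^{*}w_{j},e^{-i\mu x})_{+}=-2(\mathrm{Im}\,\mu)(w_{j},\psi_{j}(\mathcal{B})e^{-i\mu x})_{+}$, whence the $j$-th coordinate of $\psi(\mathcal{B})^{*}{\bf u}_{\mu}$ equals $\bigl(\overline{\psi_{j}(\overline{\mu})}+2(\mathrm{Im}\,\mu)(w_{j},\psi_{j}(\mathcal{B})e^{-i\mu x})_{+}\bigr)e^{-i\mu x}=c(\mu,q_{j})\,e^{-i\mu x}$ with $c(\mu,q_{j})$ as in \eqref{AGH44b}; the degenerate case from Theorem~\ref{yyyy} in which some $q_{j}$ is cyclic corresponds to $\psi_{j}=0$, and then both sides vanish in that coordinate. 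The only genuinely delicate point is the domain bookkeeping in the first observation: one must be sure that $\psi_{j}(\mathcal{B})^{*}$ maps $W_{2}^{1}(\mathbb{R}_{+})=\mathcal{D}(\mathcal{B}^{*})$, and hence $W_{2}^{2}(\mathbb{R}_{+})$, into itself, so that the displayed manipulation with $\mathcal{B}^{*2}$ is legitimate; everything else is a routine component-wise reduction to Lemma~\ref{AGH44}.
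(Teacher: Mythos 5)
Your proof is correct, and the first identity is handled exactly as in the paper (component\-wise application of Lemma~\ref{AGH44}). For the second identity the underlying idea is also the same — apply $\psi(\mathcal{B})^*$ to ${\bf u}_\mu$, observe that the result is annihilated by ${\mathcal{B}^*}^2-\mu^2 I$ because ${\bf q}$ is orthogonal to $\mathfrak{H}_0$, and then pin down the constant by pairing with $e^{-i\mu x}$ — but you justify the key intertwining step by a different route. The paper derives the global relation $P_{\mathfrak{H}_0}{\bf H}_{max}{\bf f}={B^*}^2P_{\mathfrak{H}_0}{\bf f}$ (its \eqref{GGG55}) from the inclusion ${\bf H}_{min}\supset B^2$ supplied by the boundary triplet, applies it to the eigenfunction ${\bf u}_\mu$, and then invokes Lemma~\ref{GGG24}; your argument is component\-wise and more elementary, resting on the ODE $-w_j''-\mu^2w_j=q_j$ together with $\psi_j(\mathcal{B})^*q_j=0$. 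The trade-off is that \eqref{GGG55} is reused later in the proof of Theorem~\ref{ups222}, so the paper's detour through ${\bf H}_{min}$ earns its keep, whereas your version is self-contained and makes the mechanism (the adjoint of $\psi_j(\mathcal{B})$ kills $q_j$) completely explicit. One small correction: the commutation of $\psi_j(\mathcal{B})^*$ with $\mathcal{B}^*$ is \emph{not} transparent on the Fourier side, since there $\mathcal{B}^*$ is multiplication by $\delta$ only up to the boundary term $-\tfrac{i}{\sqrt{2\pi}}u(0)$ and $\psi_j(\mathcal{B})^*$ corresponds to $g\mapsto P_{H^2(\mathbb{C}_+)}\overline{\psi_j(\delta)}g$, which is not a multiplication operator. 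The clean way to settle your ``delicate point'' is the duality computation
\begin{equation*}
(\mathcal{B}u,\psi_j(\mathcal{B})^*f)_+=(\psi_j(\mathcal{B})\mathcal{B}u,f)_+=(\mathcal{B}\psi_j(\mathcal{B})u,f)_+=(\psi_j(\mathcal{B})u,\mathcal{B}^*f)_+=(u,\psi_j(\mathcal{B})^*\mathcal{B}^*f)_+
\end{equation*}
for $u\in\mathcal{D}(\mathcal{B})$ and $f\in\mathcal{D}(\mathcal{B}^*)$, which shows $\psi_j(\mathcal{B})^*\mathcal{D}(\mathcal{B}^*)\subset\mathcal{D}(\mathcal{B}^*)$ together with $\mathcal{B}^*\psi_j(\mathcal{B})^*=\psi_j(\mathcal{B})^*\mathcal{B}^*$ there, and iterates to ${\mathcal{B}^*}^2$ on $W_2^2(\mathbb{R}_+)$. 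With that repair the argument is complete, including your remark on the degenerate case $\psi_j=0$.
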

\begin{proof}
The first relation in \eqref{AGH79} follows from  Lemma \ref{AGH44}. 

The function ${\bf u}_{\mu}$ in the second relation is an eigenfunction of the operator 
${\bf H}_{max}$ (see Lemma \ref{neww71}). Since $(\mathbb{C}, \Gamma_0, \Gamma_1)$ defined by \eqref{DDD5} is a boundary triplet of ${\bf H}_{max}$,
its adjoint ${\bf H}_{max}^*$ coincides with the symmetric operator 
${\bf H}_{min}={\bf H}_{max}\upharpoonright_{\ker\Gamma_0\cap\ker\Gamma_1}$.
Precisely, 
$$
{\bf H}_{min}=-\frac{d^2}{dx^2},  \quad {\mathcal{D}({\bf H}_{min})}=\{{\bf f}\in W_2^2(\mathbb{R}_+,\mathbb{C}^2): \ [{\bf f}]_r=0, \ 2[{\bf f'}]_r=({\bf f}, {\bf q})_+\}.
$$
Comparing this formula with \eqref{dddd} leads to the conclusion that ${\bf H}_{min}\supset{B^2}$, i.e.,  ${\bf H}_{min}$ is an extension of $B^2$ with the exit into the 
 space $L_2(\mathbb{R}_+, \mathbb{C}^2)$.  
 Then, for ${\bf f}\in\mathcal{D}({\bf H}_{max})$ and ${\bf u}\in\mathcal{D}(B^2)$,
$$
(P_{{\mathfrak H}_0}{\bf H}_{max}{\bf f}, {\bf u})_+=({\bf H}_{max}{\bf f}, {\bf u})_+=({\bf f}, {\bf H}_{min}{\bf u})_+=(P_{{\mathfrak H}_0}{\bf f}, B^2{\bf u})_{+}=
({{B^*}^2}P_{{\mathfrak H}_0}{\bf f}, {\bf u})_{+},
$$
where $P_{{\mathfrak H}_0}$ is the orthogonal projection in $L_2(\mathbb{R}_+, \mathbb{C}^2)$ on the subspace 
${{\mathfrak H}_0}$ defined by \eqref{AGH33b}.
 The obtained relation means that 
 \begin{equation}\label{GGG55}
P_{{\mathfrak H}_0}{\bf H}_{max}{\bf f}={{B^*}^2}P_{{\mathfrak H}_{0}}{\bf f}, \quad \mbox{for all} \quad {\bf f}\in\mathcal{D}({\bf H}_{max})=W_2^2({\mathbb R}_+, \mathbb{C}^2).
\end{equation}
Setting ${\bf f}={\bf u}_\mu$ in \eqref{GGG55} and taking into account that  ${\bf H}_{max}{\bf u}_\mu=\mu^2{\bf u}_\mu$, we obtain
$P_{{\mathfrak H}_0}{\bf H}_{max}{\bf u}_\mu={{B^*}^2}P_{{\mathfrak H}_0}{\bf u}_\mu={\mu}^2P_{{\mathfrak H}_0}{\bf u}_\mu$.
This relation and \eqref{GGG2} mean
$$
P_{{\mathfrak H}_0}{\bf u}_\mu\in\ker({B^*}^2-\mu^2I)=\ker({B^*}-\mu{I}).
$$  
In view of Lemma \ref{GGG24}, $P_{{\mathfrak H}_0}{\bf u}_\mu=\psi(\mathcal{B}){\bf h}_\mu$ for some choice of
${\bf h}_\mu=\left[\begin{array}{c}
{\alpha}_\mu  \\
{\beta}_\mu
\end{array}\right]e^{-i{\mu}x}$ or $\psi(\mathcal{B})\psi(\mathcal{B})^*{\bf u}_\mu=\psi(\mathcal{B}){\bf h}_\mu$  since  
$P_{{\mathfrak H}_0}=\psi(\mathcal{B})\psi(\mathcal{B})^*$. Therefore
 $\psi(\mathcal{B})^*{\bf u}_\mu={\bf h}_\mu$ that leads to the second relation in \eqref{AGH79} with unspecified parameters $\alpha_\mu$, $\beta_\mu$.
Taking  \eqref{DDD}  into account  and arguing by the analogy with the determination of $c$ in the proof of Lemma \ref{AGH44} 
we arrive at the conclusion that  $\alpha_\mu=c(\mu, q_1)$ and  $\beta_\mu=c(\mu, q_2)$,  where 
$c(\mu, q_i)$ are defined in \eqref{AGH44b}.
\end{proof}
 
\subsection{Positive boundary triplet} In view of Sec. \ref{sec2.2}, the $S$-matrix
can not be constructed  without finding the positive boundary triplet $({\mathcal H}, \Gamma_{0},\Gamma_{1})$
 of ${B^*}^2$. Since $B$ is the restriction of the first derivative operator  $\mathcal{B}$ on $\mathfrak{H}_0$, see \eqref{pppp},  one can try to
 express $({\mathcal H}, \Gamma_{0},\Gamma_{1})$  in terms of well-known positive boundary triplet  $({\mathcal H}', \Gamma_{0}',\Gamma_{1}')$ 
 of ${\mathcal{B}^*}^2$.
\begin{lemma}\label{AGH678}
 The following relations hold:
$$
{\mathcal H}=\psi({\mathcal B}){\mathcal H}', \quad \Gamma_0\psi({\mathcal B})=\psi({\mathcal B})\Gamma_0'  \quad  \Gamma_1\psi({\mathcal B})=\psi({\mathcal B})\Gamma_1'. 
$$
\end{lemma}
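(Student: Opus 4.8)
The plan is to show that $\psi(\mathcal{B})$ implements a unitary equivalence between $\mathcal{B}$ acting in $L_2(\mathbb{R}_+,\mathbb{C}^2)$ and $B$ acting in $\mathfrak{H}_0$, under which the whole construction \eqref{bonn41b}--\eqref{e7b} of the positive boundary triplet is transported verbatim; the three claimed identities are then exactly the instances of this naturality. Here $(\mathcal{H}',\Gamma_0',\Gamma_1')$ is understood to be the triplet obtained from $\mathcal{B}$ by the recipe \eqref{bonn41b}--\eqref{e7b}, that is $\mathcal{H}'=\ker({\mathcal{B}^*}^2+I)$, $\mathcal{D}({\mathcal{B}^*}^2)=\mathcal{D}(\mathcal{B}^*\mathcal{B})\dot{+}\mathcal{H}'$, and $\Gamma_0'(u'+h')=h'$, $\Gamma_1'(u'+h')=P_{\mathcal{H}'}(\mathcal{B}^*\mathcal{B}+I)u'$.

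First I would record the operator identities. By the proof of Theorem \ref{yyyy} the operator $\psi(\mathcal{B})$ is an isometry of $L_2(\mathbb{R}_+,\mathbb{C}^2)$ onto $\mathfrak{H}_0$ commuting with $\mathcal{B}$, so $\psi(\mathcal{B})^*\psi(\mathcal{B})=I$ and $\psi(\mathcal{B})\psi(\mathcal{B})^*=P_{\mathfrak{H}_0}$. Together with \eqref{e7} and \eqref{AGH23} this gives $B^*=\psi(\mathcal{B})\mathcal{B}^*\psi(\mathcal{B})^*$, hence $B^*B=\psi(\mathcal{B})(\mathcal{B}^*\mathcal{B})\psi(\mathcal{B})^*$ and ${B^*}^2=\psi(\mathcal{B}){\mathcal{B}^*}^2\psi(\mathcal{B})^*$, with domains related by $\mathcal{D}(B^*B)=\psi(\mathcal{B})\mathcal{D}(\mathcal{B}^*\mathcal{B})$ and $\mathcal{D}({B^*}^2)=\psi(\mathcal{B})\mathcal{D}({\mathcal{B}^*}^2)$. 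Injectivity of $\psi(\mathcal{B})$ then yields $\mathcal{H}=\ker({B^*}^2+I)=\psi(\mathcal{B})\ker({\mathcal{B}^*}^2+I)=\psi(\mathcal{B})\mathcal{H}'$, which is the first relation; and, since $\psi(\mathcal{B})$ is isometric onto $\mathfrak{H}_0$ and $\mathcal{H}\subset\mathfrak{H}_0$, the orthogonal projector $P_{\mathcal{H}}$ of $\mathfrak{H}_0$ onto $\mathcal{H}$ equals $\psi(\mathcal{B})P_{\mathcal{H}'}\psi(\mathcal{B})^*$.

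Next I would transport the direct-sum decomposition. For $f\in\mathcal{D}({B^*}^2)$ put $g=\psi(\mathcal{B})^*f\in\mathcal{D}({\mathcal{B}^*}^2)$ and write $g=u'+h'$ with $u'\in\mathcal{D}(\mathcal{B}^*\mathcal{B})$, $h'\in\mathcal{H}'$. Applying $\psi(\mathcal{B})$ gives $f=\psi(\mathcal{B})u'+\psi(\mathcal{B})h'$ with $\psi(\mathcal{B})u'\in\mathcal{D}(B^*B)$ and $\psi(\mathcal{B})h'\in\mathcal{H}$, so uniqueness of the decomposition \eqref{bonn41b} for $B$ forces $u=\psi(\mathcal{B})u'$ and $h=\psi(\mathcal{B})h'$. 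Then \eqref{e7b} gives $\Gamma_0 f=h=\psi(\mathcal{B})h'=\psi(\mathcal{B})\Gamma_0'g$, and, using $B^*B=\psi(\mathcal{B})(\mathcal{B}^*\mathcal{B})\psi(\mathcal{B})^*$ together with $P_{\mathcal{H}}=\psi(\mathcal{B})P_{\mathcal{H}'}\psi(\mathcal{B})^*$ and $\psi(\mathcal{B})^*\psi(\mathcal{B})=I$, $\Gamma_1 f=P_{\mathcal{H}}(B^*B+I)u=\psi(\mathcal{B})P_{\mathcal{H}'}(\mathcal{B}^*\mathcal{B}+I)u'=\psi(\mathcal{B})\Gamma_1'g$. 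As $f=\psi(\mathcal{B})g$ with $g$ ranging over all of $\mathcal{D}({\mathcal{B}^*}^2)$, these identities are precisely $\Gamma_0\psi(\mathcal{B})=\psi(\mathcal{B})\Gamma_0'$ and $\Gamma_1\psi(\mathcal{B})=\psi(\mathcal{B})\Gamma_1'$.

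I do not anticipate a real obstacle; the only thing requiring care is the bookkeeping of domains and of the non-orthogonal direct sum $\mathcal{D}(B^*B)\dot{+}\mathcal{H}$, but all of it follows immediately once one notes that $\psi(\mathcal{B})$ is a unitary map onto $\mathfrak{H}_0$ intertwining $\mathcal{B}$ and $B$. Under that observation the lemma simply asserts that the recipe \eqref{bonn41b}--\eqref{e7b} is natural with respect to the unitary equivalence $\psi(\mathcal{B})$.
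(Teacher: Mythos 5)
Your proposal is correct and follows essentially the same route as the paper's proof: both rest on the identities ${B^*}^2=\psi(\mathcal{B}){\mathcal{B}^*}^2\psi(\mathcal{B})^*$ and $B^*B=\psi(\mathcal{B})\mathcal{B}^*\mathcal{B}\psi(\mathcal{B})^*$ with the corresponding domain relations, transport the decomposition \eqref{bonn41b} through $\psi(\mathcal{B})$, and use the intertwining $P_{\mathcal{H}}\psi(\mathcal{B})=\psi(\mathcal{B})P_{\mathcal{H}'}$ to handle $\Gamma_1$. The only cosmetic difference is that you justify the projector relation via $P_{\mathcal{H}}=\psi(\mathcal{B})P_{\mathcal{H}'}\psi(\mathcal{B})^*$ while the paper phrases it as $\psi(\mathcal{B})P_{\mathcal{H}'}=P_{\mathcal{H}}\psi(\mathcal{B})$; these are equivalent.
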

\begin{proof} 
 It follows from \eqref{AGH23} that 
  \begin{equation}\label{AGH23b}
 {B^*}^2=\psi({\mathcal B}){{\mathcal B}^*}^2\psi({\mathcal B})^*,  \quad \mathcal{D}({B^*}^2)=\psi({\mathcal B})\mathcal{D}({{\mathcal B}^*}^2)=\psi({\mathcal B})W_2^2({\mathbb R}_+, \mathbb{C}^2)   
 \end{equation}
By definition ${\mathcal H}=\ker({B^*}^2+I)$ and ${\mathcal H}'=\ker({{\mathcal B}^*}^2+I)$. Using  \eqref{AGH23b}, we obtain
$$
{\mathcal H}=\ker({B^*}^2+I)=\psi({\mathcal B})\ker({{\mathcal B}^*}^2+I)=\psi({\mathcal B}){\mathcal H}'.
$$
It follows from \eqref{e7} and \eqref{AGH23} that
\begin{equation}\label{GGG1}
B^*B=\psi({\mathcal B}){\mathcal B}^*{\mathcal B}\psi({\mathcal B})^*, \qquad \mathcal{D}(B^*B)=\psi({\mathcal B})\mathcal{D}({\mathcal B}^*{\mathcal B})
\end{equation}

For brevity, we denote $V=\psi({\mathcal B})$  and consider ${\bf f}\in \mathcal{D}({{\mathcal B}^*}^2)$. 
Then ${\bf f}={\bf u}+{\bf h}$,  where ${\bf u}\in\mathcal {D}({\mathcal B}^*{\mathcal B})$ 
and ${\bf h}\in\mathcal{ H}'$.  By virtue of \eqref{AGH23b}, \eqref{GGG1}, $V{\bf f}\in\mathcal{D}({B^*}^2)$  and $V{\bf f}=V{\bf u}+V{\bf h}$, where
 $V{\bf u}\in \mathcal{D}(B^*B)$ and $V{\bf h}\in{\mathcal H}$.  In view of \eqref{e7b},
 $\Gamma_0V{\bf f}=V{\bf h}=V\Gamma_0'{\bf f}.$

Since ${\mathcal H}=V{\mathcal H}'$ and $\mathcal{R}(B^2+{I})=V\mathcal{R}({\mathcal B}^2+{I})$, 
the orthogonal projectors $P_{\mathcal H}$ and $P_{{\mathcal H}'}$ are related as follows: $VP_{{\mathcal H}'}=P_{\mathcal H}V$. 
Therefore, 
$$
\Gamma_1V{\bf f}=P_{\mathcal H}(B^*B+I)V{\bf u}=P_{\mathcal H}(V{\mathcal B}^*{\mathcal B}V^*+I)V{\bf u}=P_{\mathcal H}V({\mathcal B}^*{\mathcal B}+I){\bf u}=V\Gamma_1'{\bf f}
$$
that completes the proof.
\end{proof}

\begin{corollary}\label{kkkk}
The positive boundary triplet $({\mathcal H}, \Gamma_{0},\Gamma_{1})$ of ${B^*}^2$ consists of the space 
$$
{\mathcal H}=\psi({\mathcal B})\left\{\left[\begin{array}{c}
{\alpha}  \\
{\beta}
\end{array}\right]e^{-x} \ : \  \alpha, \beta\in\mathbb{C} \right\}
$$
and the mappings $\Gamma_{i} : \psi({\mathcal B})W_2^2({\mathbb R}_+, \mathbb{C}^2) \to {\mathcal H}$ that are defined as follows:
$$
\Gamma_{0}\psi(\mathcal{B}){\bf f}(x)=\psi({\mathcal B}){\bf f}(0)e^{-x},  \qquad \Gamma_{1}\psi(\mathcal{B}){\bf f}(x)=2\psi(\mathcal{B})[{\bf f}'(0)+{\bf f}(0)]e^{-x}.
$$
\end{corollary}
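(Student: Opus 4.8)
The proof reduces the corollary to Lemma~\ref{AGH678}: once we exhibit an explicit positive boundary triplet $({\mathcal H}',\Gamma_0',\Gamma_1')$ of ${\mathcal B^*}^2$, where $\mathcal B$ is the first-derivative operator on $L_2(\mathbb R_+,\mathbb C^2)$ of \eqref{AGH121bc}, the intertwining relations ${\mathcal H}=\psi(\mathcal B){\mathcal H}'$ and $\Gamma_i\psi(\mathcal B)=\psi(\mathcal B)\Gamma_i'$ of that lemma transport it to the asserted triplet of ${B^*}^2$. So the plan is: first compute the primed triplet from the general recipe \eqref{bonn41b}--\eqref{e7b} applied with $B$ replaced by $\mathcal B$, and then substitute into Lemma~\ref{AGH678}.

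For the first step, \eqref{AGH121bc} gives $\mathcal B^*=i\,d/dx$ with domain $W_2^1(\mathbb R_+,\mathbb C^2)$, hence $\mathcal B^*\mathcal B=-d^2/dx^2$ on $\{{\bf u}\in W_2^2(\mathbb R_+,\mathbb C^2):{\bf u}(0)=0\}$ and ${\mathcal B^*}^2=-d^2/dx^2$ on all of $W_2^2(\mathbb R_+,\mathbb C^2)$. The subspace ${\mathcal H}'=\ker({\mathcal B^*}^2+I)$ is spanned componentwise by $e^{-x}$, i.e.\ it consists of the vectors $\left[\begin{array}{c}\alpha\\\beta\end{array}\right]e^{-x}$ with $\alpha,\beta\in\mathbb C$. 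For ${\bf f}\in W_2^2(\mathbb R_+,\mathbb C^2)$ the decomposition ${\bf f}={\bf u}+{\bf h}$ with ${\bf u}\in\mathcal D(\mathcal B^*\mathcal B)$ and ${\bf h}\in{\mathcal H}'$ forces ${\bf h}={\bf f}(0)e^{-x}$, so that $\Gamma_0'{\bf f}={\bf f}(0)e^{-x}$ while ${\bf u}(0)=0$ and ${\bf u}'(0)={\bf f}'(0)+{\bf f}(0)$.

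It remains to compute $\Gamma_1'{\bf f}=P_{{\mathcal H}'}(\mathcal B^*\mathcal B+I){\bf u}=P_{{\mathcal H}'}(-{\bf u}''+{\bf u})$, which is the only genuine computation here. Working componentwise in the one-dimensional span of $e^{-x}$, for which $(e^{-x},e^{-x})_+=\frac12$, a double integration by parts using $(e^{-x})'=-e^{-x}$, ${\bf u}(0)=0$, and the vanishing of ${\bf u},{\bf u}'$ at $+\infty$ makes the interior integrals cancel and leaves $(-{\bf u}''+{\bf u},e^{-x})_+={\bf u}'(0)+{\bf u}(0)={\bf f}'(0)+{\bf f}(0)$; hence $\Gamma_1'{\bf f}=2\big({\bf f}'(0)+{\bf f}(0)\big)e^{-x}$. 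Feeding these three ingredients into Lemma~\ref{AGH678} — that is, applying $\Gamma_i\psi(\mathcal B)=\psi(\mathcal B)\Gamma_i'$ to an arbitrary ${\bf f}\in W_2^2(\mathbb R_+,\mathbb C^2)$ and using ${\mathcal H}=\psi(\mathcal B){\mathcal H}'$ together with $\mathcal D({B^*}^2)=\psi(\mathcal B)W_2^2(\mathbb R_+,\mathbb C^2)$ from \eqref{AGH23b} — yields exactly the formulas in the statement. The only point demanding care is the bookkeeping of the boundary terms in that integration by parts (and reading off ${\bf u}$ and its derivative at $0$ correctly from ${\bf f}$); everything else is direct substitution.
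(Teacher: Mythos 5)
Your proposal is correct and follows essentially the same route as the paper: obtain the positive boundary triplet $({\mathcal H}',\Gamma_0',\Gamma_1')$ of ${\mathcal B^*}^2$ and transport it via the intertwining relations of Lemma~\ref{AGH678}. The only difference is that the paper simply cites the form of the primed triplet as well known (from \cite{CK}), whereas you rederive it from the general recipe \eqref{bonn41b}--\eqref{e7b}; your integration-by-parts computation of $\Gamma_1'$ checks out.
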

\begin{proof}
It is well known  (see, e.g., \cite{CK})  that the positive boundary triplet $({\mathcal H}', \Gamma_{0}',\Gamma_{1}')$
of ${{\mathcal B}^*}^2$ has the form:
${\mathcal H}'=\left\{\left[\begin{array}{c}
{\alpha}  \\
{\beta}
\end{array}\right]e^{-x}  :   \alpha, \beta\in\mathbb{C} \right\}$  and
 $$
\Gamma_0'{\bf f}={\bf f}(0)e^{-x}, \qquad  \Gamma_1{\bf f}=2[{\bf f}'(0)+{\bf f}(0)]e^{-x},  \qquad {\bf f}\in{W_2^2(\mathbb{R}_+, \mathbb{C}^2)}.
$$
Applying Lemma \ref{AGH678} we complete the proof.
\end{proof}

\subsection{The $S$-matrix for positive self-adjoint ${\bf H}_{a{\bf q}}$}
 \begin{theorem}\label{ups222}
The $S$-matrix for positive self-adjoint operator ${\bf H}_{a{\bf q}}$ has the form
\begin{equation}\label{ups4}
S(z)=\left[\begin{array}{cc}
\Psi_1(z)  &  0 \\
0 &  \Psi_2(z) \end{array}\right]-\frac{2zi}{a-W(z^2)}\left[\begin{array}{cc}
c(z, q_1)\overline{c(-\overline{z}, q_1)}    &  c(z, q_1)\overline{c(-\overline{z}, q_2)} \vspace{3mm} \\
c(z, q_2)\overline{c(-\overline{z}, q_1)}   &  c(z, q_2)\overline{c(-\overline{z}, q_2)} 
\end{array}
\right],
\end{equation}
where $c(\mu, q_i)$ are determined by \eqref{AGH44b} 
and $\Psi_j(z)$ are  holomorphic  continuations of the functions ${\psi_j(-\delta)}/{\psi_j(\delta)}$ $(\delta\in\mathbb{R})$ into $\mathbb{C}_-$  
such that $|\Psi_j(z)|<1$ and $\overline{\Psi_j(z)}=\Psi_j(-\overline{z})$.
\end{theorem}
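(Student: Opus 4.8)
To prove \eqref{ups4} the plan is to start from the general representation \eqref{red1} of Theorem~\ref{esse3} and make $C(z)$ explicit, using the concrete operator $B$ of \eqref{pppp}, the positive boundary triplet of Corollary~\ref{kkkk} and the description of $\ker(B^\ast+\overline{z}I)$ in Lemma~\ref{GGG24}. First I would identify the model space with $\mathbb{C}^2$ via $\psi(\mathcal B)\binom{\alpha}{\beta}e^{-x}\leftrightarrow\binom{\alpha}{\beta}$, so that $C(z)$ becomes a $2\times2$ matrix. Since $\ker(B^\ast+\overline{z}I)=\ker(B^\ast-(-\overline{z})I)=\psi(\mathcal B)\{\binom{\alpha}{\beta}e^{i\overline{z}x}\}$ by Lemma~\ref{GGG24}, the problem reduces to the following: for ${\bf g}=\psi(\mathcal B)\binom{\alpha}{\beta}e^{i\overline{z}x}$, compute $u=P_{\mathfrak H_0}({\bf H}_{a{\bf q}}-z^2I)^{-1}{\bf g}$ and the boundary vectors $\Gamma_0u,\Gamma_1u$ provided by Corollary~\ref{kkkk}, read off $C(z)$ from $C(z)\Gamma_1u=\Gamma_0u$, and substitute into \eqref{red1}.

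Next I would insert the Krein--Naimark formula \eqref{DDD4}, $({\bf H}_{a{\bf q}}-z^2I)^{-1}{\bf g}=({\bf H}_\infty-z^2I)^{-1}{\bf g}+\frac{({\bf g},{\bf u}_{-\overline{z}})_+}{a-W(z^2)}{\bf u}_z$, and use $\psi(\mathcal B)^\ast P_{\mathfrak H_0}=\psi(\mathcal B)^\ast$, so that $w:=\psi(\mathcal B)^\ast u=\psi(\mathcal B)^\ast({\bf H}_\infty-z^2I)^{-1}{\bf g}+\frac{({\bf g},{\bf u}_{-\overline{z}})_+}{a-W(z^2)}\psi(\mathcal B)^\ast{\bf u}_z$. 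By Corollary~\ref{GGG15} the last term equals $\frac{({\bf g},{\bf u}_{-\overline{z}})_+}{a-W(z^2)}\binom{c(z,q_1)}{c(z,q_2)}e^{-izx}$, while $({\bf g},{\bf u}_{-\overline{z}})_+=\bigl(\binom{\alpha}{\beta}e^{i\overline{z}x},\binom{c(-\overline{z},q_1)}{c(-\overline{z},q_2)}e^{i\overline{z}x}\bigr)_+$ is a positive multiple of $\overline{c(-\overline{z},q_1)}\alpha+\overline{c(-\overline{z},q_2)}\beta$; together these account for the rank-one term in \eqref{ups4}. The core is the ``free'' summand $w^0:=\psi(\mathcal B)^\ast({\bf H}_\infty-z^2I)^{-1}{\bf g}$. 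Because $\psi(\mathcal B)$ does not commute with the Dirichlet resolvent, I would compute $w^0$ componentwise on the Fourier side, in the spirit of Lemma~\ref{AGH44}: the requirement that $(H_\infty-z^2I)^{-1}\bigl(c_j\psi_j(\mathcal B)e^{i\overline{z}x}\bigr)$ lie in $W_2^2(\mathbb R_+)$ pins down its derivative at $0$ in terms of $\psi_j(-z)$, and then applying the Toeplitz operator $\psi_j(\mathcal B)^\ast$ (multiplication by $\overline{\psi_j}$ followed by the Riesz projection onto $H^2(\mathbb C_+)$) yields $w^0_j=\frac{c_j}{\overline{z}^2-z^2}e^{i\overline{z}x}+d_je^{-izx}$ with $d_j=\frac{c_j}{2z}\bigl(\frac{1}{z+\overline{z}}-\frac{\Psi_j(z)}{\overline{z}-z}\bigr)$, where $\Psi_j(z)=\psi_j(-z)\overline{\psi_j(\overline{z})}$; on $\mathbb R$ this is $\psi_j(-\delta)/\psi_j(\delta)$, so it is indeed the holomorphic continuation demanded in the statement. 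This free-resolvent computation is, I expect, the main obstacle, since it is the only place where the inner functions $\psi_j$ enter in a nontrivial way; the identity $({B^\ast}^2-z^2I)u={\bf g}$ coming from \eqref{GGG55} is a useful cross-check, as it already forces $w$ to have this two-exponential form.

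Collecting terms, $w_j=\frac{c_j}{\overline{z}^2-z^2}e^{i\overline{z}x}+\bigl(d_j+\kappa\,c(z,q_j)\bigr)e^{-izx}$ with $\kappa=\frac{({\bf g},{\bf u}_{-\overline{z}})_+}{a-W(z^2)}$, and Corollary~\ref{kkkk} gives, under $\mathcal H\cong\mathbb C^2$, $\Gamma_0u\leftrightarrow w(0)$ and $\Gamma_1u\leftrightarrow 2(w'(0)+w(0))$. Evaluating these boundary vectors (using $\frac{1}{\overline{z}^2-z^2}+\frac{1}{2z(z+\overline{z})}=\frac{1}{2z(\overline{z}-z)}$) gives $w(0)=\mu_0\Lambda\binom{\alpha}{\beta}+\kappa\binom{c(z,q_1)}{c(z,q_2)}$ and $w'(0)+w(0)=\mu_0 M\binom{\alpha}{\beta}+(1-iz)\kappa\binom{c(z,q_1)}{c(z,q_2)}$, where $\mu_0=\frac{1}{2z(\overline{z}-z)}$, $\Lambda=\operatorname{diag}(1-\Psi_1,1-\Psi_2)$ and $M=\operatorname{diag}\bigl((1+iz)-(1-iz)\Psi_1,\,(1+iz)-(1-iz)\Psi_2\bigr)$. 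Writing $P=\mu_0\Lambda+r$ and $Q=\mu_0 M+(1-iz)r$, where $r$ is the rank-one matrix $\binom{c(z,q_1)}{c(z,q_2)}\bigl(\overline{c(-\overline{z},q_1)},\ \overline{c(-\overline{z},q_2)}\bigr)$ scaled so that $\kappa\binom{c(z,q_1)}{c(z,q_2)}=r\binom{\alpha}{\beta}$, one has $C(z)=\tfrac12 PQ^{-1}$ and hence $S(z)=[Q-(1+iz)P][Q-(1-iz)P]^{-1}$. Since $M-(1-iz)\Lambda=2iz\,I$ and $M-(1+iz)\Lambda=2iz\operatorname{diag}(\Psi_1,\Psi_2)$, a short matrix computation gives $Q-(1-iz)P=2iz\mu_0 I$ and $Q-(1+iz)P=2iz\mu_0\operatorname{diag}(\Psi_1,\Psi_2)-2izr$, so the scalar factor hidden in $r$ cancels against $2iz\mu_0$ and $S(z)=\operatorname{diag}(\Psi_1(z),\Psi_2(z))-\frac{2zi}{a-W(z^2)}\binom{c(z,q_1)}{c(z,q_2)}\bigl(\overline{c(-\overline{z},q_1)},\ \overline{c(-\overline{z},q_2)}\bigr)$, i.e.\ \eqref{ups4}. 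The listed properties of $\Psi_j$ then follow at once from $\Psi_j(z)=\psi_j(-z)\overline{\psi_j(\overline{z})}$: holomorphy in $\mathbb C_-$ and the boundary values on $\mathbb R$ are immediate, $|\Psi_j(z)|<1$ because $-z,\overline{z}\in\mathbb C_+$ and a nonconstant inner function has modulus $<1$ there, and $\overline{\Psi_j(z)}=\overline{\psi_j(-z)}\,\psi_j(\overline{z})=\Psi_j(-\overline{z})$; alternatively these are instances of the relations $S^\ast(z)=S(-\overline{z})$ and $\|S(z)\|\le1$ recorded after Theorem~\ref{esse3}.
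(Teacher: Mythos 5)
Your proposal is correct and follows the same overall strategy as the paper: reduce to computing $C(z)$ from \eqref{tttt} via the boundary triplet of Corollary \ref{kkkk}, use \eqref{GGG55} to force the two-exponential form of $\psi(\mathcal B)^*{\bf u}$, split the resolvent by the Krein--Naimark formula \eqref{DDD4}, and identify the rank-one term through Corollary \ref{GGG15}; your closing identity $S(z)=[Q-(1+iz)P][Q-(1-iz)P]^{-1}$ is exactly the paper's passage from \eqref{AGH35c} to \eqref{ggg11d}, and your $P$, $Q$ and the normalisation of $r$ (via $\|e^{i\overline z x}\|_+^2=-1/(2\,\mathrm{Im}\,z)$) do reproduce \eqref{ups4} exactly. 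The one genuine divergence is the treatment of the free summand $\psi(\mathcal B)^*({\bf H}_\infty-z^2I)^{-1}\psi(\mathcal B){\bf h}_{-\overline z}$: the paper does not compute it at all, but instead invokes \cite[Proposition 3.1]{Gawlik}, which says that replacing $\mathcal B$ by its restriction $B$ multiplies the $S$-matrix of ${\bf H}_\infty$ by $\mathrm{diag}(\Psi_1,\Psi_2)$ --- this citation is also where the identification of $\Psi_j$ as the continuation of $\psi_j(-\delta)/\psi_j(\delta)$ and the properties $|\Psi_j|<1$, $\overline{\Psi_j(z)}=\Psi_j(-\overline z)$ come from. You propose instead a direct Fourier-side computation yielding $d_j=\frac{c_j}{2z}\bigl(\frac{1}{z+\overline z}-\frac{\Psi_j(z)}{\overline z-z}\bigr)$ with $\Psi_j(z)=\psi_j(-z)\overline{\psi_j(\overline z)}$; you flag this as the main obstacle and do not carry it out in full, but the value you state is the correct one (it agrees with what \eqref{ups23} forces, since $(\overline z^2-z^2)d_j=-c_j(\Psi_j(z)\,\mathrm{Re}\,z+i\,\mathrm{Im}\,z)/z$ is precisely $\alpha_z({\bf H}_\infty)$ in \eqref{ggg11d}), and since the two-exponential form is already forced by $(\mathcal{B}^{*2}-z^2I)w^0=c_je^{i\overline z x}$ together with square-integrability, only the single scalar $d_j$ remains to be computed, e.g.\ by a Poisson-formula argument as in Lemma \ref{AGH44}. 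Your route buys a self-contained proof at the cost of one extra explicit resolvent computation; the paper's buys brevity at the cost of an external citation.
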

\begin{proof}
By Theorem \ref{esse3}, for the calculation of $S$-matrix, one need to
 find operators $C(z)$ in \eqref{tttt}. To do that we analyze vectors 
 $$
 {\bf u}{\in}P_{{\mathfrak H}_0}({{\bf H}_{a{\bf q}}}-z^2{I})^{-1}\ker(B^*+\overline{z}I)
 $$
 in more detail.    First of all we note that $\ker(B^*+\overline{z}I)=\psi({\mathcal B})\{{\bf h}_{-\overline{z}}\}$ by Lemma \ref{GGG24}.
 Consider the equation\footnote{The coefficient $(\overline{z}^2-z^2)$ is used for the simplification of formulas below.}
\begin{equation}\label{AGH101}
({\bf H}_{a{\bf q}}-z^2{I}){\bf f}=(\overline{z}^2-z^2)\psi({\mathcal B}){\bf h}_{-\overline{z}},  \qquad  z\in\mathbb{C}_-\setminus{i\mathbb{R}_-}.
\end{equation}
Its solution ${\bf f}\in\mathcal{D}({\bf H}_{a{\bf q}})$ is determined uniquely and 
\begin{equation}\label{AGHNEW1}
{\bf u}=P_{{\mathfrak H}_{0}}{\bf f}=(\overline{z}^2-z^2)P_{{\mathfrak H}_{0}}({\bf H}_{a{\bf q}}-z^2{I})^{-1}\psi({\mathcal B}){\bf h}_{-\overline{z}}
\end{equation}
belongs to $\mathcal{D}({B^*}^2)$ due to \eqref{GGG55}.
In view of \eqref{AGH23b},  ${\bf u}=\psi({\mathcal B}){\bf v}$, where ${\bf v}\in{W_2^2(\mathbb{R}_+, \mathbb{C}^2)}$ and
${B^*}^2\psi({\mathcal B}){\bf v}=\psi({\mathcal B}){\mathcal{B}^*}^2{\bf v}$.
Moreover, since $P_{\mathfrak{H}_0}=\psi(\mathcal{B})\psi(\mathcal{B})^*$, the relation \eqref{AGHNEW1} yields
\begin{equation}\label{AGHNEW1BB}
{\bf v}=(\overline{z}^2-z^2){\psi}(\mathcal{B})^*({\bf H}_{a{\sf q}}-z^2{I})^{-1}{\psi}(\mathcal{B}){\bf h}_{-\overline{z}}.
\end{equation}

Applying $P_{{\mathfrak H}_{0}}$ to the both parts of \eqref{AGH101} and using \eqref{GGG55} we obtain
$$
({B^*}^2-z^2I){\bf u}=\psi({\mathcal B})({{\mathcal B}^*}^2-z^2I){\bf v}=(\overline{z}^2-z^2)\psi({\mathcal B}){\bf h}_{-\overline{z}}.
$$
Therefore,  $({{\mathcal B}^*}^2-z^2I){\bf v}=(-\frac{d^2}{dx^2}-z^2I){\bf v}=(\overline{z}^2-z^2){\bf h}_{-\overline{z}}$. 
This means that 
\begin{equation}\label{AGHNEW1CC}
{\bf v}={\bf h}_{-\overline{z}}+{\bf h}_{z},  \qquad {\bf u}=\psi({\mathcal B}){\bf v}=\psi({\mathcal B}){\bf h}_{-\overline{z}}+\psi({\mathcal B}){\bf h}_{z},
\end{equation}
where ${\bf h}_{z}\in\ker(B^*-zI)$ is determined uniquely by the choice of ${\bf h}_{-\overline{z}}$.
Applying operators $\Gamma_i$ from Corollary \ref{kkkk} we obtain 
$$
\Gamma_0{\bf u}=\psi({\mathcal B})\left[\begin{array}{c}
\alpha_{-\overline{z}}+\alpha_z  \\
\beta_{-\overline{z}}+\beta_z
\end{array}\right]e^{-x}, \quad \Gamma_1{\bf u}=2\psi({\mathcal B})\left[\begin{array}{c}
(1+i\overline{z})\alpha_{-\overline{z}}+(1-iz)\alpha_z  \\
(1+i\overline{z})\beta_{-\overline{z}}+(1-iz)\beta_z
\end{array}\right]e^{-x}. 
$$
Since $\dim{\mathcal H}=2$, the function  ${C}(z)$ in Theorem \ref{esse3} is  
$2\times{2}$-matrix-valued.  The substitution of $\Gamma_i{\bf u}$ into the characteristic relation \eqref{tttt} gives
$$
2{C}(z)\left[\begin{array}{c}
(1+i\overline{z})\alpha_{-\overline{z}}+(1-iz)\alpha_z  \\
(1+i\overline{z})\beta_{-\overline{z}}+(1-iz)\beta_z
\end{array}\right]=\left[\begin{array}{c}
\alpha_{-\overline{z}}+\alpha_z  \\
\beta_{-\overline{z}}+\beta_z
\end{array}\right]
$$
and, after elementary transformations, 
\begin{equation}\label{AGH35c}
[I-2(1-iz){C}(z)]^{-1}\left[\begin{array}{c}
\alpha_{-\overline{z}} \\
\beta_{-\overline{z}}
\end{array}\right]=\frac{1}{2i{Re \ z}}\left[\begin{array}{c}
(1+i\overline{z})\alpha_{-\overline{z}} + (1-iz)\alpha_z \\
(1+i\overline{z})\beta_{-\overline{z}} + (1-iz)\beta_z
\end{array}\right]. 
\end{equation}
The substitution of \eqref{AGH35c} into \eqref{red1} gives the $S$-matrix
\begin{equation}\label{ggg11d}
S(z)\left[\begin{array}{c}
\alpha_{-\overline{z}} \\
\beta_{-\overline{z}}
\end{array}\right]=-i\frac{Im \ z}{Re \ z}\left[\begin{array}{c}
\alpha_{-\overline{z}} \\
\beta_{-\overline{z}} 
\end{array}\right]-\frac{z}{Re \ z}\left[\begin{array}{c}
\alpha_z\\
\beta_z
\end{array}\right], \qquad z\in\mathbb{C}_-\setminus{i\mathbb{R}_-}.
\end{equation}
Here  $\alpha_z, \beta_z$ are functions of  parameters $\alpha_{-\overline{z}}, \beta_{-\overline{z}}\in\mathbb{C}$.
Indeed, in view of  \eqref{AGHNEW1BB} and  \eqref{AGHNEW1CC}
${\bf h}_{z}=-{\bf h}_{-\overline{z}}+(\overline{z}^2-z^2){\psi}(\mathcal{B})^*({\bf H}_{a{\bf q}}-z^2{I})^{-1}{\psi}(\mathcal{B}){\bf h}_{-\overline{z}}$ and hence,  
\begin{equation}\label{ups}
\left[\begin{array}{c}
\alpha_{z} \\
\beta_{z} 
\end{array}\right]e^{-izx}=(-I+ (\overline{z}^2-z^2){\psi}(\mathcal{B})^*({\bf H}_{a{\bf q}}-z^2{I})^{-1}{\psi}(\mathcal{B}))\left[\begin{array}{c}
\alpha_{-\overline{z}} \\
\beta_{-\overline{z}} 
\end{array}\right]e^{i\overline{z}x},  
\end{equation}

The $S$-matrix $S(z)$ depends on the choice of ${\bf H}_{a{\sf q}}$. If ${\bf H}_{a{\bf q}}={\bf H}_{\infty}$, then
this  operator is a positive self-adjoint extension of the symmetric operators $\mathcal{B}^2$ and $B^2$.
 By Theorem \ref{new31b} one can construct  two pairs of subspaces $D_\pm$ that are determined by $\mathcal{B}$ and $B$, respectively. 
Therefore,  one can define two $S$-matrices $S_1(\cdot)$ and $S(\cdot)$ for ${\bf H}_{\infty}$
corresponding  to the cases where ${\bf H}_{\infty}$ is considered as an extension of $\mathcal{B}^2$ or an extension of $B^2$.
The both of $S$-matrices are defined by  \eqref{red1} but,  in the first case,  $C(z)=0$ and, therefore $S_1(z)=\sigma_0$.  
In view of \cite[Proposition 3.1]{Gawlik},
\begin{equation}\label{ups23}
S(z)=\left[\begin{array}{cc}
\Psi_1(z)  &  0 \\
0 &  \Psi_2(z) \end{array}\right]S_1(z)=\left[\begin{array}{cc}
\Psi_1(z)  &  0 \\
0 &  \Psi_2(z) \end{array}\right],
\end{equation}
where $\Psi_j(z)$ are  holomorphic  functions in $\mathbb{C}_-$ such that $|\Psi_j(z)|<1$ and $\overline{\Psi_j(z)}=\Psi_j(-\overline{z})$.
Moreover, the boundary values of $\Psi_j(z)$ on $\mathbb{R}$ coincide with ${\psi_j(-\delta)}/{\psi_j(\delta)}$. 

Due to \eqref{ups}, the coefficients $\alpha_z, \beta_z$ in \eqref{ggg11d} depend
on the choice of ${\bf H}_{a{\bf q}}$. 
The resolvent formula \eqref{DDD4} and \eqref{ups} allow one
to present  $\alpha_z=\alpha_z({\bf H}_{a{\bf q}}), \ \beta_z=\beta_z({\bf H}_{a{\bf q}})$ as the sum of 
$\alpha_z({{\bf H}_\infty}),  \beta_z({{\bf H}_\infty})$ 
and a function that is determined by the difference between  $({\bf H}_{a{\bf q}}-z^2{I})^{-1}$ and $({\bf H}_{\infty}-z^2{I})^{-1}$  
 (see the second part in \eqref{DDD4}). 
Such decomposition and  \eqref{ups23} allows one to rewrite \eqref{ggg11d}: 
\begin{equation}\label{ggg11c}
S(z)\left[\begin{array}{c}
\alpha_{-\overline{z}} \\
\beta_{-\overline{z}}
\end{array}\right]=\left[\begin{array}{c}
\Psi_1(z)\alpha_{-\overline{z}} \\
\Psi_2(z)\beta_{-\overline{z}} 
\end{array}\right]-\frac{ze^{izx}}{Re \ z}(\overline{z}^2-z^2)\frac{({\bf h}_{-\overline{z}}, \psi({\mathcal B})^*{\bf u}_{-\overline{z}})_{+}}{a-W(z^2)}{\psi}(\mathcal{B})^*{\bf u}_{z}.
\end{equation}

In view of \eqref{AGH79} with $\mu=-\overline{z}$ 
$$
\frac{(\overline{z}^2-z^2)({\bf h}_{-\overline{z}}, \psi({\mathcal B})^*{\bf u}_{-\overline{z}})_{+}}{Re \ z}=2i\left\langle\left[\begin{array}{c}
\alpha_{-\overline{z}} \\
\beta_{-\overline{z}}
\end{array}\right], \left[\begin{array}{c}
c(-\overline{z}, q_1) \\
c(-\overline{z}, q_2)
\end{array}\right]\right\rangle,
$$
where $\langle\cdot, \cdot\rangle$ is the inner product in $\mathbb{C}^2$.
Substituting this expression into \eqref{ggg11c}  and using \eqref{AGH79}  with $\mu=z$, we obtain
$$
S(z)\left[\begin{array}{c}
\alpha_{-\overline{z}} \\
\beta_{-\overline{z}}
\end{array}\right]=\left[\begin{array}{c}
\Psi_1(z)\alpha_{-\overline{z}} \\
\Psi_2(z)\beta_{-\overline{z}} 
\end{array}\right]-\frac{2zi}{a-W(z^2)}\left\langle\left[\begin{array}{c}
\alpha_{-\overline{z}} \\
\beta_{-\overline{z}}
\end{array}\right], \left[\begin{array}{c}
c(-\overline{z}, q_1) \\
c(-\overline{z}, q_2)
\end{array}\right]\right\rangle\left[\begin{array}{c}
c(z, q_1) \\
c(z, q_2)
\end{array}\right].
$$
A rudimentary linear algebra exercise leads to the conclusion this formula for $S(z)$
can be rewritten as   \eqref{ups4} for  $z\in\mathbb{C}_-\setminus{i\mathbb{R}_-}$.
Since the $S$-matrix is holomorphic in the lower half-plain,  the formula
\eqref{ups4} remains true for $\mathbb{C}_-$.
\end{proof}
 
The expression \eqref{ups4} is based on the Krein-Naimark resolvent formula \eqref{DDD4} and it
 allows one to establish various useful relationships between $S$-matrix
and the operator ${\bf H}_{a{\bf q}}$.  An alternative formula for $S$-matrix in terms of reflection and transmission coefficients
is presented below.

By virtue of Lemma \ref{AGH44},
\begin{equation}\label{JJJ1}
P_{\mathfrak{H}_0}\left[\begin{array}{c}
 e^{i\overline{z}x} \\
0
\end{array}\right]=\psi(\mathcal{B})\psi(\mathcal{B})^*\left[\begin{array}{c}
 e^{i\overline{z}x} \\
0
\end{array}\right]=\psi(\mathcal{B})\left[\begin{array}{c}
 \overline{\psi_1(-z)} \\
0
\end{array}\right]e^{i\overline{z}x}
\end{equation}
and, similarly,  $P_{\mathfrak{H}_0}\left[\begin{array}{c}
{\alpha}_z  \\
{\beta}_z
\end{array}\right]e^{-izx}=\psi(\mathcal{B})\left[\begin{array}{c}
{\alpha}_z\overline{\psi_1(\overline{z})}  \\
{\beta}_z\overline{\psi_2(\overline{z})}
\end{array}\right]e^{-izx}$.

Setting ${\bf h}_{-\overline{z}}=\left[\begin{array}{c}
\overline{\psi_1(-z)}  \\
0
\end{array}\right]e^{i\overline{z}x}$ in \eqref{AGH101} and using \eqref{JJJ1} we obtain
$$
({\bf H}_{a{\bf q}}-z^2{I}){\bf f}=(\overline{z}^2-z^2)\psi(\mathcal{B}){\bf h}_{-\overline{z}}=(\overline{z}^2-z^2)P_{\mathfrak{H}_0}\left[\begin{array}{c}
 e^{i\overline{z}x} \\
0
\end{array}\right],  \ z\in\mathbb{C}_-\setminus{i\mathbb{R}_-}
$$
and, in view of \eqref{AGHNEW1},  \eqref{AGHNEW1CC},  its solution ${\bf f}$ satisfies the relation 
$$
P_{\mathfrak{H}_0}{\bf f}=\psi(\mathcal{B})\left[\begin{array}{c}
\overline{\psi_1(-z)}  \\
0
\end{array}\right]e^{i\overline{z}x}+\psi(\mathcal{B})\left[\begin{array}{c}
{\alpha}_{z}  \\
{\beta}_z
\end{array}\right]e^{-izx}=P_{\mathfrak{H}_0}\left[\begin{array}{c}
e^{i\overline{z}x}+R_z^1e^{-izx}  \\
T_z^1e^{-izx}
\end{array}\right], 
$$
where 
$$
R_z^1=\frac{\alpha_z}{\overline{\psi_1(\overline{z})}},  \qquad  T_z^1=\frac{\beta_z}{\overline{\psi_2(\overline{z})}}
$$
are called \emph{the reflection} and \emph{the transmission} coefficients, respectively.

Similarly, assuming ${\bf h}_{-\overline{z}}=\left[\begin{array}{c}
0 \\
\overline{\psi_2(-z)}
\end{array}\right]e^{i\overline{z}x}$ and considering the solution ${\bf f}$ of  
$$
({\bf H}_{a{\bf q}}-z^2{I}){\bf f}=(\overline{z}^2-z^2)P_{\mathfrak{H}_0}\left[\begin{array}{c}
0 \\
 e^{i\overline{z}x}
\end{array}\right],
$$
 we obtain
$$
P_{\mathfrak{H}_0}{\bf f}=P_{\mathfrak{H}_0}\left[\begin{array}{c}
T_z^2e^{-izx}  \\
e^{i\overline{z}x}+R_z^2e^{-izx}
\end{array}\right],  \qquad  R_z^2=\frac{{\beta}_z}{\overline{\psi_2(\overline{z})}}, \quad T_z^2=\frac{{\alpha}_z}{\overline{\psi_1(\overline{z})}}. 
$$

The reflection $R_z^j$ and the transmission  $T_z^j$ coefficients described above allow one to obtain an alternative 
 formula for $S$-matrix. 

\begin{theorem}\label{ups2}
The $S$-matrix of a positive self-adjoint operator
${\bf H}_{a{\bf q}}$ has the form 
 \begin{equation}\label{AGHHH}
S(z)=\frac{-z}{Re\ z}\left[\begin{array}{cc}
\theta_{11}(z)R_z^1+i\frac{Im \ z}{z} &  \theta_{12}(z)T_z^2 \vspace{3mm} \\
\theta_{21}(z)T_z^1  & \theta_{22}(z)R_z^2+i\frac{Im \ z}{z}
\end{array}\right],  \quad  \theta_{nm}(z)=\frac{\overline{\psi_n(\overline{z})}}{\overline{\psi_m(-z)}}.
\end{equation}
\end{theorem}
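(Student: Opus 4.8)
The plan is to read off the four entries of $S(z)$ directly from the representation \eqref{ggg11d} already obtained in the proof of Theorem \ref{ups222}, by specializing the parameter vector $(\alpha_{-\overline z},\beta_{-\overline z})$ to the two particular ``scattering inputs'' used above to define the reflection coefficients $R_z^j$ and the transmission coefficients $T_z^j$. No new analytic input is required: the argument is a bookkeeping exercise with the inner twist $\psi({\mathcal B})$.

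\emph{First column.} Fix $z\in\mathbb{C}_-\setminus i\mathbb{R}_-$ and take ${\bf h}_{-\overline z}=\big[\overline{\psi_1(-z)},\,0\big]^{\top}e^{i\overline z x}$, i.e. $\alpha_{-\overline z}=\overline{\psi_1(-z)}$, $\beta_{-\overline z}=0$. By the very definition of $R_z^1$ and $T_z^1$ this choice yields $\alpha_z=R_z^1\,\overline{\psi_1(\overline z)}$ and $\beta_z=T_z^1\,\overline{\psi_2(\overline z)}$. Substituting these four numbers into \eqref{ggg11d} and dividing the resulting vector identity by $\overline{\psi_1(-z)}$ produces the first column of $S(z)$:
\[
S_{11}(z)=-i\frac{Im\ z}{Re\ z}-\frac{z}{Re\ z}\,\theta_{11}(z)R_z^1, \qquad S_{21}(z)=-\frac{z}{Re\ z}\,\theta_{21}(z)T_z^1,
\]
where the factors $\theta_{n1}(z)=\overline{\psi_n(\overline z)}/\overline{\psi_1(-z)}$ arise precisely as the ratio of the $\psi$-factors carried by $e^{-izx}$ and $e^{i\overline z x}$ under $\psi({\mathcal B})^{*}$ (Lemma \ref{AGH44}). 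The elementary identity $-i\,Im\ z/Re\ z=(-z/Re\ z)\cdot i\,Im\ z/z$ rewrites $S_{11}(z)$ as $\frac{-z}{Re\ z}\big(\theta_{11}(z)R_z^1+i\frac{Im\ z}{z}\big)$, matching the $(1,1)$ entry of \eqref{AGHHH}.

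\emph{Second column.} Repeating the computation with ${\bf h}_{-\overline z}=\big[0,\,\overline{\psi_2(-z)}\big]^{\top}e^{i\overline z x}$, i.e. $\alpha_{-\overline z}=0$, $\beta_{-\overline z}=\overline{\psi_2(-z)}$, forces $\alpha_z=T_z^2\,\overline{\psi_1(\overline z)}$ and $\beta_z=R_z^2\,\overline{\psi_2(\overline z)}$; substituting into \eqref{ggg11d} and dividing by $\overline{\psi_2(-z)}$ gives $S_{12}(z)=\frac{-z}{Re\ z}\theta_{12}(z)T_z^2$ and $S_{22}(z)=\frac{-z}{Re\ z}\big(\theta_{22}(z)R_z^2+i\frac{Im\ z}{z}\big)$, with $\theta_{n2}(z)=\overline{\psi_n(\overline z)}/\overline{\psi_2(-z)}$. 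Assembling the two columns gives exactly \eqref{AGHHH} on $\mathbb{C}_-\setminus i\mathbb{R}_-$, which is the whole domain on which \eqref{ggg11d} was established; as in the proof of Theorem \ref{ups222}, holomorphy of $S(\cdot)$ on $\mathbb{C}_-$ then covers the residual line (the right-hand side of \eqref{AGHHH} having only removable singularities there).

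I do not expect a genuine obstacle: everything reduces to plugging two basis vectors into \eqref{ggg11d}. The only point requiring care is the coordinate identification --- the pair $(\alpha,\beta)$ in \eqref{ggg11d} records components \emph{before} the twist $\psi({\mathcal B})$, whereas $R_z^j,T_z^j$ were extracted \emph{after} applying $P_{\mathfrak H_0}=\psi({\mathcal B})\psi({\mathcal B})^{*}$ and using $\psi({\mathcal B})^{*}e^{-i\mu x}=\overline{\psi(\overline\mu)}e^{-i\mu x}$; tracking these $\overline{\psi_j(\cdot)}$ factors is exactly what produces the prefactors $\theta_{nm}(z)=\overline{\psi_n(\overline z)}/\overline{\psi_m(-z)}$. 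One should also note that $\overline{\psi_j(-z)}\ne 0$ off a discrete set, so the divisions above are legitimate, and that when only one $q_i$ is non-cyclic the space ${\mathcal H}$ is one-dimensional and the same computation yields the obvious $1\times1$ reduction of \eqref{AGHHH}.
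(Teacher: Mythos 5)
Your proposal is correct and follows essentially the same route as the paper: substitute the two special choices of $(\alpha_{-\overline z},\beta_{-\overline z})$ (with the corresponding $\alpha_z,\beta_z$ expressed through $R_z^j,T_z^j$ and the $\overline{\psi_j}$-factors) into \eqref{ggg11d}, solve the resulting linear system for the entries $s_{nm}$, and extend to all of $\mathbb{C}_-$ by holomorphy. Your added remarks on the non-vanishing of $\overline{\psi_j(-z)}$ off a discrete set and on the bookkeeping of the $\psi({\mathcal B})$-twist are sensible clarifications but do not change the argument.
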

\begin{proof}
Setting in \eqref{ggg11d}: 
$$
\alpha_{-\overline{z}}=\overline{\psi_1(-z)}, \quad \beta_{-\overline{z}}=0, \quad \alpha_z=\overline{\psi_1(\overline{z})}R_z^1, \quad
 \beta_z=\overline{\psi_2(\overline{z})}T_z^1 
 $$
and
$$
\alpha_{-\overline{z}}=0, \quad  \beta_{-\overline{z}}=\overline{\psi_2(-z)}, \quad \alpha_z=\overline{\psi_1(\overline{z})}T_z^2, \quad 
\beta_z=\overline{\psi_2(\overline{z})}R_z^2
$$
we obtain a system of four linear equations with respect to unknowns coefficients  of the  $S$-matrix  $S(z)=\left[\begin{array}{cc}
s_{11}  &  s_{12} \\
s_{21}  & s_{22}
\end{array}\right]$. Its solution gives rise to \eqref{AGHHH} for all $z\in\mathbb{C}_-\setminus{i\mathbb{R}_-}$.
Since $S(z)$ is holomorphic in  $\mathbb{C}_-$, the formula
\eqref{AGHHH} holds for all $z\in\mathbb{C}_-$.
\end{proof}

\subsubsection{Example of ordinary $\delta$-interaction}
In view of  \eqref{newww6a},  the ordinary $\delta$-interaction
corresponds to ${\bf q}=0$. The operators ${\bf H}_{a}={\bf H}_{a0}=-\frac{d^2}{dx^2}$ have the domains:
$$
\mathcal{D}({\bf H}_{a{\bf q}})=\{{\bf f}\in W_2^2(\mathbb{R}_+,\mathbb{C}^2): \
 [{\bf f}]_s=0,  \quad  [{\bf f'}]_r= {a}[{\bf f}]_r \}.
$$
The function ${\bf q}=0$ is non-cyclic and one can set $ \psi_1=\psi_2={1}$.
Then $P_{\mathfrak{H}_0}=I$ and the reflection and the transmission coefficients are determined as follows:
$$
R_z^1=R_z^2=\frac{-a+i(\overline{z}-z)}{a+2iz}, \qquad T_z^1=T_z^2=\frac{2i Re \ z}{a+2iz}.
$$ 
Substituting the obtained expressions in  \eqref{AGHHH} and taking into account that $\theta_{nm}(z)=1$, we obtain
a matrix-valued $S$-function
\begin{equation}\label{GH}
S(z)=\frac{1}{a+2iz}\left[\begin{array}{cc}
a &  -2iz \vspace{3mm} \\
-2iz  & a
\end{array}\right],
\end{equation}
which is holomorphic  on $\mathbb{C}_-$ for positive self-adjoint operators  ${\bf H}_{a}$ (the positivity of ${\bf H}_{a}$ is distinguished by
the condition $a\geq{0}$).

The same formula \eqref{GH} can be deduced from \eqref{ups4} if one take into account that
$\Psi_j={1}$ since $\psi_j={1}$ and $W(z^2)=-2iz$,  $c(z, q_j)=1$ by virtue of 
 \eqref{DDD3} and \eqref{AGH44b}, respectively.

\section{Operators ${\bf H}_{a{\bf q}}$ and their  $S$-matrices}\label{sec.5}
The example above leads to a natural assumption that
the formulas \eqref{ups4}, \eqref{AGHHH} allow to construct 
a function $S(z)$ for each operator ${\bf H}_{a{\bf q}}$
(assuming, of course, that ${\bf q}$ is non-cyclic). 
We will call it  \emph{the $S$-matrix} of ${\bf H}_{a{\bf q}}$.  If ${\bf H}_{a{\bf q}}$ is positive self-adjoint, then
the $S$-matrix is the consequence of proper arguments of the Lax-Phillips theory and it coincides with the analytical continuation of the Lax-Phillips scattering matrix
into $\mathbb{C}_-$.  Otherwise,  $S(z)$  is  defined directly by \eqref{ups4}, \eqref{AGHHH}  and it can be considered as a characteristic function of  ${\bf H}_{a{\bf q}}$. 
In this section, we describe properties of  ${\bf H}_{a{\bf q}}$ in terms of the corresponding $S$-matrix.

It follows from \eqref{ups4} that a $S$-matrix of ${\bf H}_{a{\bf q}}$ is a meromorphic matrix-valued function on $\mathbb{C}_-$.
Its poles describe the point spectrum of ${\bf H}_{a{\bf q}}$ in $\mathbb{C}\setminus{[0, \infty)}$. 
\begin{lemma}\label{nnn1}
If  $z\in\mathbb{C_-}$ is a pole of $S(z)$, then $z^2$ belongs to the point spectrum of ${\bf H}_{a{\bf q}}$.
\end{lemma}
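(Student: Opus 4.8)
The plan is to trace every pole of $S(z)$ in $\mathbb{C}_-$ back to a zero of the scalar function $a-W(z^2)$, and then to exhibit the function ${\bf u}_z$ of \eqref{DDD} as the corresponding eigenfunction of ${\bf H}_{a{\bf q}}$. First I would check that, apart from the factor $1/(a-W(z^2))$, every datum entering \eqref{ups4} is holomorphic on $\mathbb{C}_-$. Since ${\bf H}_\infty$ is positive self-adjoint with spectrum $[0,\infty)$ and $z^2\notin[0,\infty)$ whenever $z\in\mathbb{C}_-$, the map $z\mapsto({\bf H}_\infty-z^2I)^{-1}$ is holomorphic on $\mathbb{C}_-$ (this is also visible directly from \eqref{newww2}); hence $W(z^2)$ in \eqref{DDD3} and the numbers $c(z,q_j)$ in \eqref{AGH44b} depend holomorphically on $z\in\mathbb{C}_-$. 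For the conjugated entries one notes that $z\mapsto-\overline{z}$ maps $\mathbb{C}_-$ antiholomorphically into $\mathbb{C}_-$, so $z\mapsto\overline{c(-\overline{z},q_j)}$ is again holomorphic on $\mathbb{C}_-$; the diagonal functions $\Psi_j$ are holomorphic on $\mathbb{C}_-$ by Theorem~\ref{ups222}, and the scalar $2zi$ and the matrix of products $c(z,q_n)\overline{c(-\overline{z},q_m)}$ are holomorphic as well. Therefore, writing \eqref{ups4} as $S(z)=\mathrm{diag}(\Psi_1(z),\Psi_2(z))-\tfrac{2zi}{a-W(z^2)}M(z)$ with $M$ holomorphic, any pole of $S(z)$ must be a zero of $a-W(z^2)$. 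In particular, if $z_0\in\mathbb{C}_-$ is a pole of $S(z)$, then $W(z_0^2)=a$.

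Next I would turn this identity into an eigenvalue statement. By \eqref{DDD}, the function ${\bf u}_{z_0}$ lies in $W_2^2(\mathbb{R}_+,\mathbb{C}^2)$ (the exponential term $\mathbf{e}^{-iz_0x}$ decays because $z_0\in\mathbb{C}_-$, and the resolvent term lies in $\mathcal{D}({\bf H}_\infty)$), it is nonzero since it spans the one-dimensional space $\ker({\bf H}_{max}-z_0^2I)$, and by Lemma~\ref{neww71} it satisfies ${\bf H}_{max}{\bf u}_{z_0}=z_0^2{\bf u}_{z_0}$. The computation carried out in the proof of Lemma~\ref{neww71} gives $\Gamma_0{\bf u}_{z_0}=1$ and $\Gamma_1{\bf u}_{z_0}=W(z_0^2)$. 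Combining this with $W(z_0^2)=a$ yields $a\Gamma_0{\bf u}_{z_0}=\Gamma_1{\bf u}_{z_0}$, so by \eqref{DDD1} the function ${\bf u}_{z_0}$ belongs to $\mathcal{D}({\bf H}_{a{\bf q}})$. Hence ${\bf H}_{a{\bf q}}{\bf u}_{z_0}=z_0^2{\bf u}_{z_0}$ with ${\bf u}_{z_0}\neq 0$, which shows that $z_0^2$ belongs to the point spectrum of ${\bf H}_{a{\bf q}}$.

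The steps are mostly routine; the only place that demands a little care is the holomorphy bookkeeping in the first paragraph, in particular the claim that no cancellation among the holomorphic factors can create a pole located somewhere other than at a zero of $a-W(z^2)$ — but since $2zi$ and $M(z)$ are holomorphic on $\mathbb{C}_-$, a genuine pole of $S(z)$ can only originate from the factor $1/(a-W(z^2))$, which is exactly what the argument uses. I expect this to be the main (and only mild) obstacle.
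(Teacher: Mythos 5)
Your argument is correct and follows essentially the same route as the paper: a pole of $S(z)$ forces $a=W(z_0^2)$, and this identity places $z_0^2$ in the point spectrum via the boundary triplet $(\mathbb{C},\Gamma_0,\Gamma_1)$ and the Weyl--Titchmarsh function. The only difference is that the paper delegates the second step to a cited proposition on Weyl functions, whereas you verify it by hand by exhibiting ${\bf u}_{z_0}$ as the eigenfunction; both are fine.
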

\begin{proof}
By virtue of  \eqref{ups4}, if $z\in\mathbb{C_-}$ is a pole for $S(z)$ then $a=W(z^2)$. This identity means that  
$z^2\in\sigma_p({\bf H}_{a{\bf q}})$
because ${\bf H}_{a{\bf q}}$ is defined by \eqref{DDD1} and $W(z^2)$ is
the Weyl-Titchmarsh function associated to  the boundary triplet  $(\mathbb{C}, \Gamma_0,
\Gamma_1)$ (see Sec. \ref{sec3.1} and \cite[Proposition 14.17]{Schm}).
\end{proof}

\begin{remark}
It may happen that the $S$-matrix `does not hear' an eigenvalue $z^2$. This is the case where the corresponding 
eigenfunction ${\bf u}_z$ is orthogonal to $\psi(\mathcal{B})L_2(\mathbb{R}_+, \mathbb{C}^2)$ and, as a result,
the coefficients  $c(z, q_i)$  vanish, see Sec. \ref{sec.5.1.1}.
\end{remark}

Divide the half-plane $\mathbb{C}_-$ into three parts
$$
\mathbb{C}_-^-=\{z  :  Re \ z<0 \}; \quad  \mathbb{C}_-^0=\{z  :  Re \ z=0 \}; \quad \mathbb{C}_-^+=\{z  :  Re \ z>0 \}.
$$
\begin{lemma}\label{nnn2}
If $S(z)$ has a pole in $\mathbb{C}_-^\mp$, then $S(z)$ has to be analytical on the opposite part $\mathbb{C}_-^\pm$.
If $S(z)$ has a pole on the middle part $\mathbb{C}_-^0$, then $S(z)$ is analytical on $\mathbb{C}_-^-\cup\mathbb{C}_-^+$
 and  ${\bf H}_{a{\bf q}}$ is  a self-adjoint operator.
\end{lemma}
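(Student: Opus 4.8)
The plan is to locate the poles of $S(z)$ through the scalar factor in \eqref{ups4} and then to decide for which $z\in\mathbb{C}_-$ the equation $W(z^2)=a$ can hold, using the Nevanlinna character of the Weyl function $W$. As in the proof of Lemma \ref{nnn1}, the right-hand side of \eqref{ups4} is holomorphic on $\mathbb{C}_-$ away from the zeros of $a-W(z^2)$: the functions $\Psi_1,\Psi_2$ are holomorphic and bounded on $\mathbb{C}_-$, while $c(\cdot,q_j)$ and $W(\cdot)$ are holomorphic there by \eqref{AGH44b}, \eqref{DDD3} and the holomorphy of $({\bf H}_\infty-z^2I)^{-1}$ on $\mathbb{C}_-$. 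Hence every pole $z_0\in\mathbb{C}_-$ of $S$ satisfies $a=W(z_0^2)$, and it suffices to control the values of $W(z^2)$ on the three pieces $\mathbb{C}_-^-,\mathbb{C}_-^0,\mathbb{C}_-^+$.

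The key step is to record the sign of $\operatorname{Im}W(z^2)$. The function ${\bf u}_z$ from \eqref{DDD} lies in $\mathcal{D}({\bf H}_{max})$, satisfies ${\bf H}_{max}{\bf u}_z=z^2{\bf u}_z$, and is normalized by $\Gamma_0{\bf u}_z=1$, $\Gamma_1{\bf u}_z=W(z^2)$ (cf.\ the proof of Lemma \ref{neww71}). Applying the Green identity of the boundary triplet $(\mathbb{C},\Gamma_0,\Gamma_1)$ of ${\bf H}_{max}$ to ${\bf f}={\bf g}={\bf u}_z$ gives
\[
2i\operatorname{Im}W(z^2)=W(z^2)-\overline{W(z^2)}=(z^2-\overline{z^2})\,\|{\bf u}_z\|_+^2=2i\operatorname{Im}(z^2)\,\|{\bf u}_z\|_+^2 ,
\]
so $\operatorname{Im}W(z^2)=\operatorname{Im}(z^2)\,\|{\bf u}_z\|_+^2$ for all $z\in\mathbb{C}_-$, and $\|{\bf u}_z\|_+>0$ since $\Gamma_0{\bf u}_z=1\neq 0$. (Equivalently, one may simply quote that $\lambda\mapsto W(\lambda)$ is a strict Nevanlinna function, being the Weyl function of the symmetric operator ${\bf H}_{min}$.)

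Next I would note that $z\mapsto z^2$ maps $\mathbb{C}_-$ bijectively onto $\mathbb{C}\setminus[0,\infty)$ and carries $\mathbb{C}_-^-$ onto the open upper half-plane, $\mathbb{C}_-^+$ onto the open lower half-plane, and $\mathbb{C}_-^0$ onto $(-\infty,0)$. Combined with the previous paragraph this gives $\operatorname{Im}W(z^2)>0$ on $\mathbb{C}_-^-$, $\operatorname{Im}W(z^2)<0$ on $\mathbb{C}_-^+$, and $W(z^2)\in\mathbb{R}$ on $\mathbb{C}_-^0$. The lemma now follows. A pole $z_0\in\mathbb{C}_-^-$ forces $\operatorname{Im}a=\operatorname{Im}W(z_0^2)>0$, which rules out any pole in $\mathbb{C}_-^+$ (a pole there would give $\operatorname{Im}a<0$); the mirror statement is identical. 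A pole $z_0\in\mathbb{C}_-^0$ forces $a=W(z_0^2)\in\mathbb{R}$, so ${\bf H}_{a{\bf q}}$—being unitarily equivalent to $H_{aq}$, which is self-adjoint exactly for $a\in\mathbb{R}$—is self-adjoint; and any pole in $\mathbb{C}_-^-\cup\mathbb{C}_-^+$ would force $\operatorname{Im}a\neq0$, a contradiction, so $S$ is analytic on $\mathbb{C}_-^-\cup\mathbb{C}_-^+$.

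I do not anticipate a genuine obstacle. The only delicate point is the orientation bookkeeping under $z\mapsto z^2$—that the \emph{right} half $\mathbb{C}_-^+$ of $\mathbb{C}_-$ goes to the \emph{lower} half $w$-plane and the left half $\mathbb{C}_-^-$ to the upper half $w$-plane—together with the (immediate) strict inequality $\|{\bf u}_z\|_+>0$. Everything else reduces to the boundary-triplet Green identity and the elementary fact that a nonconstant Nevanlinna function takes nonreal values off the real axis.
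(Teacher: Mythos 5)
Your proof is correct, and for the first assertion it follows the paper's route exactly: a pole forces $a=W(z^2)$, the map $z\mapsto z^2$ sends $\mathbb{C}_-^-$ to the upper half-plane and $\mathbb{C}_-^+$ to the lower one, and the Nevanlinna property $\operatorname{Im}W(z^2)/\operatorname{Im}z^2>0$ (which the paper simply cites from \cite[Sec. 14.5]{Schm}, while you re-derive it from the Green identity of the boundary triplet) pins down the sign of $\operatorname{Im}a$ and yields the contradiction. The only genuine divergence is in the second assertion. The paper argues that a pole on $\mathbb{C}_-^0$ gives a negative eigenvalue $z_0^2<0$ of $\mathbf{H}_{a\mathbf{q}}$ and then invokes \cite[Corollary 5.2]{KZ} to conclude self-adjointness; you instead observe that $W$ is real on $(-\infty,0)$ (since $(-\infty,0)\subset\rho(\mathbf{H}_\infty)$ and $W(\bar\lambda)=\overline{W(\lambda)}$, or directly from \eqref{DDD3} with $z=-it$), so $a=W(z_0^2)\in\mathbb{R}$ and hence $\mathbf{H}_{a\mathbf{q}}=\mathbf{H}_{a\mathbf{q}}^*$ because $\mathbf{H}_{a\mathbf{q}}^*=\mathbf{H}_{\bar a\mathbf{q}}$. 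Your version is more self-contained (no appeal to the external spectral result of \cite{KZ}) and it also makes explicit, as the paper does not, why the second situation excludes poles off the imaginary axis. Both arguments ultimately rest on the same Nevanlinna fact, so the difference is one of packaging rather than substance.
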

\begin{proof}
Let  $z\in\mathbb{C_-^-}$ be a pole for $S(z)$.  By virtue of  \eqref{ups4},  $a=W(z^2)$, where $Im\ z^2>0$ and $Im \ a>0$ since $Im \ W(z^2)/Im \ z^2>0$
\cite[Sec. 14.5]{Schm}.  Similar arguments for a pole $z\in\mathbb{C_-^+}$ lead to the conclusion that $Im \ a<0$. 
The obtained contradiction means that the existence of a pole in $\mathbb{C_-^+}$ ($\mathbb{C_-^-}$) implies  the absence of poles in $\mathbb{C_-^-}$ ($\mathbb{C_-^+}$).  

If $z\in\mathbb{C}_-^0$ is a pole, then ${\bf H}_{a{\bf q}}$ has a negative eigenvalue and ${\bf H}_{a{\bf q}}$ has to be self-adjoint  due to \cite[Corollary 5.2]{KZ}.
\end{proof}

An eigenvalue $z^2\in\mathbb{C}\setminus{[0, \infty)}$ of ${\bf H}_{a{\bf q}}$ is called \emph{an exceptional point} if its
geometrical multiplicity does not coincide with the algebraic one.  The presence of an exceptional point means that ${\bf H}_{a{\bf q}}$
cannot be self-adjoint for any choice of inner product. It follows from Lemma \ref{nnn2} that an exceptional point $z^2$ is necessarily non-real and
$z\in\mathbb{C}_-^-\cup\mathbb{C}_-^+$.

\begin{lemma}\label{nnn3}
A non-simple pole\footnote{a pole of order greater then  one} $z$ of $S(z)$  
corresponds to an exceptional point $z^2$ of ${\bf H}_{a{\bf q}}$. 
\end{lemma}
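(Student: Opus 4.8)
The plan is to read off the order of the pole of $S(z)$ at $z$ directly from the formula \eqref{ups4} and to relate it to the order of the pole of the resolvent $(\mathbf{H}_{a\mathbf{q}}-\zeta I)^{-1}$ at $\zeta=z^2$, which controls the Jordan structure of the eigenvalue $z^2$. Since $\mathbf{H}_{a\mathbf{q}}$ is an extension of the symmetric operator $B^2$ with one‑dimensional defect (the resolvents $(\mathbf{H}_\infty-\zeta I)^{-1}$ and $(\mathbf{H}_{a\mathbf{q}}-\zeta I)^{-1}$ differ by a rank‑one term, see \eqref{DDD4}), the point $z^2\in\mathbb{C}\setminus[0,\infty)$ is an eigenvalue of geometric multiplicity one; hence $z^2$ is an exceptional point precisely when its algebraic multiplicity exceeds one, equivalently when $(\mathbf{H}_{a\mathbf{q}}-\zeta I)^{-1}$ has a pole of order $\ge 2$ at $\zeta=z^2$.

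First I would note that in the rank‑one Krein–Naimark formula \eqref{DDD4} all the $z$‑dependence that can produce a pole is concentrated in the scalar factor $1/(a-W(z^2))$: the functions $\mathbf{u}_{\pm}$ and $({\bf f},\mathbf{u}_{-\bar z})_+$ are holomorphic near a point $z\in\mathbb{C}_-^-\cup\mathbb{C}_-^+$ (away from $i\mathbb{R}_-$, where by Lemma \ref{nnn2} non‑real exceptional points cannot lie). Consequently, the order of the pole of $(\mathbf{H}_{a\mathbf{q}}-\zeta I)^{-1}$ at $\zeta=z^2$ equals the order of vanishing of $a-W(\zeta)$ at $\zeta=z^2$, up to the chain‑rule factor coming from $\zeta=z^2$, which is a nonzero local biholomorphism near $z\ne 0$. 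Thus the algebraic multiplicity of $z^2$ is $\ge 2$ if and only if $a-W(\zeta)$ has a zero of order $\ge 2$ at $z^2$, i.e.\ $W'(z^2)=0$ in addition to $a=W(z^2)$.

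Next I would examine \eqref{ups4} itself. The off‑diagonal (rank‑one) block carries the factor $1/(a-W(z^2))$, and as a function of $z$ this is $1/(a-W(z^2))$ with $W(z^2)$ holomorphic near $z$; its pole at $z$ has order equal to the order of the zero of $z\mapsto a-W(z^2)$, which (again by the nonvanishing derivative of $z\mapsto z^2$ at $z\ne 0$) is the same as the order of the zero of $\zeta\mapsto a-W(\zeta)$ at $\zeta=z^2$. The diagonal part $\mathrm{diag}(\Psi_1,\Psi_2)$ is holomorphic and does not affect the pole. Hence the order of the pole of $S(z)$ equals the order of the pole of the resolvent at $z^2$. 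Combining the two paragraphs: if $z$ is a pole of $S(z)$ of order $\ge 2$, then $a-W(\zeta)$ vanishes to order $\ge 2$ at $\zeta=z^2$, so $(\mathbf{H}_{a\mathbf{q}}-\zeta I)^{-1}$ has a pole of order $\ge 2$ there, so the algebraic multiplicity of $z^2$ exceeds its geometric multiplicity (which is one), i.e.\ $z^2$ is an exceptional point.

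The main obstacle I anticipate is making rigorous the claim that the order of the pole of a rank‑one‑perturbed resolvent at an isolated eigenvalue coincides with the order of the zero of the associated scalar function $a-W(\zeta)$, and that this in turn equals the size of the largest Jordan block; one must check that the numerator $({\bf f},\mathbf{u}_{-\bar z})_+\,\mathbf{u}_z$ does not accidentally lower the pole order (it does not, because $\mathbf{u}_z$ is the eigenfunction and the pairing $({\bf f},\mathbf{u}_{-\bar z})_+$ is not identically zero in ${\bf f}$), and that the Riesz projection / Jordan chain reading is consistent. I would handle this by the standard Laurent expansion of $(\mathbf{H}_{a\mathbf{q}}-\zeta I)^{-1}$ around $\zeta=z^2$, reading the coefficient of the highest negative power as the nilpotent part, together with the already‑cited fact \cite[Proposition 14.17]{Schm} identifying zeros of $a-W(\zeta)$ with $\sigma_p(\mathbf{H}_{a\mathbf{q}})$; the refinement to multiplicities is a routine strengthening of that statement in the rank‑one setting.
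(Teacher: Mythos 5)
Your proposal reaches the correct conclusion and its first half coincides with the paper's argument: reading off from \eqref{ups4} that a non-simple pole of $S(z)$ at $z$ forces $a-W(\lambda)$ to vanish to order at least two at $\lambda=z^2$, i.e.\ $W'(z^2)=0$. Where you diverge is in the second half: the paper at this point simply invokes \cite[Theorem 5.4]{KZ}, which states that an eigenvalue $z^2$ of ${\bf H}_{a{\bf q}}$ is exceptional if and only if $W'(z^2)=0$, whereas you re-derive this equivalence from scratch via the Krein--Naimark formula \eqref{DDD4}, arguing that the pole order of the rank-one term of the resolvent at $\zeta=z^2$ equals the order of the zero of $a-W(\zeta)$, and that (since $\ker({\bf H}_{max}-z^2I)$ is one-dimensional, so the geometric multiplicity is one) this pole order equals the length of the single Jordan chain. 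That route is more self-contained and exposes the mechanism behind the cited theorem, at the cost of the technical points you yourself flag: one must check that the numerator $({\bf f},{\bf u}_{-\overline{z}})_+\,{\bf u}_z$, as an operator-valued holomorphic function of $\zeta$, is nonzero at $\zeta=z^2$ (it is, being rank one with ${\bf u}_z\neq 0$), that $({\bf H}_\infty-\zeta I)^{-1}$ is holomorphic there (true, as $\sigma({\bf H}_\infty)=[0,\infty)$), and that the eigenvalue is isolated with finite-rank Riesz projection so that the standard identification of the resolvent's pole order with the maximal Jordan chain length applies. These are all routine in this rank-one setting, so the argument is sound; it is essentially a proof of the quoted portion of \cite[Theorem 5.4]{KZ} rather than an appeal to it.
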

\begin{proof}
A non-simple pole $z$ of $S(z)$  means that the function $(a-W(\lambda))^{-1}$ has a non-simple pole for $\lambda=z^2$.
This yields that $W'(z^2)=0$, where $W'(\lambda)=dW/d\lambda$.
In view of \cite[Theorem 5.4]{KZ}, an eigenvalue $z^2$ of ${\bf H}_{a{\bf q}}$ is an exceptional point if and only if 
$W'(z^2)=0$. 
\end{proof}

\begin{lemma}\label{AGH789}
Let $S_{{\bf H}_{a{\bf q}}}(z)$ be a $S$-matrix of ${\bf H}_{a{\bf q}}$. Then
$$
S^*_{{\bf H}_{a{\bf q}}}(z)=S_{{\bf H}_{\overline{a}{\bf q}}}(-\overline{z})=S_{{\bf H}^*_{{a}{\bf q}}}(-\overline{z}).
$$
\end{lemma}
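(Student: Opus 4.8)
The plan is to prove the first identity $S^*_{{\bf H}_{a{\bf q}}}(z)=S_{{\bf H}_{\overline{a}{\bf q}}}(-\overline{z})$ by comparing the conjugate transpose of the explicit formula \eqref{ups4} with the result of substituting $a\mapsto\overline{a}$, $z\mapsto-\overline{z}$ directly into \eqref{ups4}, and then to obtain the second identity from the fact that ${\bf H}^*_{a{\bf q}}={\bf H}_{\overline{a}{\bf q}}$. First I would note that $z\in\mathbb{C}_-$ forces $-\overline{z}\in\mathbb{C}_-$, so both sides are defined by \eqref{ups4}.

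Taking the conjugate transpose of \eqref{ups4} entrywise, the diagonal block contributes $\overline{\Psi_j(z)}$, which equals $\Psi_j(-\overline{z})$ by Theorem \ref{ups222}; the scalar prefactor $-2zi/(a-W(z^2))$ is conjugated to $2\overline{z}i/(\overline{a}-\overline{W(z^2)})$; and the rank-one matrix with entries $c(z,q_j)\overline{c(-\overline{z},q_k)}$ passes to the matrix with entries $c(-\overline{z},q_j)\overline{c(z,q_k)}$. On the other hand, substituting $a\mapsto\overline{a}$ and $z\mapsto-\overline{z}$ in \eqref{ups4} gives the diagonal entries $\Psi_j(-\overline{z})$, the prefactor $2\overline{z}i/(\overline{a}-W((-\overline{z})^2))$, and precisely the rank-one matrix $c(-\overline{z},q_j)\overline{c(z,q_k)}$, since $-\overline{(-\overline z)}=z$. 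Hence the two expressions agree entry by entry once the single scalar relation $\overline{W(z^2)}=W((-\overline{z})^2)$ is established; note that no extra symmetry of the coefficients $c(\mu,q_j)$ is needed, the rank-one parts matching by sheer substitution.

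The relation $\overline{W(z^2)}=W((-\overline{z})^2)$ is $\overline{W(\lambda)}=W(\overline{\lambda})$ with $\lambda=z^2$ (using $(-\overline z)^2=\overline{z^2}$); it holds by the general identity $M(\overline{\lambda})=M(\lambda)^*$ for the Weyl function $M$ of a boundary triplet, here applied to the triplet $(\mathbb{C},\Gamma_0,\Gamma_1)$ of ${\bf H}_{max}$, for which $\dim\mathcal{H}=1$. Alternatively I would verify it directly from \eqref{DDD3}: conjugating termwise and using that $(\cdot,\cdot)_+$ is linear in the first slot (so $\overline{(f,g)_+}=(g,f)_+$), that $Re\ {\bf q}$ is real-valued, and that ${\bf H}_\infty$ is self-adjoint (so $\bigl(({\bf H}_\infty-z^2I)^{-1}\bigr)^*=({\bf H}_\infty-\overline{z}^2I)^{-1}$), each of the three terms $-2iz$, $-2({\bf e}^{-izx},Re\ {\bf q})_+$, $(({\bf H}_\infty-z^2I)^{-1}{\bf q},{\bf q})_+$ turns into the corresponding term of \eqref{DDD3} with $z$ replaced by $-\overline{z}$. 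This completes the proof of $S^*_{{\bf H}_{a{\bf q}}}(z)=S_{{\bf H}_{\overline{a}{\bf q}}}(-\overline{z})$.

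For the second equality it suffices to observe that ${\bf H}^*_{a{\bf q}}={\bf H}_{\overline{a}{\bf q}}$: by \eqref{DDD1}, ${\bf H}_{a{\bf q}}$ is the restriction of ${\bf H}_{max}$ to $\{{\bf f}:a\Gamma_0{\bf f}=\Gamma_1{\bf f}\}$ with the boundary triplet $(\mathbb{C},\Gamma_0,\Gamma_1)$ of \eqref{DDD5}, and by boundary-triplet theory the adjoint of the extension parametrized by the scalar $a$ is the one parametrized by $\overline{a}$ (this is also recorded in \cite{KZ}). I expect the only genuinely delicate point to be the bookkeeping in checking $\overline{W(z^2)}=W((-\overline{z})^2)$ — in particular keeping the inner-product convention (linearity in the first argument) straight when conjugating the terms containing $Re\ {\bf q}$ and the resolvent of ${\bf H}_\infty$; everything else reduces to a term-by-term substitution in \eqref{ups4} and an appeal to Theorem \ref{ups222}.
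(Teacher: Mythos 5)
Your proposal is correct and follows essentially the same route as the paper: take the conjugate transpose of \eqref{ups4} term by term, match it against the substitution $a\mapsto\overline{a}$, $z\mapsto-\overline{z}$ using $\overline{\Psi_j(z)}=\Psi_j(-\overline{z})$ and the Weyl-function symmetry $\overline{W(z^2)}=W((-\overline{z})^2)$, and finish with ${\bf H}^*_{a{\bf q}}={\bf H}_{\overline{a}{\bf q}}$ from the boundary-triplet parametrization \eqref{DDD1}. Your observation that the rank-one parts match by pure substitution, with no extra symmetry of the coefficients $c(\mu,q_j)$ required, is exactly the point the paper's computation relies on.
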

\begin{proof} Using \eqref{ups4} for the calculation of the adjoint, we get   
$$
S^*_{{\bf H}_{a{\sf q}}}(z)=\left[\begin{array}{cc}
\overline{\Psi_1({z})}  &  0 \\
0 &  \overline{\Psi_2(z)} \end{array}\right]+\frac{2\overline{z}i}{\overline{a}-\overline{W(z^2)}}\left[\begin{array}{cc}
c(-\overline{z}, q_1)\overline{c(z, q_1)}    &  c(-\overline{z}, q_1)\overline{c(z, q_2)} \vspace{3mm} \\
c(-\overline{z}, q_2)\overline{c(z, q_1)}   &  c(-\overline{z}, q_2)\overline{c(z, q_2)} 
\end{array}
\right].
$$     
In  view of Theorem \ref{ups222} $\overline{\Psi_j(z)}=\Psi_j(-\overline{z})$. 
Moreover, $\overline{W(z^2)}=W((-\overline{z})^2)$. This well-known property of the Weyl-Titchmarsh functions \cite[Chap. 14]{Schm} can easily
be derived from \eqref{DDD3}. Taking these facts into account and using  \eqref{ups4} for the calculation of $S_{{\bf H}_{\overline{a}{\bf q}}}(-\overline{z})$, 
we arrive at the conclusion that $S^*_{{\bf H}_{a{\bf q}}}(z)=S_{{\bf H}_{\overline{a}{\bf q}}}(-\overline{z})$.
Now, to complete the proof it suffices to remark that ${\bf H}^*_{a{\bf q}}={\bf H}_{\overline{a}{\sf q}}$  due to \eqref{DDD1} and \cite[Lemma 14.6]{Schm}. 
\end{proof}
\begin{corollary}\label{nnn125}
Let $S(z)$ be a $S$-matrix of ${\bf H}_{a{\bf q}}$. 
Then  ${\bf H}_{a{\bf q}}$ is self-adjoint if and only if  ${S^*(z)}=S(-\overline{z})$.
\end{corollary}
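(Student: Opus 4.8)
The plan is to read off both implications from Lemma~\ref{AGH789}, which already provides $S^*_{\mathbf{H}_{a\mathbf{q}}}(z)=S_{\mathbf{H}_{\overline a\mathbf{q}}}(-\overline z)$ on $\mathbb{C}_-$. For the direction ``$\mathbf{H}_{a\mathbf{q}}$ self-adjoint $\Rightarrow S^*(z)=S(-\overline z)$'' I would use that $\mathbf{H}_{a\mathbf{q}}$ is self-adjoint precisely when $a\in\mathbb{R}$ (it is unitarily equivalent to $H_{aq}$, see the Introduction; equivalently this follows from $\mathbf{H}^*_{a\mathbf{q}}=\mathbf{H}_{\overline a\mathbf{q}}$, established in the proof of Lemma~\ref{AGH789} via \eqref{DDD1}). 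Then $\overline a=a$, and Lemma~\ref{AGH789} gives $S^*(z)=S_{\mathbf{H}_{\overline a\mathbf{q}}}(-\overline z)=S_{\mathbf{H}_{a\mathbf{q}}}(-\overline z)=S(-\overline z)$.

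For the converse I assume $S^*(z)=S(-\overline z)$. Combining this with Lemma~\ref{AGH789} and using that $z\mapsto-\overline z$ maps $\mathbb{C}_-$ onto itself, I obtain $S_{\mathbf{H}_{a\mathbf{q}}}(w)=S_{\mathbf{H}_{\overline a\mathbf{q}}}(w)$ for all $w\in\mathbb{C}_-$. The key structural observation is that in \eqref{ups4} the only $a$-dependent factor is the scalar $2wi/(a-W(w^2))$, since the diagonal functions $\Psi_j$, the Weyl-Titchmarsh function $W$ of \eqref{DDD3} and the coefficients $c(\cdot,q_j)$ of \eqref{AGH44b} depend only on $\mathbf{q}$. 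Subtracting the two copies of \eqref{ups4} therefore yields
$$
S_{\mathbf{H}_{a\mathbf{q}}}(w)-S_{\mathbf{H}_{\overline a\mathbf{q}}}(w)=2wi\,\frac{a-\overline a}{(a-W(w^2))(\overline a-W(w^2))}\,C(w)=0,\qquad w\in\mathbb{C}_-,
$$
where $C(w)=\bigl[c(w,q_i)\overline{c(-\overline w,q_j)}\bigr]_{i,j=1}^{2}$ has rank at most one. Because $W(z^2)$ is a non-constant function of $z$ (by \eqref{DDD3}), the scalar prefactor is a meromorphic function that is not identically zero whenever $a\neq\overline a$; hence, provided $C(w)\not\equiv0$, we must have $a=\overline a$, i.e.\ $a\in\mathbb{R}$, so $\mathbf{H}_{a\mathbf{q}}$ is self-adjoint.

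The remaining, and technically most delicate, step is to verify $C(w)\not\equiv0$. I would argue by contradiction: $C(w)\equiv0$ forces $c(\cdot,q_j)\equiv0$ for every $j$, which by Corollary~\ref{GGG15} means $\psi(\mathcal{B})^*\mathbf{u}_w=0$, i.e.\ the eigenfunction $\mathbf{u}_w=\mathbf{e}^{-iwx}-(\mathbf{H}_\infty-w^2I)^{-1}\mathbf{q}$ of $\mathbf{H}_{max}$ is orthogonal to $\mathfrak{H}_0$ for every $w\in\mathbb{C}_-$. By non-cyclicity of $\mathbf{q}$ at least one $\psi_j$ is a genuine (non-zero) inner function; for that index the ``analytic'' part $\overline{\psi_j(\overline w)}$ of $c(w,q_j)$ in \eqref{AGH44b} is not identically zero, while the remaining term carries the factor $Im\ w$ and stays bounded, so that $c(w,q_j)\to\overline{\psi_j(\overline w)}$ along a horizontal ray $Im\ w=-\varepsilon$, $Re\ w\to\infty$; since $|\psi_j|=1$ a.e.\ on $\mathbb{R}$, this forces $c(\cdot,q_j)\not\equiv0$, contradicting the assumption. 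This closes the argument.
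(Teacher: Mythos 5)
Your proof follows the same route as the paper's: the forward implication is read off from Lemma~\ref{AGH789} with $a=\overline a$, and the converse reduces, via that lemma, to showing that $S_{\mathbf{H}_{a\mathbf{q}}}=S_{\mathbf{H}_{\overline a\mathbf{q}}}$ forces $a=\overline a$. The paper disposes of the converse in one sentence (``as follows from the proof above, the relation is possible only in the case of real $a$''), implicitly using exactly the observation you make explicit: in \eqref{ups4} the parameter $a$ enters only through the scalar $(a-W(z^2))^{-1}$, so the difference of the two $S$-matrices is a nonzero multiple of the rank-one matrix $C(w)$ whenever $a\neq\overline a$. Up to this point your argument is correct and, if anything, more careful than the paper's.

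The one soft spot is your verification that $C(w)\not\equiv 0$ --- a point the paper does not address at all, but on which the corollary genuinely depends: if all $c(\cdot,q_j)$ vanished identically, \eqref{ups4} would reduce to the $a$-independent diagonal part and $S^*(z)=S(-\overline z)$ would hold for every $a$. Your asymptotic step correctly shows that along a horizontal ray $\mathrm{Im}\,w=-\varepsilon$, $\mathrm{Re}\,w\to+\infty$, the resolvent term in \eqref{AGH44b} tends to zero, so $c(\cdot,q_j)\equiv 0$ would force $\psi_j(x+i\varepsilon)\to 0$ as $x\to+\infty$. But the final inference --- that this contradicts $|\psi_j|=1$ a.e.\ on $\mathbb{R}$ --- does not hold for a general inner function: a singular inner function whose singular measure places slowly growing mass near $+\infty$ does tend to $0$ along every horizontal line in $\mathbb{C}_+$ while keeping unimodular boundary values a.e. So this last step needs a different argument (for instance, relating $c(\mu,q_j)$ to $P_{\mathfrak{H}_0}\mathbf{u}_\mu$ via Corollary~\ref{GGG15} and using the simplicity of $B$, or exploiting the specific Beurling construction of $\psi_j$ from $E_{q_j}$). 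Since the paper itself silently assumes this non-degeneracy, the defect is shared with the source; measured against the paper's own standard of rigor, your proof is sound and follows essentially the same approach.
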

\begin{proof} If ${\bf H}_{a{\bf q}}$ is self-adjoint, then $a\in\mathbb{R}$ and
${S^*(z)}=S(-\overline{z})$ due to Lemma \ref{AGH789}.  
Conversely, as follows from the proof above, the relation ${S^*(z)}=S(-\overline{z})$ is possible only
in the case of real $a$. This implies the self-adjointness of ${\bf H}_{a{\bf q}}$.
\end{proof}
  
\subsection{Examples}\label{sec.5.1}
\subsubsection{Even function $q$ with finite support.}\label{sec.5.1.1}
We consider the simplest  example of even function with finite support
$$
q(x)=M\chi_{[-\rho,\rho]}(x), \qquad M\in\mathbb{C}, \quad \rho>0.
$$
In this case, $Yq={\bf q}=M\left[\begin{array}{c}
\chi_{[0,\rho]}(x) \\
\chi_{[0,\rho]}(x) 
\end{array}\right]$.

Denote $\psi(\delta)=e^{i\delta\rho}$.  The function $\psi$ belongs to $H^\infty(\mathbb{C}_+)$ and
the operator $\psi(\mathcal B)$ in \eqref{AGHH} acts in $L_2(\mathbb{R}_+)$ as follows:
\begin{equation}\label{HHH1}
\psi(\mathcal B)f=\left\{ \begin{array}{ll}
f(x-\rho) & \textrm{for $x\geq\rho$}\\
0 & \textrm{for $ x < \rho $}
\end{array} \right.
\end{equation}
Further, we extend the action of $\psi(\mathcal B)$ onto $L_2(\mathbb{R}_+, \mathbb{C}^2)$ 
assuming in \eqref{AGH77} that $\psi_1(\mathcal{B})=\psi_2(\mathcal{B})=\psi(\mathcal{B})$.
It follows from \eqref{HHH1} that $\psi(\mathcal B)^*{\bf f}={\bf f}(x+\rho)$. Hence, 
\begin{equation}\label{HHH2}
P_{\mathfrak{H}_0}{\bf f}=\psi(\mathcal B)\psi(\mathcal B)^*{\bf f}=\left\{ \begin{array}{ll}
{\bf f}(x) & \textrm{for $x\geq\rho$}\\
0  & \textrm{for $ x < \rho $}
\end{array} \right.
\end{equation}

The formula \eqref{HHH2} and Lemma \ref{zzzz} imply that ${\bf q}$ is non-cyclic.  
Therefore, for ${\bf H}_{a{\bf q}}$ there exists a $S$-matrix defined by \eqref{ups4}. 
Let us specify the counterparts of \eqref{ups4}.  First of all we note that $\Psi_1(z)=\Psi_2(z)=e^{-2iz\rho}$ 
as the holomorphic continuation of $e^{-2i\delta\rho}=\frac{\psi(-\delta)}{\psi(\delta)}$ into $\mathbb C_-$.
Further, in view of \eqref{newww2},
$$
({\bf H}_\infty-\mu^2I)^{-1}{\bf q}=-\frac{M}{2\mu^2}[(e^{-i\mu\rho}+e^{i\mu{m}(x)}-2){\bf e}^{-i\mu{x}}+(e^{-i\mu{m}(x)}-e^{-i\mu\rho}){\bf e}^{i\mu{x}}],    
$$
where $m(x)=\min\{x, \rho\}$ and $\mu\in\mathbb{C}_-$. This formula and \eqref{AGH44b} lead to the conclusion that
$$
c(\mu, q_1)=c(\mu, q_2)=e^{-i\mu\rho}\left(1-\kappa_{\mu}\frac{M}{\mu^2}\right), \qquad \kappa_{\mu}=1-\cos\mu\rho.
$$
Our next step is the calculation of $W(z^2)$ using formula $\eqref{DDD3}$ and the expression for $({\bf H}_\infty-\mu^2I)^{-1}$, that gives 
$$
W(z^2)=-2iz-\frac{4Re \ M}{iz}(1-e^{-iz\rho})+\frac{|M|^2}{iz^3}\big[(e^{-iz\rho}-2)^2-2iz\rho-1 \big].
$$
Substituting the expressions obtained above  into \eqref{ups4} we find the $S$-matrix for ${\bf H}_{a{\bf q}}$ 
$$
S(z)=e^{-2iz\rho}\left(\sigma_0 -\frac{2i(z^2-\kappa_{z}{M})(z^2-\kappa_{z}{\overline{M}})}{z^3(a-W(z^2))}
\left[ \begin{array}{cc}
1 &  1 \\
1 &  1
\end{array} \right]\right).
$$

Let us assume that $z_0\in\mathbb{C}_-$ satisfies the relation $z^2_0-\kappa_{z_0}{M}=0$ and $W'(z_0^2)\not=0$.
Set $a=W(z_0^2)$. Then the operator
${\bf H}_{a{\bf q}}$ has the eigenvalue $z_0^2$ with eigenfunction  ${\bf u}_{z_0}$.  It follows from \eqref{DDD} and the explicit expression
for $({\bf H}_\infty-\mu^2I)^{-1}$ that
$$
{\bf u}_{z_0}=\frac{1-\cos{z_0}(\rho-x)}{z_0^2}{\bf q}.
$$
In view of \eqref{HHH2}, the eigenfunction ${\bf u}_{z_0}$ is orthogonal to $\mathfrak{H_0}$ and it has no impact on the 
$S$-matrix  $S(z)$  (no pole for $z=z_0$).  

\subsubsection{Odd function $q$ with finite support.}
Similarly to the previous case, we consider the  odd function 
$$
q(x)=M{\rm sign}(x)\chi_{[-\rho,\rho]}(x), \qquad M\in\mathbb{C}, \quad \rho>0.
$$
 In this case,  ${\bf q}=M\left[\begin{array}{c}
\chi_{[0,\rho]}(x) \\
-\chi_{[0,\rho]}(x) 
\end{array}\right]$ is non-cyclic and it is orthogonal to the same subspace $\mathfrak{H}_0=\psi(\mathcal{B})L_2(\mathbb{R}_+, \mathbb{C}^2)$ 
as above. Further,
$$
c(\mu, q_1)=e^{-i\mu\rho}\left(1-\kappa_{\mu}\frac{M}{\mu^2}\right),\qquad c(\mu, q_2)=e^{-i\mu\rho}\left(1+\kappa_{\mu}\frac{M}{\mu^2}\right)
$$
and  $W(z^2)=-2iz+\frac{|M|^2}{iz^3}\big[(e^{-iz\rho}-2)^2-2iz\rho-1 \big].$ Then \eqref{ups4} takes the form:  
$$
S(z)=e^{-2iz\rho}\left(\sigma_0 -\frac{2zi}{a-W(z^2)}
\left[ \begin{array}{cc}
1-\kappa_z\frac{2\textrm{Re}M}{z^2}+\kappa_z^2\frac{|M|^2}{z^4} &  1-\kappa_z\frac{2\textrm{Im}M}{z^2}-\kappa_z^2\frac{|M|^2}{z^4}  \\
1+\kappa_z\frac{2\textrm{Im}M}{z^2}-\kappa_z^2\frac{|M|^2}{z^4}  &  1+\kappa_z\frac{2\textrm{Re}M}{z^2}+\kappa_z^2\frac{|M|^2}{z^4} 
\end{array} \right]\right).
$$
It is easy to see that the entries of the last matrix can not vanish simultaneously. This means 
that $z\in\mathbb{C}_-$ is a pole of $S(z)$ if and only if $a=W(z^2)$. Therefore, in contrast to Sec. \ref{sec.5.1.1},
the poles of $S(z)$ completely determine the point spectrum of ${\bf H}_{a{\bf q}}$ in $\mathbb{C}\setminus\mathbb{R}_+$. 

 \subsubsection{Functions $q$ with infinite support.}
The range of applicability of our results is not limited to operators ${\bf H}_{a{\bf q}}$, where 
${\bf q}=Yq$ has finite support.  Due to Lemma \ref{zzzz} and Theorem \ref{yyyy}, the $S$-matrix
\eqref{ups4} can be constructed for an operator ${\bf H}_{a{\bf q}}$  when  ${\bf q}$ is non-cyclic with respect to
the backward shift operator $T^*$ in $L_2(\mathbb{R}_+, \mathbb{C}^2)$.
Various examples of non-cyclic functions
can be found in \cite{DSS, MFAT1}. Consider, for instance, the function
$q(x)=P_{m}(x)e^{-|x|}$, where  $P_{m}$ is a polynomial of order $m$. 
Then  
$$
{\bf q}=\left[\begin{array}{c}
P_{m}(x) \\
P_{m}(-x) 
\end{array}\right]e^{-x}, \qquad x\geq{0}.
$$
 Decompose the functions $P_{m}(\pm{x})e^{-x}\in{L_2(\mathbb{R}_+)}$: 
\begin{equation}\label{FFFF}
e^{-x}P_{m}(x)=\sum_{n=0}^{m}c_n{q}_{n}(2x), \qquad  e^{-x}P_{m}(-x)=\sum_{n=0}^{m}d_n{q}_{n}(2x),
\end{equation}
 with respect to the orthonormal basis  of the Laguerre functions 
 $$
 {q}_n(x)=\frac{e^{x/2}}{n!}\frac{d^n}{dx^n}(x^ne^{-x}), \qquad n=0,1\ldots
$$ 
 Using the  relation  ${T}{q}_n(2x)={q}_{n+1}(2x)$ 
 \cite[p. 363]{AG}, where $T$ is defined by \eqref{KKKK1}
 and taking \eqref{FFFF} into account  we arrive at the conclusion that ${\bf q}$ is orthogonal to the subspace 
 ${T}^{m+1}L_2({\mathbb R}_+)=\psi(\mathcal{B})L_2({\mathbb R}_+)$, where
 $\psi(\delta)=\left(\frac{\delta-i}{\delta+i}\right)^{m+1}$ belongs to ${H^\infty}(\mathbb{C}_+)$.
Hence, ${\bf q}$ is a non-cyclic function and for operators ${\bf H}_{a{\bf q}}$ there exist 
$S$-matrices defined by \eqref{ups4}. 

Let us calculate the $S$-matrix for the function
$q(x)=Me^{-|x|}.$ 
In this case, one can set $m=0$, $\psi(\delta)=\frac{\delta-i}{\delta+i}$, and $\Psi_1(z)=\Psi_2(z)=\left(\frac{z+i}{z-i}\right)^{2}$ 
as the holomorphic continuation of $\frac{\psi(-\delta)}{\psi(\delta)}=\left(\frac{\delta+i}{\delta-i}\right)^{2}$ into $\mathbb C_-$.
 Further,  
  $$
({\bf H}_\infty-z^2I)^{-1}{\bf e}^{-x}=\frac{{\bf e}^{-iz{x}}-{\bf e}^{-x}}{1+z^2}, \quad  W(z^2)=-2iz-\frac{4Re \ M}{1+iz}+\frac{|M|^2}{(1+iz)^2}.
$$
It follows from \eqref{AGH44b}  and the Poisson formula \cite[p.147]{Nik} that
$$
c(\mu, q_i)=\frac{\mu+i}{\mu-i}-\frac{M}{(\mu-i)^2}=\frac{\mu^2+1-M}{(\mu-i)^2}.
$$
After substitution of the expressions above into \eqref{ups4} and elementary transformations we find 
$$
S(z)=\left(\frac{z+i}{z-i}\right)^{2}\left(\sigma_0 -\frac{2iz(1-\frac{M}{z^2+1})(1-\frac{\overline{M}}{z^2+1})}{a-W(z^2)}
\left[ \begin{array}{cc}
1 &  1 \\
1 &  1
\end{array} \right]\right).
$$
Let us assume for the simplicity that $M\in{i}\mathbb{R}$. Then
\begin{equation}\label{hhh11}
S(z)=\left(\frac{z+i}{z-i}\right)^{2}\left(\sigma_0 -\frac{2iz(1+\frac{|M|^2}{(z^2+1)^2})}{a-W(z^2)}
\left[ \begin{array}{cc}
1 &  1 \\
1 &  1
\end{array} \right]\right)
\end{equation}
and $W(\lambda)=-2i\sqrt{\lambda}+\frac{|M|^2}{(1+i\sqrt{\lambda})^2}$, where $\lambda=z^2$ and $\sqrt{\lambda}=z$.

Since the first derivative of $W(\lambda)$ is
$$
W'(\lambda)=-\frac{i}{\sqrt{\lambda}}\left(1+\frac{|M|^2}{(1+i\sqrt{\lambda})^3}\right),
$$
 the equation $W'(\lambda)=0$ have the following roots $\lambda_j=z^2_j$, $j\in\{1, 2, 3\}$, where
$$
z_1=-\frac{\sqrt 3}{2}|M|^\frac{2}{3}+i(1-\frac{1}{2}|M|^\frac{2}{3}), \quad
z_2=-\overline{z_1}, \quad z_3=i(|M|^\frac{2}{3}+1).
$$

Assume that $|M|^2>8$. Then $z_1, \, z_2\in \mathbb C_- $.  Denote $a=W(z^2_1)$. Then the $S$-matrix \eqref{hhh11}
has a non-simple pole for $z=z_1$ and, by Lemma \ref{nnn3}, the operator ${\bf H}_{a{\bf q}}$ has an exceptional point $z_1^2$.
(The choice of $z_2=-\overline{z}_1$ instead of $z_1$ leads to the conclusion that the point $\overline{z}_1^2$ is exceptional for the adjoint operator
${\bf H}_{a{\bf q}}^*={\bf H}_{\overline{a}{\bf q}}$.)

The obtained result shows that the existence of exceptional points for some operators of the set 
$\{{\bf H}_{a{\bf q}}\}_{a\in\mathbb{C}}$, where ${\bf q}(x)=M{\bf e}^{-x}$, $M\in{i\mathbb{R}}$ depends on the absolute value of the imaginary $M$.
If $|M|^2>8$, then there exist two operators ${\bf H}_{a{\bf q}}$  and ${\bf H}_{\overline{a}{\bf q}}$ with the exceptional points
$z_1^2$ and $\overline{z}_1^2$ , respectively. On the other hand, if $|M|$ is sufficiently small  ($|M|^2\leq{8}$), then the collection
of operators $\{{\bf H}_{a{\bf q}}\}_{a\in\mathbb{C}}$ has no exceptional points.

\bigskip

\noindent\textbf{Acknowledgements}\\
\textit{This research  was partially supported by the Faculty of Applied Mathematics AGH UST statutory tasks within subsidy of 
Ministry of Science and Higher Education.}

\bigskip

\noindent Anna G\l\'{o}wczyk (corresponding author)\\  
glowczyk@agh.edu.pl\\

\noindent {\small
\noindent AGH University of Science and Technology\\
Faculty of Applied Mathematics AGH\\
Al. Mickiewcza 30, 30-059 Kraków
}\bigskip

\noindent Sergiusz Ku\.{z}el\\
kuzhel@agh.edu.pl\\

\noindent {\small
\noindent AGH University of Science and Technology\\
Faculty of Applied Mathematics AGH\\
Al. Mickiewicza 30, 30-059 Kraków 
}

\end{document}